\title{Approximate Turing Kernelization for Problems Parameterized by Treewidth} %TODO Please add
\author{Eva-Maria C. Hols}{Department of Computer Science, Humboldt-Universit{\"a}t zu Berlin, Germany}{eva-maria.hols@fkie.fraunhofer.de}{https://orcid.org/0000-0002-2832-0722}{}%{Supported by DFG Emmy Noether-grant (KR 4286/1)}
\author{Stefan Kratsch}{Department of Computer Science, Humboldt-Universit{\"a}t zu Berlin, Germany}{kratsch@informatik.hu-berlin.de}{https://orcid.org/0000-0002-0193-7239}{}
\author{Astrid Pieterse}{Department of Computer Science, Humboldt-Universit{\"a}t zu Berlin, Germany}{astrid.pieterse@informatik.hu-berlin.de}{https://orcid.org/0000-0003-3721-6721}{}%{Supported by DFG Emmy Noether-grant (KR 4286/1)}
\authorrunning{E.\,C. Hols, S. Kratsch, and A. Pieterse}%TODO mandatory. First: Use abbreviated first/middle names. Second (only in severe cases): Use first author plus 'et al.'
\keywords{Approximation, Turing kernelization, Graph problems, Treewidth}%TODO mandatory; please add comma-separated list of keywords
\theoremstyle{plain}
\theoremstyle{definition}
\theoremstyle{remark}
\newcommand{\prob}[1]{\textsc{#1}}
\newcommand{\friendly}{friendly\xspace}
\newcommand{\Oh}{\mathcal{O}}
\newcommand{\VC}{\prob{Vertex Cover}\xspace}
\newcommand{\classP}{\ensuremath{\mathsf{P}}\xspace}
\newcommand{\NP}{\ensuremath{\mathsf{NP}}\xspace}
\newcommand{\containment}{\ensuremath{\mathsf{NP\subseteq coNP/poly}}\xspace}
\newcommand{\OPTp}{\ensuremath{\mathrm{OPT}_{\P}}\xspace}
\newcommand{\OPTq}{\ensuremath{\mathrm{OPT}_{\Q}}\xspace}
\newcommand{\modOPTp}{\ensuremath{\mathrm{OPT}_{\modP}}\xspace}
\newcommand{\OPTvc}{\ensuremath{\mathrm{OPT}_{\text{VC}}}\xspace}
\newcommand{\OPTcvc}{\ensuremath{\mathrm{OPT}_{\text{CVC}}}\xspace}
\newcommand{\OPTetp}{\ensuremath{\mathrm{OPT}_{\text{ETP}}}\xspace}
\newcommand{\etp}{\ensuremath{\text{ETP}}\xspace}
\newcommand{\modetp}{\ensuremath{\text{ETP}^{\bot}}\xspace}
\newcommand{\modOPTetp}{\ensuremath{\mathrm{OPT}_{\text{ETP}^{\bot}}}\xspace}
\newcommand{\modcvc}{\ensuremath{\text{CVC}^{\bot}}\xspace}
\newcommand{\modOPTcvc}{\ensuremath{\mathrm{OPT}_{\text{CVC}^{\bot}}}\xspace}
\renewcommand{\P}{\ensuremath{\mathcal{P}}\xspace}
\newcommand{\modP}{\ensuremath{\mathcal{P}^{\bot}}\xspace}
\newcommand{\Q}{\ensuremath{\mathcal{Q}}\xspace}
\newcommand{\opt}{\ensuremath{\text{OPT}}}
\newcommand{\optis}{\ensuremath{\text{OPT}_{\text{IS}}}}
\newcommand{\optvc}{\ensuremath{\text{OPT}_{\text{VC}}}}
\newcommand{\optecc}{\ensuremath{\text{OPT}_{\text{ECC}}}}
\newcommand{\todo}[1]{{\color{red}[\textbf{TODO:} #1]}}
\newcommand{\degree}{d}
\newcommand{\defparproblem}[4]{\par
 \vspace{3mm}
\noindent\fbox{
 \begin{minipage}{0.96\textwidth}
 \begin{tabular*}{\textwidth}{@{\extracolsep{\fill}}lr} #1 & {\bf{Parameter:}} #3 \vspace{1mm} \\ \end{tabular*}
 {\textbf{Input:}} #2
	\vspace{1mm}\\%
 {\textbf{Output:}} #4
 \end{minipage}
 }
 \vspace{3mm}
\par
}
\begin{document}

\maketitle

%TODO mandatory: add short abstract of the document
\begin{abstract}
We extend the notion of lossy kernelization, introduced by  Lokshtanov et al. [STOC 2017], to approximate Turing kernelization. An $\alpha$-approximate Turing kernel for a parameterized optimization problem is a polynomial-time algorithm that, when given access to an oracle that outputs $c$-approximate solutions in $\Oh(1)$ time, obtains an $\alpha \cdot c$-approximate solution to the considered problem, using calls to the oracle of size at most $f(k)$ for some function $f$ that only depends on the parameter. 

Using this definition, we show that \textsc{Independent Set} parameterized by treewidth $\ell$ has a $(1+\varepsilon)$-approximate Turing kernel with $\Oh(\frac{\ell^2}{\varepsilon})$ vertices, answering an open question posed by Lokshtanov et al. [STOC 2017]. Furthermore, we give $(1+\varepsilon)$-approximate Turing kernels for the following graph problems parameterized by treewidth: \textsc{Vertex Cover}, \textsc{Edge Clique Cover}, \textsc{Edge-Disjoint Triangle Packing} and \textsc{Connected Vertex Cover}.

We generalize the result for \textsc{Independent Set} and \textsc{Vertex Cover}, by showing that all graph problems that we will call \emph{friendly} admit $(1+\varepsilon)$-approximate Turing kernels of polynomial size when parameterized by treewidth. We use this to obtain approximate Turing kernels for \textsc{Vertex-Disjoint $H$-packing} for connected graphs $H$, \textsc{Clique Cover}, \textsc{Feedback Vertex Set} and \textsc{Edge Dominating Set}.
\end{abstract}

%%%%%%%%%% ABSTRACT in plain text %%%%%%%%%%%%

%% We extend the notion of lossy kernelization, introduced by  Lokshtanov et al. [STOC 2017], to approximate Turing kernelization. An alpha-approximate Turing kernel for a parameterized optimization problem is a polynomial-time algorithm that, when given access to an oracle that outputs c-approximate solutions in O(1) time, obtains an (alpha * c)-approximate solution to the considered problem, using calls to the oracle of size at most f(k) for some function f that only depends on the parameter. 

%% Using this definition, we show that Independent Set parameterized by treewidth l has a (1+epsilon)-approximate Turing kernel with O(l^2/epsilon) vertices, answering an open question posed by Lokshtanov et al. [STOC 2017]. Furthermore, we give (1+epsilon)-approximate Turing kernels for the following graph problems parameterized by treewidth: Vertex Cover, Edge Clique Cover, Edge-Disjoint Triangle Packing and Connected Vertex Cover.

%% We generalize the result for Independent Set and Vertex Cover, by showing that all graph problems that we will call "friendly" admit (1+epsilon)-approximate Turing kernels of polynomial size when parameterized by treewidth. We use this to obtain approximate Turing kernels for Vertex-Disjoint H-packing for connected graphs H, Clique Cover, Feedback Vertex Set and Edge Dominating Set.

%%%%%%%% %%%%%%%%%%%%%%%%%%%%%%%% %%%%%%%%%%%

\newpage
\section{Introduction}

%optional: start with the question that we address and then give the below text to build up to it

Many important computational problems are \NP-hard and, thus, they do not have efficient algorithms unless $\classP=\NP$. At the same time, it is well known that \emph{efficient preprocessing} can greatly speed up (exponential-time) algorithms for solving \NP-hard problems. The notion of a \emph{kernelization} from parameterized complexity has allowed a rigorous and systematic study of this important paradigm. The central idea is to relate the effectiveness of preprocessing to the structure of the input instances, as quantified by suitable parameters. 
% This bypasses the observation that no efficient algorithm can simply all (sufficiently large) inputs of any \NP-hard problem unless $\classP=\NP$.

A \emph{parameterized problem} consists of any (classical) problem together with a choice of one or more parameters; we use $(x,k)$ to denote an instance with input data $x$ and parameter  $k$. A \emph{kernelization} is an efficient algorithm that on input of $(x,k)$ returns an equivalent instance $(x',k')$ of size upper bounded by $f(k)$, where $f$ is a computable function. For a polynomial kernelization we require that the size bound $f(k)$ is polynomially bounded in $k$. The study of which parameterized problems admit (polynomial) kernelizations has turned into a very active research area within parameterized complexity (see, e.g.,
 \cite{DBLP:conf/soda/AgrawalM0Z19, BodlaenderFLPST16_metakernelization,  DBLP:conf/esa/00010SY18, DBLP:conf/wads/ChaplickFGK019, DBLP:conf/focs/FominLMS12, DBLP:conf/stacs/HolsK19,   DBLP:conf/esa/JansenP18,  DBLP:conf/stacs/JansenPL19, DBLP:conf/focs/KratschW12, DBLP:conf/stacs/2019,  DBLP:conf/sofsem/WitteveenBT19}
and the recent book~\cite{fomin_lokshtanov_saurabh_zehavi_2019}). An important catalyst for this development lies in the ability to prove lower bounds for kernelizations, e.g., to conditionally rule out polynomial kernels for a problem, which was initiated through work of Bodlaender et al.~\cite{BodlaenderDFH09} and Fortnow and Santhanam~\cite{FortnowS11}.

%Do not ask me why merging these two paragraphs does not help, but it just doesn't. The mysteries of LaTeX :)
Unfortunately, the lower bound tools have also revealed that many fundamental parameterized problems do not admit polynomial kernelizations (unless \containment and the polynomial hierarchy collapses). These include a variety of problems like \textsc{Connected Vertex Cover}~\cite{DomLS14}, \textsc{Disjoint Cycle Packing}~\cite{DBLP:journals/tcs/BodlaenderTY11}, \textsc{Multicut}~\cite{CyganKPPW14}, and \textsc{$k$-Path}~\cite{BodlaenderDFH09} parameterized by solution size, but also essentially any \NP-hard problem parameterized by width parameters such as treewidth. This has motivated the study of relaxed forms of kernelization, notably Turing kernelization~\cite{Binkele-RaibleFFLSV12} and lossy (or approximate) kernelization~\cite{DBLP:conf/stoc/LokshtanovPRS17}.

Given an input $(x,k)$, a \emph{Turing kernelization} may create $|x|^{\Oh(1)}$ many instances of size at most $f(k)$ each, and the answer for $(x,k)$ may depend on solutions for all those instances. This is best formalized as an efficient algorithm that solves $(x,k)$ while being allowed to ask questions of size at most $f(k)$ to an oracle. A priori, this is much more powerful than regular kernelization, which creates only a single output instance. Nevertheless, there are only few polynomial Turing kernelizations known for problems without (regular) polynomial kernelization (e.g.,~\cite{Binkele-RaibleFFLSV12,DBLP:conf/soda/JansenM15,DBLP:journals/jcss/Jansen17,DBLP:journals/algorithmica/ThomasseTV17}). Moreover, a hardness-based approach of Hermelin et al.~\cite{HermelinKSWW15} gives evidence that many problems are unlikely to admit polynomial Turing kernels.

More recently, Lokshtanov et al.~\cite{DBLP:conf/stoc/LokshtanovPRS17} proposed a framework dedicated to the study of \emph{lossy kernelization}. This relaxes the task of the kernelization by no longer requiring that an optimal solution to the output $(x',k')$ yields an optimal solution for $(x,k)$. Instead, for an $\alpha$-approximate kernelization any $c$-approximate solution to $(x',k')$ can be lifted to an $\alpha\cdot c$-approximate solution for $(x,k)$. Amongst others, they show that \textsc{Connected Vertex Cover} and \textsc{Disjoint Cycle Packing} admit approximate kernelizations. In contrast, they were able to show, e.g., that \textsc{$k$-Path} has no $\alpha$-approximate kernelization for any $\alpha\geq 1$ (unless \containment). Subsequent works have shown approximate kernelizations for other problems~\cite{DBLP:conf/mfcs/EibenHR17,DBLP:journals/siamdm/EibenKMPS19,DBLP:conf/esa/Ramanujan19}, in particular, further problems with connectivity constraints, which are often an obstruction for the existence of polynomial kernelizations.  

Lokshtanov et al.~\cite{DBLP:conf/stoc/LokshtanovPRS17} ask whether \textsc{Independent Set} parameterized by treewidth admits a polynomial-size approximate Turing kernelization with constant approximation ratio. In the present work, we answer this question affirmatively and in fact provide an efficient polynomial size approximate Turing kernelization scheme (EPSATKS). 
Moreover, extending the ideas for \textsc{Independent Set}, we provide similar results for a variety of other problems.

\subparagraph{Our results}%TODO: unifinished but may be going somewhere
We prove that there is an EPSATKS for a wide variety of graph problems when parameterized by treewidth. The simplest problems we consider are the \textsc{Vertex Cover} and \textsc{Independent Set} problem. Observe that both problems parameterized by treewidth can be shown to be $\mathsf{MK}[2]$-hard, by a simple reduction from \textsc{CNF-Sat} with unbounded clause size.\footnote{A variant of the well-known \textsc{NP}-hardness proof of \textsc{Independent Set} (or \textsc{Vertex Cover}) suffices, where we add two vertices $v_x$ and $v_{\bar{x}}$ for each variable $x$ and connect them. Add a clique for each clause, that has a vertex $u_\ell$ for each literal $\ell$ in the clause. Connect $u_\ell$ to $v_x$ if $\ell = \neg x$, connect $u_\ell$ to $v_{\bar{x}}$ if $\ell = x$. Observe that the treewidth is bounded by twice the number of variables.} As such, for both problems we indeed do not expect polynomial Turing kernels~\cite{HermelinKSWW15}. We show that \textsc{Vertex Cover} has a $(1+\varepsilon)$-approximate Turing kernel with $\Oh(\frac{\ell}{\varepsilon})$ vertices, and \textsc{Independent Set} has a kernel with $\Oh(\frac{\ell^2}{\varepsilon})$ vertices. 

Both approximate Turing kernels follow a similar strategy, based on using separators (originating from the tree decomposition) that separate a  piece from the rest of the graph, such that the solution size in this piece is appropriately bounded.  
For this reason, we formulate a set of conditions on a graph problem and we call graph problems that satisfy these conditions \emph{\friendly}.  We then show that all \friendly graph optimization problems have polynomial-size $(1+\varepsilon)$-approximate Turing kernels for all $\varepsilon > 0$, when parameterized by treewidth. Precise bounds on the size of the obtained approximate Turing kernels  depend on properties of the considered problem, such as the smallest-known (approximate) kernel when parameterized by solution size plus treewidth. In particular, applying the general result for \VC indeed shows that it has an EPSATKS of size $\Oh(\frac{\ell}{\varepsilon})$. Using this general technique, we  obtain approximate Turing kernels for \textsc{Clique Cover}, \textsc{Vertex-Disjoint $H$-Packing} for connected graphs $H$, \textsc{Feedback Vertex Set}, and \textsc{Edge Dominating Set}.

Finally, we prove that  \textsc{Edge Clique Cover} and \textsc{Edge-Disjoint Triangle Packing} have an EPSATKS and show that \textsc{Connected Vertex Cover} has a polynomial-size $(1+\varepsilon)$-approximate Turing kernel. These problems do not satisfy our definition of a \friendly problem and require a more problem-specific approach. In particular, for \textsc{Connected Vertex Cover} we will need to consider subconnected tree decompositions~\cite{DBLP:conf/latin/FraigniaudN06} and carefully bound the size difference between locally optimal connected vertex covers, and intersections of (global) connected vertex covers with parts of the graph.

\subparagraph{Overview} 
We start in Section~\ref{sec:specific_problems} by illustrating the general technique using the \textsc{Vertex Cover} problem as an example. We continue by giving the approximate Turing kernels for \textsc{Edge Clique Cover}, \textsc{Connected Vertex Cover}, and \textsc{Edge-Disjoint Triangle Packing}. In Section~\ref{sec:general} we state and prove our general theorem and then show that it allows us to give approximate Turing kernels for a number of different graph problems. 

Finally, in Appendix~\ref{sec:existing-kernels} we show that a number of existing kernels are in fact $1$-approximate kernels, which is needed for some of our proofs. While this is perhaps an expected result, we believe it to be useful for future reference.

\section{Preliminaries}
\label{sec:prelims}  %something 
We use $\mathbb{N}$ to denote the non-negative integers. Let $[n]$ be defined as the set containing the integers $1$ to $n$. 
%\subparagraph{Graphs}
We assume that all graphs are simple and undirected, unless mentioned otherwise. A graph $G$ has vertex set $V(G)$ and edge set $E(G)$. For $v \in V(G)$ we let $d_G(v)$ denote the degree of $v$. For $X\subseteq V(G)$, we use $G[X]$ to denote the graph induced by vertex set $X$, we use $G-X$ to denote $G[V(G)\setminus X]$. For $F \subseteq E(G)$, we use $G \setminus F$ to denote the graph resulting from deleting all edges in $F$ from $G$.

We say that a set $X\subseteq V(G)$ \emph{separates} vertex sets $A\subseteq V(G)$ and $B\subseteq V(G)$ if every path from some vertex in  $A$ to some vertex in $B$ contains a vertex in $X$.

\subparagraph{Treewidth}
We use the standard definition of treewidth:
\begin{definition}[\cite{DBLP:books/sp/CyganFKLMPPS15}]
 A \emph{tree decomposition} of a graph $G$ is a tuple $\mathcal{T} = (T, \{X_t\}_{t \in V(T)})$, where $T$ is a tree in which each node $t \in V(T)$ has an assigned set of vertices $X_t \subseteq V(G)$, also referred to as the \emph{bag} of node $t$, such that the following three conditions hold:
 \begin{itemize}
  \item $\bigcup_{t \in V(T)} X_t = V(G)$, and
  \item for every edge $\{u,v\} \in E(G)$ there exists $t \in V(T)$ such that $u,v \in X_t$, and
  \item for all $v \in V(G)$ the set $T_v := \{t \in V(T) \mid v \in X_t\}$ induces a connected subtree of $T$.
 \end{itemize}
The \emph{width} of a tree decomposition of $G$ is the size of its largest bag minus one. The \emph{treewidth} of $G$ is the minimum width of any tree decomposition of $G$.
\end{definition}

In the remainder of the paper, we will always assume that a tree decomposition~\cite{DBLP:books/sp/CyganFKLMPPS15} is given on input, as treewidth is \NP-hard to compute. If it is not, we may use the result below to obtain an approximation of the treewidth and a corresponding tree decomposition in polynomial time. Doing so will weaken any given size bounds in the paper, as it is not a constant-factor approximation. The theorem below is part of \cite[Theorem 6.4]{DBLP:journals/siamcomp/FeigeHL08}.
\begin{theorem}[{\cite[Theorem 6.4]{DBLP:journals/siamcomp/FeigeHL08}}]
 \label{thm:apprx-tw}
 There exists a polynomial time algorithm that finds a tree decomposition of width at most $\Oh(\sqrt{\log{tw(G)}}\cdot tw(G))$ for a general graph $G$.
\end{theorem}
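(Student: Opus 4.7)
The plan is to reduce the task to computing approximately balanced vertex separators and then apply recursive partitioning to build the tree decomposition. The reduction rests on the standard centroid observation: a graph of treewidth $k$ admits, for any vertex weighting $w \colon V(G) \to \mathbb{R}_{\ge 0}$, a \emph{balanced} separator $S$ (meaning every connected component of $G-S$ has $w$-weight at most $w(V(G))/2$) of size at most $k+1$. One proves this by taking an optimal tree decomposition $(T, \{X_t\}_{t \in V(T)})$, orienting each edge of $T$ toward the side whose induced vertex set carries more $w$-weight, and picking $S := X_t$ at the resulting sink.

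Next I would invoke, as a black box, the main technical contribution of~\cite{DBLP:journals/siamcomp/FeigeHL08}: a polynomial-time algorithm that, given $(G, w)$ and the promise that $G$ has a $\tfrac{1}{2}$-balanced vertex separator of size $k$, returns a $\tfrac{2}{3}$-balanced vertex separator of size $\Oh(k \sqrt{\log k})$. This is the vertex analogue of Arora--Rao--Vazirani sparsest-cut rounding: one formulates the balanced-separator problem as an SDP with $\ell_2^2$-triangle inequality constraints, and rounds a feasible solution via a chaining argument on the resulting metric.

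I would then assemble a tree decomposition by recursive partitioning. On input $(G, \mathbf{1})$, run the approximate separator subroutine, take the output $S$ as the root bag, and for each connected component $C$ of $G - S$ recurse on $G[C \cup (N(C) \cap S)]$ with weighting $\mathbf{1}_C$, attaching the returned sub-decomposition to the root. The bag-intersection property holds because each vertex of $S$ appears only in sub-decompositions corresponding to components it is adjacent to, and balancedness forces recursion depth $\Oh(\log n)$, giving polynomial running time. Every bag is produced by one invocation of the subroutine on a subgraph whose treewidth is at most $\tw(G)$, so it has size $\Oh(\tw(G) \sqrt{\log \tw(G)})$ independently of the recursion depth, yielding the claimed bound.

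The main obstacle is the rounding analysis inside the separator subroutine: showing that one can extract, from a feasible SDP solution, a vertex set of the claimed size whose removal induces a $\tfrac{2}{3}$-balanced partition. This inherits the full technical weight of the Arora--Rao--Vazirani framework and must additionally be adapted to the vertex (rather than edge) setting, where no clean flow-cut duality is available and the separator size must be charged directly against SDP variables. Everything else---the centroid lemma and the recursive decomposition---is comparatively standard, so the proof effectively reduces to quoting the vertex-sparsest-cut rounding of~\cite{DBLP:journals/siamcomp/FeigeHL08} and plugging it into the outlined recursion.
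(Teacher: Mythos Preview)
The paper does not prove this theorem at all: it is quoted verbatim from \cite{DBLP:journals/siamcomp/FeigeHL08} and used as a black box, with no accompanying proof or sketch. So there is nothing in the paper to compare your argument against.

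That said, your sketch of the underlying Feige--Hajiaghayi--Lee argument has a genuine gap in the recursive-partitioning step. You assert that ``every bag is produced by one invocation of the subroutine \dots\ so it has size $\Oh(\tw(G)\sqrt{\log \tw(G)})$ independently of the recursion depth,'' but this is not what your construction yields. When you attach the sub-decomposition for $G[C \cup (N(C)\cap S)]$ below the root bag $S$, the boundary set $W := N(C)\cap S$ must appear in the top bag of that sub-decomposition (otherwise the connected-subtree property fails for vertices of $W$). Hence the bag at the next level is $S' \cup W$, not just $S'$. With your weighting $\mathbf{1}_C$, nothing prevents $W$ from accumulating across levels: after $d$ levels the inherited boundary can have size $\Theta(d)\cdot \Oh(\tw(G)\sqrt{\log \tw(G)})$, and with $d=\Oh(\log n)$ you only get width $\Oh(\tw(G)\sqrt{\log \tw(G)}\cdot \log n)$.

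The standard fix---and the one actually used in \cite{DBLP:journals/siamcomp/FeigeHL08} and in the Robertson--Seymour/Reed tradition---is to weight by the \emph{inherited boundary} $W$ rather than by $\mathbf{1}_C$. The separator $S'$ then splits $W$ so that each piece carries at most $\tfrac{2}{3}|W|$ boundary vertices; the new boundary has size at most $|S'| + \tfrac{2}{3}|W|$, and a short induction shows $|W|$ stays bounded by $3\alpha$ where $\alpha = \Oh(\tw(G)\sqrt{\log \tw(G)})$ is the separator-size guarantee. This is the missing idea that keeps the width independent of the recursion depth.
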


 Let $\mathcal{T} = (T,\{X_t\}_{t\in V(T)})$ be a tree decomposition. Let $t \in V(T)$, we use $V_t$ to denote the set of vertices from $G$ that are contained in some bag of a node in the subtree of $T$ that is rooted at $t$. %We use $\tw(G)$ to denote the treewidth of $G$. 
 It is well-known that for all $t \in V(T)$, the set $X_t$ separates $V_t$ from the rest of the graph. 
A rooted tree decomposition with root $r$ is said to be \emph{nice} if it satisfies the following properties (cf. \cite{DBLP:books/sp/CyganFKLMPPS15}).
\begin{enumerate}[label=(\roman*)]
 \item $X_r = \emptyset$ and $X_t = \emptyset$ for every leaf $t$ of $T$.
 \item Every other node is of one of the following three types:
 \begin{itemize}
 \item \label{tw:join} The node $t \in V(T)$ has exactly two children $t_1$ and $t_2$, and $X_t = X_{t_1} = X_{t_2}$. We call such a node a \emph{join} node, or
 \item \label{tw:introduce} the node $t \in V(T)$ has exactly one child $t_1$, and there exist $v \in V(G)$ such that $X_t = X_{t_1} \cup \{v\}$ (in this case $t$ is an \emph{introduce} node) or such that $X_{t_1} = X_t \cup \{v\}$ (in which case $t$ is a \emph{forget} node).
 \end{itemize}
\end{enumerate}
One can show that a tree decomposition of a graph $G$ of width $\ell$ can be transformed in polynomial time into
a nice tree decomposition of the same width and with $\Oh(\ell|V (G)|)$ nodes, see for example~\cite{DBLP:books/sp/CyganFKLMPPS15}.

To deal with the \textsc{Connected Vertex Cover} problem we need the tree decomposition to preserve certain connectivity properties. Let a  \emph{subconnected tree decomposition}~\cite{DBLP:conf/latin/FraigniaudN06} be a tree decomposition where $G[V_t]$ is connected for all $t \in V(T)$.  We observe the following.%\footnote{Note that it is not obvious how to transform any given tree decomposition into one that is both subconnected and nice. }

\begin{theorem}[{ cf. \cite[Theorem 1]{DBLP:conf/latin/FraigniaudN06}}]\label{lem:connected-td}
 There is an $\Oh(n\ell^3)$-algorithm that, given a nice tree decomposition on $n$ nodes of width $\ell$ of a connected graph $G$, returns an $\Oh(n\cdot \ell)$-node subconnected tree decomposition of $G$, of width at most $\ell$ such that each node in $T$ has at most $2\ell + 2$ children.
\end{theorem}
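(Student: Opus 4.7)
The plan is to build the subconnected tree decomposition by splitting each node $t$ of the given nice tree decomposition into at most $\ell+1$ copies, one per connected component of $G[V_t]$. First, I would root the decomposition and process it bottom-up, maintaining, via a union--find data structure, the partition $\pi_t$ of $X_t$ whose blocks correspond to the connected components of $G[V_t]$. Niceness makes this partition easy to update incrementally: at an introduce node adding $v$ I merge the components containing $v$'s neighbors in $X_{t_1}$ together with a new singleton for $v$; at a forget node I drop the forgotten vertex; at a join node I merge the two children's partitions of the shared bag via union--find on the common elements. Each update costs $\Oh(\ell^2)$ time.

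Because $G$ is connected, and $X_t$ separates $V_t$ from $V(G)\setminus V_t$, every connected component of $G[V_t]$ must contain at least one vertex of $X_t$, so $\pi_t$ has at most $|X_t|\le \ell+1$ blocks. I now construct $\mathcal{T}'$ by replacing each $t$ with copies $\{t_B : B\in\pi_t\}$, giving each $t_B$ the bag $X_t$ (so the width stays at most $\ell$). To link copies I use the parent--child relation of $T$: if $t$ has parent $p$, I make $t_B$ a child of $p_{B^*}$, where $B^*$ is the unique block of $\pi_p$ that contains $B\cap X_p$. (A block of $\pi_t$ whose component does not touch $X_p$ only arises at a forget node that drops the last vertex of its component; such copies correspond to maximal connected subgraphs entirely contained in $V_t$, and I discard them since they do not need to be represented above.)

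I would then verify the tree-decomposition axioms for $\mathcal{T}'$. Edge coverage is immediate because every bag is preserved verbatim. For vertex-subtree connectivity, I would argue that for any $v\in V(G)$ the set of copies containing $v$ is connected by tracing how $v$'s block evolves along each parent edge. Subconnectedness follows from the construction: by a straightforward bottom-up induction, the rooted-subtree vertex set $V_{t_B}$ in $\mathcal{T}'$ is exactly the single connected component of $G[V_t]$ selected by $B$, and hence $G[V_{t_B}]$ is connected. For the degree bound, a copy $p_{B^*}$ receives children only from the (at most two) children of $p$ in $T$, each of which is split into at most $\ell+1$ copies, giving at most $2\ell+2$ children. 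The node count is $\Oh(n\ell)$ and, adding up $\Oh(\ell^2)$ per original node for partition maintenance and linking, the overall running time is $\Oh(n\ell^3)$.

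The main obstacle I anticipate is the bookkeeping at join nodes, where two child partitions of the common bag must be combined and each resulting block of $\pi_t$ must inherit a consistent parent pointer from both sides. A careful inductive invariant on $\pi_t$ and on the vertex-subtrees in the partially built $\mathcal{T}'$ is needed to ensure the parent-pointer rule yields a well-defined tree in which every vertex's bag-set remains a connected subtree; once this is in place, the remaining claims are essentially bookkeeping.
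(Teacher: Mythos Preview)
Your overall scheme---splitting each node into one copy per component of $G[V_t]$ and using that every such component meets $X_t$ to bound the number of copies by $\ell+1$---matches the idea behind the \texttt{split} operation the paper invokes from Fraigniaud and Nisse; the paper itself only contributes the degree analysis, which is essentially your argument that a nice-tree node has at most two children, each split into at most $\ell+1$ pieces.

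There is, however, a genuine bug in your choice of bags. If every copy $t_B$ receives the \emph{full} bag $X_t$, then $X_t\subseteq V_{t_B}$, whereas the component of $G[V_t]$ selected by $B$ meets $X_t$ only in $B$; hence your claimed equality ``$V_{t_B}$ is exactly the selected component'' is false whenever $|\pi_t|\ge 2$, and $G[V_{t_B}]$ need not be connected. The same choice also breaks the vertex-subtree axiom: take a forget node $p$ with child $t$, $X_p=X_t\setminus\{v\}$, and suppose $\pi_t$ has two blocks $B_1\ni v$ and $B_2$. Both $t_{B_1}$ and $t_{B_2}$ contain $v$ (their bag is $X_t$), yet your parent rule sends them to distinct copies $p_{B_1\setminus\{v\}}$ and $p_{B_2}$ of $p$, neither of which contains $v$; so the nodes of $\mathcal{T}'$ containing $v$ are disconnected. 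The fix is to give $t_B$ the bag $B$ itself---edge coverage survives because adjacent vertices of $X_t$ always lie in the same block---and with that change your parent rule and induction do go through. A smaller issue: ``discarding'' copies $t_B$ with $B\cap X_p=\emptyset$ would in particular drop the unique copy of the root's child (where $X_r=\emptyset$) together with the whole tree beneath it; such copies must be retained and attached (e.g., under the single root copy).
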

\begin{proof}
 Without the additional bound on the degrees of nodes in $T$, the result is immediate from \cite[Theorem 1]{DBLP:conf/latin/FraigniaudN06}. We obtain a subconnected tree decomposition by only executing Phase 1 of Algorithm \verb$make-it-connected$ in \cite{DBLP:conf/latin/FraigniaudN06}. Is is shown in \cite[Claim 1]{DBLP:conf/latin/FraigniaudN06} that  this procedure results in a  tree decomposition of width $\ell$ that is subconnected. It remains to analyze the maximum node degree. The only relevant step of the algorithm is the application of the \verb$split$ operation on nodes $t$ from the original tree. Observe that every node in the original tree is visited at most once, and newly introduced nodes are never \verb$split$. If $t$ has parent $s$, the \verb$split$ operation only modifies the degree of $s$, and any newly introduced nodes. The newly introduced nodes will have degree at most $\degree_T(t)$. In particular, if $s$ had degree $a$ before the \verb$split$ operation on $t$, it will have degree $a - 1 + p$ after the \verb$split$ operation, where $p$ is the number of connected components of $G[V_t]$.
 
  We will show that the number of connected components of $G[V_t]$ is bounded by  $|X_t|$ if $G$ is a connected graph. We do this by showing that each connected component contains at least one vertex from $X_t$. Suppose not. Let $C$ be such a component. But since $C \cap X_t = \emptyset$, and $X_t$ is a separator in $G$, it follows that there are no connections from $C$ to $G[V(G)\setminus V_t]$. If $V_t = V(G)$, then $G[V_t]$ is connected and we are done, otherwise, vertices in $V(G)\setminus V_t$ are not connected to $C$ in $G$, contradicting that $G$ is connected. Thus, $p \leq |X_t| \leq \ell+1$. Since in a nice tree decomposition every node has only two children, in the worst case \verb$split$ is applied to both these children. Thus, every node in $T$ has degree at most $2\ell+2$.
\end{proof}

\subparagraph{Approximation, Kernelization, and Turing Kernelization}
 Before introducing suitable definitions for approximate Turing kernelization, let us
 recall the framework for approximate kernelization by Lokshtanov et al.\ \cite{DBLP:conf/stoc/LokshtanovPRS17} following Fomin et al.\ \cite{fomin_lokshtanov_saurabh_zehavi_2019}.

% We extend the framework for approximate kernelization by Fomin et al. \cite{fomin_lokshtanov_saurabh_zehavi_2019}, we repeat the necessary definitions here.

\begin{definition}[{\cite{fomin_lokshtanov_saurabh_zehavi_2019}}]
 A \emph{parameterized optimization problem} $\Q$ is a computable function 
 \[\Q \colon \Sigma^*\times \mathbb{N}\times \Sigma^* \to \mathbb{R}\cup \{\pm \infty\}.\]
 The \emph{instances} of a parameterized optimization problem are pairs $(I,k)$ where $k$ is the parameter. A \emph{solution} to $(I,k)$ is simply a string $s \in \Sigma^*$, such that $|s| \leq |I|+k$. The \emph{value} of a solution $s$ is given by $\Q(I,k,s)$. 
 Using this, we may define the optimal value for the problem as 
 \[\OPTq(I,k) = \min\{\Q(I,k,s)\mid s \in \Sigma^*, |s|\leq |I| + k\},\]
 for minimization problems and as
 \[\OPTq(I,k) = \max\{\Q(I,k,s)\mid s \in \Sigma^*, |s|\leq |I| + k\},\]
 for maximization problems.
\end{definition}
An \emph{optimization problem} $\P \colon \Sigma^*\times  \Sigma^* \to \mathbb{R}\cup \{\pm \infty\}$ is defined similarly, but without the parameter. In both cases we will say that $s$ is a \emph{solution} for instance $I$, if its value is not $\infty$ (or $-\infty$, in case of maximization problems). 

\begin{definition}
We say that an algorithm for a (regular) minimization problem \P is a \emph{$c$-approximation algorithm} if for all inputs $x$ it returns a solution $s$ such that the value of $s$ is at most $c \cdot \OPTp(x)$. Similarly, for a maximization problem we require that $s$ has value at least $\frac{1}{c} \OPTp(x)$.
\end{definition}

When a problem is parameterized by the value of the optimal solution, the definitions of parameterized optimization problems and lossy kernels will cause problems. %After all, the optimal solution value is exactly what we are asked to compute. 
As such, we use the following interpretation \cite[p.229]{DBLP:conf/stoc/LokshtanovPRS17}. Given an optimization problem $\P$ that we want to parameterize by a sum of (potentially multiple) parameters, one of which is the solution value, we define the following corresponding parameterized optimization problem:
\[\modP(I,k,s) := \min\{\P(I,s), k+1\}.\]
In cases where we consider \P parameterized by the treewidth of the input graph, we simply use $\modP(I,k,s) := \P(I,s)$.

\begin{definition}[$\alpha$-Approximate kernelization {\cite{fomin_lokshtanov_saurabh_zehavi_2019}}]\label{def:lossy-kernel}
 Let $\alpha \geq 1$ be a real number, let $g$ be a computable function and let \Q be a parameterized optimization problem. An \emph{$\alpha$-approximate kernelization} $\mathcal{A}$ of size $g$ for \Q is a pair of polynomial-time algorithms. The first one is called the \emph{reduction algorithm} and computes a map $\mathcal{R}_{\mathcal{A}}\colon \Sigma^* \times \mathbb{N} \to \Sigma^* \times \mathbb{N}$. Given as input an instance $(I,k)$ of \Q, the reduction algorithm computes another instance $(I',k') = \mathcal{R}_{\mathcal{A}}(I,k)$ such that $|I'|,k' \leq g(k)$. 
 
 The second is called the \emph{solution-lifting algorithm}. This algorithm takes as input an instance $(I,k)\in \Sigma^*\times \mathbb{N}$ of \Q, together with $(I',k') := \mathcal{R}_{\mathcal{A}}(I,k)$ and a solution $s'$ to $(I',k')$. In time polynomial in $|I|+|I'|+k+k'+|s|$, it outputs a solution $s$ to $(I,k)$ such that if \Q is a minimization problem, then 
 \[\frac{\Q(I,k,s)}{\OPTq(I,k)} \leq \alpha \cdot \frac{\Q(I',k',s')}{\OPTq(I',k')}.\] 
 For maximization problems we require 
  \[\frac{\Q(I,k,s)}{\OPTq(I,k)} \geq \frac{1}{\alpha} \cdot \frac{\Q(I',k',s')}{\OPTq(I',k')}.\]
\end{definition}

We say that a problem admits a \emph{Polynomial Size Approximate Kernelization Scheme (PSAKS) } \cite{DBLP:conf/stoc/LokshtanovPRS17} if it admits an $\alpha$-approximate polynomial kernel for all $\alpha > 1$. 

We recall the definition of a Turing kernel, so that we can show how to naturally generalize the notion of approximate kernelization to Turing kernels.

\begin{definition}[Turing kernelization {\cite{fomin_lokshtanov_saurabh_zehavi_2019}}]
Let \Q be a parameterized problem and let $f \colon \mathbb{N} \to \mathbb{N}$ be a computable function. A \emph{Turing kernelization} for \Q of size $f$ is an algorithm that decides whether a given instance $(x,k) \in \Sigma^* \times \mathbb{N}$ belongs to \Q in time polynomial in $|x| + k$, when given access to an oracle that decides membership of \Q for any instance $(x',k')$ with $|x'|,k'\leq f(k)$ in a single step.
\end{definition}

In the following definition, we combine the notions of lossy kernelization and Turing kernelization into one, as follows.
\begin{definition}[Approximate Turing kernelization]
Let $\alpha \geq 1$ be a real number, let $f$ be a computable function and let \Q be a parameterized optimization problem. An \emph{$\alpha$-approximate Turing kernel of size $f$} for $\Q$ is an algorithm that, when given access to an oracle that computes a $c$-approximate solution for instances of $\Q$ in a single step, satisfies the following.%FOR JOURNAL VERSION PUT SIZE BOUND HERE
\begin{itemize}
 \item It runs in time polynomial in $|I|+k$, and
 \item given instance $(I,k)$, outputs a solution $s$ such that $\Q(I,k,s) \leq \alpha\cdot c \cdot \OPTq(I,k)$ if $\Q$ is a minimization problem and $\Q(I,k,s) \cdot  \alpha\cdot c \geq \OPTq(I,k)$ is $\Q$ is a minimization problem, and
 \item it only uses oracle-queries of size bounded by $f(k)$.
\end{itemize}
\end{definition}
Note that, in the definition above, the algorithm does not depend on $c$, just like in lossy kernelization. We say that a parameterized optimization problem \Q has an \emph{EPSATKS}  when it has a polynomial-size $(1+\varepsilon)$-approximate Turing kernel for every $\varepsilon > 0$, of size $f(\varepsilon)\cdot \text{poly}(k)$ where $f$ is a function that depends only on $\varepsilon$.

\section{Approximate Turing kernels for specific problems}
\label{sec:specific_problems}
In this section we will give approximate Turing kernels for a number of graph problems parameterized by treewidth. We  start by discussing the \textsc{Vertex Cover} problem, since the approximate Turing kernels for all other problems will follow the same overall structure.

\subsection{Vertex Cover}
\label{subsec:VC}
In this section we discuss an approximate Turing kernel for \textsc{Vertex Cover} parameterized by treewidth $\ell$. The overall idea will be to use the treewidth decomposition of the graph, and find a subtree rooted at a node $t$ such that $G[V_t\setminus X_t]$  has a large (but not too large) vertex cover. A vertex cover of the entire graph will then be obtained by taking a vertex cover of $G[V_t\setminus X_t]$, adding all vertices in $X_t$, and recursing on the graph that remains after removing $V_t$. This produces a correct vertex cover because $X_t$ is a separator in the graph. Furthermore, taking all of $X_t$ into the vertex cover is not problematic as $X_t$ is ensured to be comparatively small. To obtain a vertex cover of $G[V_t\setminus X_t]$, we will use the following lemma.

\begin{lemma}
 \label{lem:approximate-vc}
 Let $G$ be a graph with $\optvc(G)\leq k$. Then there is a polynomial-time algorithm returning  vertex cover of $G$ of size at most $c \cdot \OPTvc(G)$, when given access to $c$-approximate oracle that solves vertex cover on graphs with at most~$\Oh(k)$ vertices. 
\end{lemma}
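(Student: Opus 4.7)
The plan is to reduce the problem to a single oracle call by first applying a classical \emph{approximation-preserving} kernelization for \textsc{Vertex Cover} parameterized by solution size. The natural candidate is the crown decomposition or LP-based (Nemhauser--Trotter) kernelization, which in polynomial time produces a set $S \subseteq V(G)$ of vertices that is contained in some optimal vertex cover of $G$, together with an induced subgraph $G' := G[V(G) \setminus (S \cup S_0)]$, where $S_0$ is a set of vertices disjoint from some optimal vertex cover. Two properties will be essential: first, that $|V(G')| \leq 2\cdot \OPTvc(G') \leq 2k$, so the kernel has $\Oh(k)$ vertices, and second, that $\OPTvc(G) = |S| + \OPTvc(G')$ with the additional guarantee that for \emph{any} vertex cover $C'$ of $G'$, the set $S \cup C'$ is a vertex cover of $G$. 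In other words, this kernel is a $1$-approximate kernel for \VC. This fact will be established in Appendix~\ref{sec:existing-kernels} for reuse throughout the paper.

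Given this kernel, the algorithm is straightforward: compute $S$ and $G'$, invoke the $c$-approximate oracle on the instance $G'$ (whose size is $\Oh(k)$, within the oracle's budget), obtain a vertex cover $C'$ of $G'$ with $|C'| \leq c \cdot \OPTvc(G')$, and return $S \cup C'$. Correctness (i.e., that $S \cup C'$ covers every edge) follows from the structural guarantees of the kernel, since every edge of $G$ either has an endpoint in $S$ or lies in $G'$. The approximation ratio follows by a short calculation:
\begin{align*}
 |S \cup C'| &= |S| + |C'| \leq |S| + c \cdot \OPTvc(G') \\
 &\leq c\bigl(|S| + \OPTvc(G')\bigr) = c \cdot \OPTvc(G),
\end{align*}
where the last inequality uses $c \geq 1$ and the final equality uses the structural decomposition of $\OPTvc(G)$ provided by the kernel.

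The main obstacle is really just verifying that a suitable existing kernel for \VC genuinely preserves approximation ratios in the sense of Definition~\ref{def:lossy-kernel}; while this is folklore, it must be made precise because generic kernels need not lift $c$-approximate solutions to $c$-approximate solutions. Once we have the kernel as a black box with the two properties above, the lemma reduces to the three-line argument outlined, and the polynomial running time is inherited directly from the kernelization and the single oracle call.
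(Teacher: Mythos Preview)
Your proposal is correct and takes essentially the same approach as the paper: both use the LP-based Nemhauser--Trotter kernel to obtain $V_1$ (your $S$) and $G'=G[V_{1/2}]$ with $|V(G')|\leq 2k$, call the oracle on $G'$, and combine via the same inequality $|V_1|+|C'|\leq |V_1|+c\cdot\OPTvc(G')\leq c\cdot\OPTvc(G)$. One minor discrepancy: the paper establishes the $1$-approximate nature of this kernel inline in the proof of this lemma rather than deferring it to Appendix~\ref{sec:existing-kernels}, which only treats other problems.
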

\begin{proof}
 It is well-known \cite{ChenKJ99VertexCover} that \VC parameterized by solution size $k$ has a kernel with $2k$ vertices, using an LP-based method. We will use this kernelization procedure to prove the lemma. We start by describing the kernel here, cf.\ \cite[Theorem 4]{Fellows2018}. Consider the following LP with variables $x_v$ for all $v \in V(G)$.
 
 \begin{align*}
  &\min \sum_{v \in V(G)} x_v\\
  \intertext{Subject to}
  & x_u + x_v \geq 1 \text{ for all } \{u,v\} \in E(G)\\
  &x_v \in \mathbb{R}, x_v \geq 0 \text{ for all } v \in V(G).
 \end{align*}
    If we would also require $x_v \in \{0,1\}$ for all $v$, this would be equivalent to the \VC problem, we omit this constraint as it would make the LP hard to solve. Obtain an optimal solution for the relaxed problem. 
 Define $V_{0} := \{v \in V(G) \mid x_v <\frac{1}{2}\}$, $V_{\frac{1}{2}}:= \{v \in V(G) \mid x_v = \frac{1}{2}\}$, 
 and $V_1 := \{v \in V(G) \mid x_v > \frac{1}{2}\}$. It has been shown \cite{Nemhauser1974} that there always exists a minimum vertex cover $S$ in $G$ such that $V_1 \subseteq S \subseteq V_1 \cup V_{\frac{1}{2}}$. 
 Let $G' := G[V_{\frac{1}{2}}]$, and observe \cite{Fellows2018} that $|V(G')| \leq 2k$. 
 Apply the $c$-approximate oracle to obtain a vertex cover $S'$ in $G'$. 
 Let $S:= S'\cup V_{1}$. First of all, we show that $S$ is a vertex cover in $G$. 
 Any edge with at least one endpoint in $V_1$ is covered by definition, and any edge within $V_{\frac{1}{2}}$ is covered since $S'$ is a vertex cover of $G'$. 
 This leaves edges within $V_0$ and between $V_{\frac{1}{2}}$ and $V_{0}$, but there are no such edges as these would imply that the chosen solution to the LP is not correct. So $S$ is a vertex cover of $G$. Using that there exists an optimal solution $S^*$ such that  
 $ V_1 \subseteq S^* \subseteq V_1 \cup V_{\frac{1}{2}}$, we observe \begin{align*}
|S| = |V_1| + |S'| &\leq |V_1| + c \cdot \optvc(G')\\
&\leq|S^* \cap V_1| + c \cdot |S^* \cap V_{\frac{1}{2}}| \leq c \cdot |S^*| = c \cdot \optvc(G).\qedhere
 \end{align*}
\end{proof}

Using this, we can now give the $(1+\varepsilon)$-approximate Turing kernel for \VC. While the theorem statement requires $\varepsilon \leq 1$, this does not really impose a restriction: if $\varepsilon > 1$ one may simply reset it to be $1$. It simply shows that the bounds do not continue improving indefinitely as $\varepsilon$ grows larger than $1$. Note however that \textsc{Vertex Cover} is $2$-approximable in polynomial time, such that choosing $\varepsilon$ larger than one is likely not useful. 

\begin{theorem}\label{thm:vc-tk}
For every $0 < \varepsilon\leq 1$, \textsc{Vertex Cover} parameterized by treewidth $\ell$ has a $(1+\varepsilon)$-approximate Turing kernel  with $\Oh(\frac{\ell}{\varepsilon})$ vertices. 
\end{theorem}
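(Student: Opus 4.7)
The plan is to process the graph recursively along the given nice tree decomposition $\mathcal{T}=(T,\{X_t\}_{t\in V(T)})$ of width $\ell$. At each recursive step I locate a node $t\in V(T)$ such that $\optvc(G[V_t\setminus X_t])$ is roughly $\Theta(\ell/\varepsilon)$: large enough that paying $|X_t|\le \ell+1$ to take all of $X_t$ into the cover is only a $(1+\varepsilon)$-factor overhead, and small enough that \Cref{lem:approximate-vc} applied to $G[V_t\setminus X_t]$ yields oracle calls of size $\Oh(\ell/\varepsilon)$. I then return $S\cup X_t$ (where $S$ is the approximate cover for $G[V_t\setminus X_t]$) combined with the result of recursing on $G-V_t$, whose tree decomposition is obtained by deleting the subtree rooted at $t$ and stripping $X_t$ from the remaining bags (still of width at most $\ell$).

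To choose $t$, I use the maximum matching $\mu(t):=\MM(G[V_t\setminus X_t])$ as a polynomially-computable proxy for $\optvc(G[V_t\setminus X_t])$, via $\mu(t)\le\optvc(G[V_t\setminus X_t])\le 2\mu(t)$. Set $L:=\lceil(\ell+1)/\varepsilon\rceil$ and $U:=2L$. If $\mu(r)<L$ at the root $r$, then $\optvc(G)<2L=\Oh(\ell/\varepsilon)$ and \Cref{lem:approximate-vc} applied to $G$ itself already produces a $c$-approximate cover via oracle calls of size $\Oh(\ell/\varepsilon)$. Otherwise, descend from $r$ towards a child with the largest $\mu$-value until reaching the first node $t$ with $\mu(t)\le U$. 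The observation that keeps $\mu(t)\ge L$ at every descent step is that at a join node $V_t\setminus X_t$ is the vertex-disjoint union $(V_{t_1}\setminus X_{t_1})\sqcup(V_{t_2}\setminus X_{t_2})$ with no edges between the two halves (because $X_t=X_{t_1}=X_{t_2}$ is a separator), so $\mu(t)=\mu(t_1)+\mu(t_2)$ and the larger child has $\mu\ge\mu(t)/2>U/2=L$; at forget nodes $\mu$ decreases by at most $1$; at introduce nodes $\mu$ is unchanged. Hence the chosen $t$ satisfies $L\le\mu(t)\le U$, so $\optvc(G[V_t\setminus X_t])\le 2U=\Oh(\ell/\varepsilon)$ and \Cref{lem:approximate-vc} applied with this bound yields $S$ using oracle calls of size $\Oh(\ell/\varepsilon)$.

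Correctness of the returned cover is immediate: $X_t$ separates $V_t\setminus X_t$ from $V(G)\setminus V_t$, so every edge of $G$ lies either in $G[V_t\setminus X_t]$ (covered by $S$), is incident to $X_t$ (covered by $X_t$), or lies in $G-V_t$ (covered by the recursion). For the approximation ratio, induct on the recursion. Let $S^*$ be an optimal cover of $G$. Then $S^*\cap(V_t\setminus X_t)$ is a vertex cover of $G[V_t\setminus X_t]$, hence $|S^*\cap(V_t\setminus X_t)|\ge\optvc(G[V_t\setminus X_t])\ge\mu(t)\ge L$, which gives $|X_t|\le \ell+1\le\varepsilon\cdot|S^*\cap(V_t\setminus X_t)|$. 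The local cost is therefore
\begin{equation*}
|S|+|X_t|\le c\cdot|S^*\cap(V_t\setminus X_t)|+\varepsilon c\cdot|S^*\cap(V_t\setminus X_t)|=(1+\varepsilon)c\cdot|S^*\cap(V_t\setminus X_t)|,
\end{equation*}
which combined with the inductive bound $(1+\varepsilon)c\cdot|S^*\cap(V(G)\setminus V_t)|$ on the recursive call yields a total of at most $(1+\varepsilon)c\cdot|S^*|$, matching the desired guarantee.

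The main obstacle I anticipate is the bookkeeping around the descent procedure and the recursion: arguing rigorously that the descent always halts at a node with $\mu(t)\in[L,U]$ (using the node-type analysis above), verifying that the residual tree decomposition produced for $G-V_t$ is indeed a valid width-$\ell$ decomposition, and checking that the recursion terminates in polynomially many steps — which follows because $|V_t\setminus X_t|\ge 1$ whenever $\mu(t)\ge L>0$, so each recursive step strictly reduces the vertex count.
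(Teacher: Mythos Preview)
Your proposal is correct and follows essentially the same strategy as the paper: locate a node $t$ with $\optvc(G[V_t\setminus X_t])$ in an interval $[\Theta(\ell/\varepsilon),\Theta(\ell/\varepsilon)]$, invoke Lemma~\ref{lem:approximate-vc} on that piece, add $X_t$, and recurse on $G-V_t$. The only cosmetic difference is that you use the maximum matching $\mu(t)$ explicitly as a polynomial-time proxy for $\optvc(G[V_t\setminus X_t])$ (via $\mu\le\optvc\le 2\mu$), whereas the paper phrases the same idea through the value returned by a $2$-approximation and works with thresholds $\tfrac{\ell+1}{\varepsilon}$ and $\tfrac{8(\ell+1)}{\varepsilon}$; both lead to the same descent argument over introduce/forget/join nodes and the same inductive bound $|S|+|X_t|+|S'|\le c(1+\varepsilon)\,\optvc(G)$.
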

\begin{proof}
Consider Algorithm~\ref{alg:approxVC}, we use the well-known $2$-approximation algorithm for \VC. First of all, we show how to do Step~\ref{step:VC:find_t} of the algorithm efficiently.

 \begin{algorithm}
\caption{An approximate Turing kernel for \textsc{Vertex Cover}.}
\label{alg:approxVC}
\begin{algorithmic}[1]
\Procedure{$\textsc{ApproxVC}(G,\mathcal{T},\varepsilon)$}{}
\State Turn $\mathcal{T}$ into a nice tree decomposition of $G$
\State Obtain a $2$-approximate solution $\tilde{S}$ for \textsc{VC} in $G$
\If{$|\tilde{S}| \leq \frac{8(\ell + 1)}{\varepsilon}$}
    \State Determine a $c$-approximate solution $S$ to VC in $G$ using Lemma~\ref{lem:approximate-vc} \label{step:VC:easy}
    \State \Return $S$
\Else
    \State \label{step:VC:find_t}Find $t \in V(T)$ s.t.  $\frac{(\ell + 1)}{\varepsilon} \leq \optvc(G[V_t\setminus X_t]) \leq \frac{8(\ell+1)}{\varepsilon}$
    \State Determine a $c$-approximate solution $S_t$ to VC in $G[V_t\setminus X_t]$ using Lemma~\ref{lem:approximate-vc}
    \State $G' \gets G - V_t$
    \State Let $\mathcal{T}'$ be $\mathcal{T}$ after removing the subtree rooted at $t$ and all vertices in $X_t$
    \State $S' \gets \textsc{ApproxVC}(G',\mathcal{T}',\varepsilon)$
    \State \Return $S_t \cup X_t \cup S'$\label{step:VC:return2}
\EndIf
\EndProcedure
\end{algorithmic}
\end{algorithm}

\begin{claim} \label{claim:vc:find-t}
There is a polynomial-time algorithm that, given graph $G$ such that  $\optvc(G) \geq \frac{(\ell+1)}{\varepsilon}$, with a nice tree decomposition $\mathcal{T}$ of width at most $\ell$, outputs 
 a node $t \in V(T)$ such that $\frac{(\ell+1)}{\varepsilon}\leq \optvc(G[V_t\setminus X_t])\leq \frac{8(\ell+1)}{\varepsilon}$.
\end{claim}
\begin{claimproof}
 Let $\mathcal{T}$ be a nice tree decomposition with root $r$. We start from  $t:= r$, maintaining that $\optvc(G[V_t\setminus X_t]) \geq \frac{\ell+1}{\varepsilon}$. Note that this is initially true since $G_r = G$.

Decide if the $2$-approximation returns a vertex cover of size at most $\frac{8(\ell + 1)}{\varepsilon}$ for $G[V_t\setminus X_t]$. If yes, we are done. If not, then $\optvc(G[V_t\setminus X_t]) > \frac{4(\ell + 1)}{\varepsilon}$. We show that $t$ has a child on which we will recurse. We do a case distinction on the type of node of $t$.
\begin{itemize}
 \item $t$ is a leaf node. In this case, $|V_t\setminus X_t|=0$, contradicting that $\optvc(G[V_t\setminus X_t]) > \frac{4(\ell + 1)}{\varepsilon} \geq 0$.
 \item $t$ is a forget or introduce node. This implies $t$ has one child $t_1$ and the size of $V_{t}\setminus X_t$ and $V_{t_1} \setminus X_{t_1}$ differs by at most one. Therefore, $\optvc(G[V_{t_1}\setminus X_{t_1}]) \geq \optvc(G[V_t\setminus X_{t}])-1 \geq\optvc(G[V_t\setminus X_{t}])/2 $.
 \item $t$ is a join node with children $t_1$ and $t_2$. Observe that $G[V_t \setminus X_t]$ is the disjoint union of the graphs $G[V_{t_1}\setminus X_{t_1}]$ and $G[V_{t_2} \setminus X_{t_2}]$ (note $X_t = X_{t_1} = X_{t_2}$). As such, for one of the two children, without loss of generality let this be $t_1$, running the $2$-approximation algorithm for vertex cover returns a value of at least $\optvc(G[V_t\setminus X_{t}])/2$, meaning that $\optvc(G[V_{t_1}\setminus X_{t_1}])\geq \optvc(G[V_t\setminus X_{t}])/4$.
\end{itemize}
Thus, there is a child $t_1$ such that $\optvc(G[V_{t_1}\setminus X_{t_1}]) \geq \optvc(G[V_t\setminus X_{t}])/4 \geq \frac{\ell + 1}{\varepsilon}$. Continue with $t := t_1$.
\end{claimproof}

We will now show the correctness of the algorithm by induction on $|V(G)|$. Let $G$ be a graph with nice tree decomposition $\mathcal{T}$. If the algorithm returns a \textsc{Vertex Cover} in Step~\ref{step:VC:easy}, the result is immediate. If not, then it follows that the algorithm returns in Step~\ref{step:VC:return2}, and that $\optvc(G) > \frac{4(\ell+1)}{\varepsilon}$. The algorithm then returns a vertex cover $S_t$ for $G[V_t\setminus X_t]$ together with $X_t$ and a vertex cover $S'=\textsc{ApproxVC}(G',\mathcal{T}',\varepsilon)$ in the remainder of the graph. It is easy to see that the returned set is indeed a vertex cover of the graph. Furthermore, one may verify that the oracle is only used for graphs with at most $\Oh(\frac{\ell}{\varepsilon})$ vertices. It remains to verify the approximation ratio. Recall that $G' := G- V_t$. Then
\begin{align*}
 |S_t| + |S'| + |X_t| &\leq c \cdot \optvc(G[V_t \setminus X_t]) + c\cdot (1+\varepsilon)\cdot\optvc(G') + \ell + 1\\
&\leq  c \cdot (1+\varepsilon) \cdot\optvc(G[V_t \setminus X_t]) + c \cdot (1+\varepsilon)\cdot\optvc(G')\\
&\leq c \cdot (1+\varepsilon )\cdot \optvc(G).\qedhere
\end{align*}
\end{proof}

\subsection{Edge Clique Cover}\label{app:ecc}
In this section, we obtain an approximate Turing kernel for \textsc{Edge Clique Cover}, which is defined as follows.

\defparproblem{\textsc{Edge Clique Cover} (ECC)}{A graph $G$ with tree decomposition $\mathcal{T}$ of width $\ell$.}{$\ell$}{The minimum value for $k \in \mathbb{N}$ such that there exists a family $S$ of subsets of $V(G)$ such that $|S| \leq k$,  $G[C]$ is a clique for all $C \in S$, and  for all $\{u,v\} \in E(G)$ there exists $C \in S$ such that $u,v\in S$?}

To obtain an approximate Turing kernel, we will separate suitably-sized subtrees from the graph using the tree decomposition, as we did in the approximate Turing kernel for \textsc{Vertex Cover}. To show that this results in the desired approximation bound, we will need the following lemma. It basically shows that if we find a node $t$ of the tree decomposition such that $X_t$ is ``small'' compared to $\opt(V_t)$, we will be able to combine an edge clique cover in $G[V_t]$ with one in $G - (V_t \setminus X_t)$ to obtain a clique cover of the entire graph that is not too far from optimal.

\begin{lemma}\label{lem:ecc:sumleqopt}
 Let $G$ be a graph, let $X_1,X_2 \subseteq V(G)$ such that $X_1 \cup X_2 = V(G)$ and $X = X_1\cap X_2$ separates $X_1$ from $X_2$ in $G$. Then \[\optecc(G) \geq \optecc(G[X_1]) + \optecc(G[X_2]) - \binom{|X|}{2}.\]
\end{lemma}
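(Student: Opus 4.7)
The plan is to take an optimal edge clique cover $S$ of $G$ (with $|S| = \optecc(G)$) and extract from it edge clique covers of $G[X_1]$ and $G[X_2]$ whose combined size is at most $|S| + \binom{|X|}{2}$, then rearrange.

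The key structural observation comes first: every clique $C$ appearing in $S$ must be contained in $X_1$ or in $X_2$. Indeed, if $C$ contained a vertex $u \in X_1 \setminus X$ and a vertex $v \in X_2 \setminus X$, then $\{u,v\}$ would be an edge of $G$, contradicting the fact that $X$ separates $X_1$ from $X_2$ (since such an edge would be a path of length $1$ avoiding $X$). Using this, I partition $S = S_1 \sqcup S_2$, where every clique in $S_1$ lies in $X_1$ and every clique in $S_2$ lies in $X_2$; cliques entirely inside $X$ are placed, say, in $S_1$.

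Next I carefully split the edges of $G[X]$ to avoid double-counting. Define $F_1 \subseteq E(G[X])$ as the edges of $G[X]$ covered by at least one clique in $S_1$, and $F_2 := E(G[X]) \setminus F_1$; each edge in $F_2$ must therefore be covered by some clique in $S_2$ because $S$ is an edge clique cover. Now I build
\[
 S'_1 := S_1 \cup \{\{u,v\} : \{u,v\} \in F_2\}, \qquad S'_2 := S_2 \cup \{\{u,v\} : \{u,v\} \in F_1\}.
\]
I will verify that $S'_1$ is an edge clique cover of $G[X_1]$: any edge with an endpoint in $X_1 \setminus X$ is covered by $S_1$ (its covering clique in $S$ cannot lie in $X_2$, so it lies in $S_1$), and edges inside $G[X]$ are covered by $S_1$ if they lie in $F_1$ and by the added $2$-cliques if they lie in $F_2$. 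A symmetric argument shows $S'_2$ is an edge clique cover of $G[X_2]$.

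Summing sizes gives
\[
 \optecc(G[X_1]) + \optecc(G[X_2]) \leq |S'_1| + |S'_2| = |S_1| + |S_2| + |F_1| + |F_2| = |S| + |E(G[X])| \leq \optecc(G) + \binom{|X|}{2},
\]
which rearranges to the claimed inequality. There is no real obstacle; the only point that needs care is to avoid paying twice for edges of $G[X]$, and this is handled cleanly by the $F_1/F_2$ partition so that each edge of $G[X]$ contributes exactly once to the total on the right.
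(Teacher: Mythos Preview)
Your proof is correct and follows essentially the same approach as the paper's: both take an optimal edge clique cover of $G$, use the separator property to see that each clique lies entirely in $X_1$ or in $X_2$, split accordingly, and then add at most $\binom{|X|}{2}$ two-element cliques to cover the edges inside $X$ on whichever side they are missing. The only cosmetic difference is bookkeeping: you partition $E(G[X])$ into $F_1,F_2$ up front and add each part to the opposite side, whereas the paper defines the ``missing'' edges $S_j''$ separately for each side and then argues $S_1''\cap S_2''=\emptyset$; both yield the same $\binom{|X|}{2}$ overhead.
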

\begin{proof}
 Let $S$ be an edge clique cover of $G$. We show how to obtain clique covers $S_1$ and $S_2$ for $G[X_1]$ and $G[X_2]$ such that $|S_1| + |S_2| \leq |S| + \binom{|X|}{2}$. First define 
 \[S_1' := \{C  \mid C \cap (X_1\setminus X) \neq \emptyset, C \in S\} \cup \{C \mid C \subseteq X, C \in S\},\]%technically there is no need for this last part
similarly, define 
 \[S_2' := \{C  \mid C \cap (X_2\setminus X) \neq \emptyset, C \in S\}.\]
 For $j\in[2]$, define $S_j := S_j' \cup S_j''$, where $S_j'' := \{\{u,v\} \in E(G[X]) \mid \{u,v\} \text{ not covered by } S_j'\}$. 
 
 We start by showing that $S_j$ is an edge clique cover of $G[X_j]$ for $i \in [2]$. First of all, we will verify that  $C \subseteq X_j$ and that $C$ forms a clique in $G[X_j]$ for all $C \in S_j$. For $C \in S_j''$ this is trivial, for $C \in S_j'$,  observe that $C$ is a clique in $G$ and any clique in $G$ containing a vertex from $X_j\setminus X$ cannot contain a vertex from $V(G) \setminus X_j$, since $X$ is a separator.  Thus $C \subseteq X_j$. The fact that $C$ is a clique in $G[X_j]$ is immediate from $C$ being a clique in $G$.
 
 It remains to show that $S_j$ covers all edges in $G[X_j]$. Let $\{u,v\} \in E(G[X_j])$. If $u,v \in X$, then the edge is covered by definition. Without loss of generality, suppose $u \in X_j \setminus X$. Let $C \in S$ be a clique that covered edge $\{u,v\}$. Then clearly $u \in C \cap (X_j \setminus X)$ and thus $C \cap (X_j\setminus X) \neq \emptyset$, implying $C  \in S_j$. Thus, the edge $\{u,v\}$ is indeed covered by $S_j$.
 
 It remains to show that $|S_1| + |S_2| \leq |S| + \binom{|X|}{2}$. Start by observing that $|S_1'| + |S_2'| \leq |S|$, since a clique cannot contain both a vertex from $X_1\setminus X$ and $X_2 \setminus X$. Since every edge $\{u,v\} \in E(G[X])$ is covered by $S$, it is easy to observe from the definition that $\{u,v\}$ is covered by $S_1'$ or $S_2'$. As such, $S_1''\cap S_2'' = \emptyset$. Since $G[X]$ has at most $\binom{|X|}{2}$ edges, it follows that $|S_1''| + |S_2''| \leq \binom{|X|}{2}$ and indeed $|S_1| + |S_2| \leq |S_1'| + |S_1''| + |S_2'| + |S_2''| \leq |S| + \binom{|X|}{2}$. 
 \end{proof}

Before giving the approximate Turing kernel, we show that there exists a node $t$ in the tree decomposition such that the size of the subtree rooted at $t$ falls within certain size bounds. We use this to split off subtrees, similar to the strategy we used for \textsc{Vertex Cover} earlier.
 
 \begin{lemma}
  \label{lem:chop-of-part-ecc} There is a polynomial-time algorithm that, given a graph $G$ with $|V(G)| \geq 2\frac{1+\varepsilon}{\varepsilon} (\ell + 1)^4 $, a nice tree decomposition $\mathcal{T}$ of width $\ell$, and $\varepsilon > 0$, outputs a node $t \in V(T)$ such that $2\frac{1+\varepsilon}{\varepsilon} (\ell + 1)^4  \leq |V_t \setminus X_t| \leq 4\frac{1+\varepsilon}{\varepsilon} (\ell + 1)^4 $. 
 \end{lemma}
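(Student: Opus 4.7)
The plan is to descend from the root of $\mathcal{T}$ while maintaining the invariant $|V_t \setminus X_t| \geq B$, where $B := 2\frac{1+\varepsilon}{\varepsilon}(\ell+1)^4$. This mirrors the descent argument used for \textsc{Vertex Cover} in Claim~\ref{claim:vc:find-t}, but now the quantity we track is the combinatorial size $|V_t \setminus X_t|$ rather than an optimum value, so no approximation oracle is needed. As a preprocessing step I would compute $|V_t \setminus X_t|$ for every node $t \in V(T)$ in a single bottom-up sweep of $\mathcal{T}$.

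Initialize $t := r$; since $X_r = \emptyset$ in a nice tree decomposition, $|V_r \setminus X_r| = |V(G)| \geq B$ by hypothesis. As long as $|V_t \setminus X_t| > 2B$, I would move $t$ to a suitable child, argued by cases on the type of $t$:
\begin{itemize}
 \item Leaf: impossible, since $V_t = X_t = \emptyset$ contradicts $|V_t \setminus X_t| > 2B$.
 \item Introduce (child $t_1$, $X_t = X_{t_1} \cup \{v\}$): then $V_t \setminus X_t = V_{t_1} \setminus X_{t_1}$, so descend to $t_1$.
 \item Forget (child $t_1$, $X_{t_1} = X_t \cup \{v\}$): then $|V_{t_1} \setminus X_{t_1}| = |V_t \setminus X_t| - 1 > 2B - 1 \geq B$.
 \item Join (children $t_1, t_2$, $X_t = X_{t_1} = X_{t_2}$): since $V_{t_1} \cap V_{t_2} = X_t$, the set $V_t \setminus X_t$ is the disjoint union of $V_{t_1} \setminus X_{t_1}$ and $V_{t_2} \setminus X_{t_2}$, so descending to the child $t_i$ with the larger value of $|V_{t_i} \setminus X_{t_i}|$ yields at least $|V_t \setminus X_t|/2 > B$.
\end{itemize}
Termination is immediate because each descent moves $t$ strictly further from the root and the tree is finite; polynomial running time follows from a single root-to-leaf traversal after the precomputation.

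The only delicate point---and the step I would think of as the main obstacle---is checking that the procedure actually halts inside the target window $[B, 2B]$ rather than overshooting. The case analysis handles this uniformly: a single descent either changes $|V_t \setminus X_t|$ by at most $1$ (introduce or forget) or cuts it by at most a factor of two (join). Since we only descend when $|V_t \setminus X_t| > 2B$, the new value is always strictly greater than $B$, so the invariant is preserved and the loop can only exit with $B \leq |V_t \setminus X_t| \leq 2B$, as required.
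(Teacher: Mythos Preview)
Your proof is correct and follows essentially the same approach as the paper: a top-down descent from the root of the nice tree decomposition maintaining the invariant $|V_t\setminus X_t|\ge B$, with a case analysis on the node type showing that each step loses at most a factor of two. Your treatment is slightly more explicit (separating introduce from forget and noting that introduce leaves the quantity unchanged, plus the bottom-up precomputation for polynomial time), but the argument is the same as the paper's.
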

 \begin{proof}
If $|V(G)| \leq 4\frac{1+\varepsilon}{\varepsilon} (\ell + 1)^4 $, we simply output the root $r$ of the tree decomposition, observe $X_r = \emptyset$ and thus $V_r\setminus X_r = V(G)$. Otherwise, we search  through the tree decomposition to find the right node, as follows. Start from $t := r$ and suppose we are currently at node $t$, such that  $|V_t \setminus X_t| \geq 2\frac{1+\varepsilon}{\varepsilon} (\ell + 1)^4 $. If $|V_t\setminus X_t| \leq 4\frac{1+\varepsilon}{\varepsilon} (\ell + 1)^4 $, we are done.  Otherwise, we show that one of the children $t'$ of $t$ has the property that $|V_{t'} \setminus X_{t'}| \geq 2\frac{1+\varepsilon}{\varepsilon} (\ell + 1)^4 $. Observe that since $\mathcal{T}$ is nice, $t$ has at most two children. If $t$ has exactly one child $t'$, the difference between $|V_t \setminus X_t|$ and $|V_{t'} \setminus X_{t'}|$ is at most one, such that indeed $|V_{t'} \setminus X_{t'}| \geq 4\frac{1+\varepsilon}{\varepsilon} (\ell + 1)^4-1 \geq  2\frac{1+\varepsilon}{\varepsilon} (\ell + 1)^4$. Otherwise, $t$ is a join node and $V_t\setminus X_t = (V_{t_1}\setminus X_{t_1}) \cup (V_{t_2}\setminus X_{t_2})$ for the children $t_1$ and $t_2$ of $t$. Suppose without loss of generality that $(V_{t_1}\setminus X_{t_1}) \geq (V_{t_2}\setminus X_{t_2})$, then $|V_{t_1}\setminus X_{t_1}| \geq |V_t\setminus X_t|/2 \geq 
2\frac{1+\varepsilon}{\varepsilon} (\ell + 1)^4$.
\end{proof}
Using the lemma above, we can now give the approximate Turing kernel for \textsc{Edge Clique Cover}. 

\begin{theorem}\label{thm:ecc}
For every $0 < \varepsilon\leq 1$, \textsc{Edge Clique Cover} parameterized by treewidth $\ell$ has a $(1+\varepsilon)$-approximate Turing kernel with $\Oh(\frac{\ell^4}{\varepsilon})$ vertices.
\end{theorem}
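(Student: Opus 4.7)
The plan is to follow the algorithmic template introduced for \textsc{Vertex Cover} in Theorem~\ref{thm:vc-tk}, but to glue the partial solutions together using Lemma~\ref{lem:ecc:sumleqopt}. On input $(G,\mathcal{T},\varepsilon)$, the algorithm first discards isolated vertices of $G$, which never appear in any edge clique cover. If $|V(G)| \leq 4\frac{1+\varepsilon}{\varepsilon}(\ell+1)^4$ it simply invokes the $c$-approximate oracle on $G$ and returns the result. Otherwise, Lemma~\ref{lem:chop-of-part-ecc} provides a node $t$ of $\mathcal{T}$ with $2\frac{1+\varepsilon}{\varepsilon}(\ell+1)^4 \leq |V_t \setminus X_t| \leq 4\frac{1+\varepsilon}{\varepsilon}(\ell+1)^4$. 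The algorithm queries the oracle on $G[V_t]$, equipped with the subtree of $\mathcal{T}$ rooted at $t$, to obtain a $c$-approximate edge clique cover $S_1$, and then recurses on $G' := G - (V_t \setminus X_t)$ (equipped with the tree decomposition obtained from $\mathcal{T}$ by deleting $V_t \setminus X_t$ from every bag and re-nicifying in polynomial time) to produce an edge clique cover $S_2$. The final output is $S_1 \cup S_2$.

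Correctness is immediate, since $X_t$ separates $V_t \setminus X_t$ from the rest of $G$, so every edge of $G$ lies in either $G[V_t]$ or $G'$ (edges inside $G[X_t]$ may be covered twice, which is harmless). For the approximation ratio I would apply Lemma~\ref{lem:ecc:sumleqopt} with $X_1 := V_t$ and $X_2 := V(G) \setminus (V_t \setminus X_t)$, so that $X_1 \cap X_2 = X_t$ has size at most $\ell+1$. Combining $|S_1| \leq c \cdot \optecc(G[V_t])$ and the induction hypothesis $|S_2| \leq c(1+\varepsilon)\cdot\optecc(G')$ with Lemma~\ref{lem:ecc:sumleqopt}, a routine rearrangement yields
\[
|S_1| + |S_2| \;\leq\; c(1+\varepsilon)\cdot\optecc(G) \;+\; c(1+\varepsilon)\binom{|X_t|}{2} \;-\; c\varepsilon\cdot\optecc(G[V_t]).
\]
Hence it suffices to establish $\optecc(G[V_t]) \geq \frac{1+\varepsilon}{\varepsilon}\binom{|X_t|}{2}$, which causes the last two terms to cancel and produces the required bound $c(1+\varepsilon)\cdot\optecc(G)$.

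The main obstacle --- and essentially the only non-routine step --- is this lower bound on $\optecc(G[V_t])$. I would combine two observations. First, since $G[V_t]$ inherits treewidth at most $\ell$, its clique number is at most $\ell+1$, so each clique in any edge clique cover covers at most $\binom{\ell+1}{2}$ edges, giving $\optecc(G[V_t]) \geq |E(G[V_t])|/\binom{\ell+1}{2}$. Second, because $X_t$ separates $V_t \setminus X_t$ from the rest of $G$, every vertex $v \in V_t \setminus X_t$ has all its neighbours inside $V_t$, and since we removed isolated vertices each such $v$ has degree at least one; summing degrees over $V_t \setminus X_t$ shows $|E(G[V_t])| \geq |V_t \setminus X_t|/2$. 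Plugging in $|V_t \setminus X_t| \geq 2\frac{1+\varepsilon}{\varepsilon}(\ell+1)^4$ from Lemma~\ref{lem:chop-of-part-ecc} comfortably exceeds the required threshold $\frac{(1+\varepsilon)(\ell+1)\ell}{2\varepsilon}$, completing the approximation analysis. Finally, each oracle query is made on a graph with $|V_t| \leq |V_t \setminus X_t| + |X_t| = \Oh(\ell^4/\varepsilon)$ vertices, and each recursive call strictly decreases $|V(G)|$, so the algorithm runs in polynomial time overall.
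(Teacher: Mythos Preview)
Your proposal is correct and follows essentially the same approach as the paper: it uses Lemma~\ref{lem:chop-of-part-ecc} to locate a suitable node $t$, queries the oracle on $G[V_t]$, recurses on $G-(V_t\setminus X_t)$, and combines via Lemma~\ref{lem:ecc:sumleqopt}, with the key lower bound $\optecc(G[V_t])\geq |E(G[V_t])|/\binom{\ell+1}{2}\geq |V_t\setminus X_t|/(2\binom{\ell+1}{2})$ derived exactly as you describe. The only cosmetic difference is that the paper first splits $G$ into connected components (so that vertices of $V_t\setminus X_t$ cannot be isolated in $G[V_t]$), whereas you achieve the same effect by explicitly discarding isolated vertices at the start of each call; both devices serve the identical purpose of validating the edge-count bound $|E(G[V_t])|\geq |V_t\setminus X_t|/2$.
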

\begin{proof}
 Consider Algorithm~\ref{alg:approxECC}, we show that it is a $(1+\varepsilon)$-approximate Turing kernel for ECC. Observe that Step~\ref{step:ECC:split} can be done efficiently while maintaining a valid tree decomposition, as one may simply restrict the bags of the decomposition to the relevant connected component of $G$.
  It is easy to verify that the procedure runs in polynomial time, using that $|V_t\setminus X_t|$ is always non-empty and thus the recursive call is on a strictly smaller graph. Finally, we can verify the size-bound, as the oracle is only applied to $G$ if $|V(G)| \leq \Oh(\frac{\ell^4}{\varepsilon})$ or to $G[V_t]$ when $|V_t\setminus X_t| \leq \Oh(\frac{\ell^4}{\varepsilon})$, implying that $|V_t| \leq |V_t\setminus X_t| + \ell + 1 = \Oh(\frac{\ell^4}{\varepsilon})$.

\begin{algorithm}
\caption{An approximate Turing Kernel for \textsc{Edge Clique Cover}.}
\label{alg:approxECC}
\begin{algorithmic}[1]
\Procedure{$\textsc{ApproxECC}(G,\mathcal{T},\varepsilon)$}{}
\State   \label{step:ECC:split}If $G$ is not connected, split $G$ into its connected components and treat them separately.
\State Turn $\mathcal{T}$ into a nice tree decomposition.
\If{$|V(G)| \leq \frac{2 (1+\varepsilon)}{\varepsilon}(\ell+1)^4$}
    \State Apply the $c$-approximate oracle to obtain an ECC $S$ of $G$
    \State \Return $S$ \label{step:ecc:return1}
\Else
    \State \label{step:ECC:find_t}Find $t \in V(T)$ s.t.  $2\frac{(1+\varepsilon)}{\varepsilon}(\ell+1)^4 \leq |V_t\setminus X_t| \leq \frac{4 (1+\varepsilon)}{\varepsilon}(\ell+1)^4$ (by Lemma~\ref{lem:chop-of-part-ecc})
    \State Determine a $c$-approximate solution $S_t$ to ECC in $G[V_t]$  using the oracle
    \State $G' \gets G - (V_t\setminus X_t)$
    \State Let $\mathcal{T}'$ be $\mathcal{T}$ after removing the subtree rooted at $t$ except for $t$
    \State $S' \gets \textsc{ApproxECC}(G',\mathcal{T}',\varepsilon)$
    \State \Return $S_t \cup S'$
\EndIf
\EndProcedure
\end{algorithmic}
\end{algorithm}

We continue by showing that Algorithm~\ref{alg:approxECC} returns an edge clique cover of $G$. If the algorithm returns in Step~\ref{step:ecc:return1}, this is immediate. Otherwise, observe that since $X_t$ separates $V_t$ and $V(G')$ in $G$, it follows that any edge in $G$ is in $E(G[V_t])$ or in $E(G')$. Thus, such an edge is covered by $S_t$ or $S'$, implying that $S = S_t \cup S'$ is an edge clique cover of $G$. We now bound $|S_t| + |S'|$, to show that the algorithm indeed approximates the optimum ECC.  

\begin{align*}
 |S_t| + |S'| &\leq c \cdot \optecc(G[V_t]) + |S'|\\
 &= c\cdot ( 1 + \varepsilon)\cdot \optecc(G[V_t]) - c \cdot \varepsilon \cdot \optecc(G[V_t]) + |S'|\\
 \intertext{Observe that every clique covers at most $\binom{\ell+1}{2}$ edges, since it has at most $\ell+1$ vertices, since the treewidth of $G$ is bounded by $\ell$. Thus $\optecc(G[V_t]) \geq |E(G[V_t])|/\binom{\ell+1}{2}$.}
  &\leq c\cdot ( 1 + \varepsilon)\cdot \optecc(G[V_t]) - c \cdot \varepsilon \cdot |E(G[V_t])|/\binom{\ell+1}{2} + |S'|\\
  \intertext{Observe that $V_t\setminus X_t$ cannot contain vertices that are isolated in $G[V_t]$, since $G$ is connected and $X_t$ separates $V_t$ from the remainder of $G$. Thus, $|E(G[V_t])| \geq |V_t\setminus X_t|/2$.}
    &\leq c\cdot ( 1 + \varepsilon)\cdot \optecc(G[V_t]) - c \cdot \varepsilon \cdot \frac{|V_t\setminus X_t|}{2(\ell+1)^2} + |S'|\\
    &\leq c\cdot (1 + \varepsilon)\cdot  \optecc(G[V_t]) - c \cdot (1+\varepsilon) \cdot (\ell + 1)^2 + |S'|\\
    \intertext{using $\ell+1 \geq |X_t|$}
 &\leq c\cdot (1 + \varepsilon)\cdot  \optecc(G[V_t]) - c \cdot (1+\varepsilon) \cdot \binom{|X_t|}{2} + |S'|\\
 &\leq c\cdot (1 + \varepsilon) \cdot (\optecc(G[V_t]) + \optecc(G') - \binom{|X_t|}{2})\\
 \intertext{By Lemma~\ref{lem:ecc:sumleqopt}}
 &\leq c\cdot (1+\varepsilon)\cdot\optecc(G).\qedhere
\end{align*}
\end{proof}

\subsection{Edge-Disjoint Triangle Packing}\label{app:etp}
In this section we give an approximate Turing kernel for the \textsc{Edge-Disjoint Triangle Packing} problem, defined as follows.
\defparproblem{\textsc{Edge-Disjoint Triangle Packing} (ETP)}{A graph $G$ with tree decomposition $\mathcal{T}$ of width $\ell$.}{$\ell$}{The maximum value for $k \in \mathbb{N}$ such that there exists a family $S$ of size-$3$ subsets of $V(G)$ such that $|S| \geq k$,  $G[X]$ is a triangle for all $X \in S$, and $X$ and $Y$ are edge-disjoint for all $X,Y \in S$?}
Observe that the problem has a $3$-approximation by taking any maximal edge-disjoint triangle packing $S$, which can be greedily constructed. This packing then uses  $3|S|$ edges. If there is a solution $S'$ with $|S'| > 3|S|$, then there is a triangle in $S'$ that contains no edge covered by $S$, contradicting that $S$ is maximal. We now give the approximate Turing kernel.

\begin{theorem}\label{thm:etp}
 For every $0 \leq \varepsilon \leq 1$,  \textsc{Edge-Disjoint Triangle Packing}  parameterized by treewidth $\ell$, has a $(1+\varepsilon)$-approximate Turing kernel with $\Oh(\frac{\ell^2}{\varepsilon})$ vertices.
\end{theorem}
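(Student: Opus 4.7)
The plan is to mirror the proof of Theorem~\ref{thm:ecc}. First, I would establish the splitting inequality $\OPTetp(G) \leq \OPTetp(G[V_t]) + \OPTetp(G')$, where $G' := G - (V_t \setminus X_t)$. This holds because every triangle has its three vertices in a common bag of the tree decomposition (Helly property), so it lies entirely in $G[V_t]$ or entirely in $G'$; an optimal packing therefore partitions into two edge-disjoint packings, one on each side. A matching lower bound $\OPTetp(G) \geq \OPTetp(G[V_t]) + \OPTetp(G') - \binom{|X_t|}{2}$ holds as well, though only the upper bound is needed for the analysis.

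The algorithm \textsc{ApproxETP}$(G, \mathcal{T}, \varepsilon)$ is recursive. On input, either query the oracle directly when $|V(G)| = \Oh(\ell^2/\varepsilon)$, or find a node $t \in V(T)$ such that $\OPTetp(G[V_t])$ lies in $[\Theta(\ell^2/\varepsilon), \Theta(\ell^2/\varepsilon)]$. To find $t$ one uses a tree walk analogous to Claim~\ref{claim:vc:find-t}, with the $3$-approximation serving as an efficient proxy for $\OPTetp$: at a join node with children $t_1, t_2$, the splitting inequality forces some $\OPTetp(G[V_{t_i}]) \geq \OPTetp(G[V_t])/2$; at introduce nodes the drop is at most $\lfloor \ell/2 \rfloor$, since a single vertex has at most $\ell$ neighbors in $V_t$ and thus sits in at most $\lfloor \ell/2\rfloor$ edge-disjoint triangles. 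Then query the oracle on $G_t := G[V_t] \setminus E(G[X_t])$ to obtain a packing $S_t$ that uses no edges inside $X_t$, recurse on $G'$ to get $S'$, and return $S_t \cup S'$. This union is edge-disjoint because $S_t$'s edges lie in $E(G[V_t]) \setminus E(G[X_t])$ and $S'$'s edges lie in $E(G')$, two disjoint edge sets.

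The inductive analysis follows the ECC pattern. Using $\OPTetp(G_t) \geq \OPTetp(G[V_t]) - \binom{\ell+1}{2}$ (each removed $X_t$-edge appears in at most one packing triangle), the oracle bound $|S_t| \geq \OPTetp(G_t)/c$, and the inductive hypothesis $|S'| \geq \OPTetp(G')/(c(1+\varepsilon))$, one derives
\[c(1+\varepsilon)|S| \geq (1+\varepsilon)\OPTetp(G[V_t]) + \OPTetp(G') - (1+\varepsilon)\binom{\ell+1}{2} \geq \OPTetp(G) + \varepsilon\OPTetp(G[V_t]) - (1+\varepsilon)\binom{\ell+1}{2} \geq \OPTetp(G),\]
where the last step leverages the lower bound $\OPTetp(G[V_t]) \geq (1+\varepsilon)\binom{\ell+1}{2}/\varepsilon$ ensured by the tree walk.

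The hard part is bounding the oracle query size $|V_t| \leq \Oh(\ell^2/\varepsilon)$: the tree walk controls $\OPTetp(G[V_t])$, but a triangle-poor graph such as the book graph ($\OPTetp = 1$, treewidth $2$, arbitrarily many vertices) shows that $|V_t|$ need not be bounded by $\OPTetp(G[V_t])$ and $\ell$ on its own. I would address this by preprocessing $G[V_t]$ before the oracle call: first discard vertices and edges lying in no triangle of $G[V_t]$ (these do not affect $\OPTetp$), and then apply additional reductions that retain only a bounded number of common neighbors per pair of vertices sharing a bag, since surplus common neighbors are interchangeable in any edge-disjoint packing. Proving that these reductions preserve $\OPTetp$ (at least within a $(1+\varepsilon)$-factor) and cut $|V_t|$ down to $\Oh(\ell^2/\varepsilon)$, possibly via an auxiliary approximate kernel for ETP parameterized by $\OPTetp + \ell$, is the principal technical step specific to this theorem.
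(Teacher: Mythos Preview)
Your overall structure is correct and essentially matches the paper's: tree walk to find a node $t$ with $\OPTetp$ in the right range, query on $G_t=G[V_t]\setminus E(G[X_t])$, recurse on $G'=G-(V_t\setminus X_t)$, combine. Two remarks.

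First, the step you flag as ``the hard part'' is where your proposal is genuinely incomplete, but the fix is much simpler than the ad~hoc reductions you sketch. The paper does not try to bound $|V_t|$ at all. Instead it wraps every oracle call in the known $1$-approximate kernel for \textsc{Edge-Disjoint Triangle Packing} parameterized by solution size, which has $4k$ vertices (this is Lemma~\ref{lem:kernel:EDTP}, verifying that the kernel of Mathieson, Prieto and Shaw is $1$-approximate). Concretely (Claim~\ref{claim:etp}): compute a $3$-approximation $\tilde S$ on $G_t$, run the kernel with parameter $3|\tilde S|$ to obtain $(G',k')$ with $\Oh(|\tilde S|)=\Oh(\ell^2/\varepsilon)$ vertices, call the oracle on $G'$, and lift. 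Your book-graph worry evaporates because the kernel, not the tree walk, is what controls the instance size. So you do not need new ``common-neighbor'' reductions or a kernel by $\OPTetp+\ell$; the off-the-shelf solution-size kernel already does the job.

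Second, a minor simplification: the paper runs the tree walk on $\OPTetp(G_t)$ rather than on $\OPTetp(G[V_t])$. With the $X_t$-edges removed, an introduced vertex is isolated in $G_t$ (zero drop), a forget removes at most $\ell+1$ edges (drop $\le \ell$), and at a join $G_t$ is the edge-disjoint union of $G_{t_1}$ and $G_{t_2}$ so $\OPTetp(G_t)=\OPTetp(G_{t_1})+\OPTetp(G_{t_2})$ exactly. This avoids your Helly argument and the separate $\binom{\ell+1}{2}$ correction; the $\binom{|X_t|}{2}$ term then enters only once, in the final inequality $\OPTetp(G)\le \OPTetp(G_t)+\OPTetp(G')+\binom{|X_t|}{2}$.
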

\begin{proof}
We start by proving the following claim. 

\begin{claim}\label{claim:etp}
Let $G$ be a graph with $\OPTetp(G) \leq k$. There is a polynomial-time algorithm that when given access to a $c$-approximate oracle, outputs a $c$-approximate solution for $G$ using calls to the oracle with at most $\Oh(k)$ vertices.
\end{claim}
\begin{claimproof}
Start by computing a $3$-approximate solution $\tilde{S}$ to \textsc{ETP} in $G$. Note that $|\tilde{S}| \geq \frac{1}{3} \OPTetp(G)$. Obtain graph $(G',k')$ by applying the $1$-approximate kernel from Lemma~\ref{lem:kernel:EDTP} to $(G,3|\tilde{S}|)$. Apply the $c$-approximate oracle to obtain a solution $S'$ in $G'$. Apply the solution lifting algorithm to obtain solution $S$ in $G$. Let $\hat{S}$ be the largest of $S$ and $\tilde{S}$, output $\hat{S}$. It remains to verify that $\hat{S}$ is a $c$-approximate solution. Note that $|\hat{S}| \geq \modetp(G,3|\tilde{S}|,S)$ and $\OPTetp(G) \leq 3|\tilde{S}|$, such that 
\begin{align*}
\frac{|\hat{S}|}{\OPTetp(G)} &\geq \frac{\modetp(G,3|\tilde{S}|,S)}{\modOPTetp(G,3|\tilde{S}|)} \geq \frac{\modetp(G',k',S')}{\modOPTetp(G',k')}. 
\end{align*}
We consider two options. If $|S'| > k'$, then immediately $\OPTetp(G') > k'$ and thus \[\frac{\modetp(G',k',S')}{\modOPTetp(G',k')} = 1 \geq \frac{1}{c}.\] Otherwise, we get that $\modetp(G',k',S')=\etp(G',S')$, and \[\frac{\modetp(G',k',S')}{\modOPTetp(G',k')} \geq \frac{|S'|}{\OPTetp(G')} \geq \frac{1}{c}.\qedhere\]
\end{claimproof}
We now describe the algorithm.  Start by computing a $3$-approximate solution $\tilde{S}$ to \textsc{Edge-Disjoint Triangle Packing} in~$G$. If $|\tilde{S}| \leq 18\frac{(\ell+1)^2}{\varepsilon}$, we obtain an approximate solution to triangle packing using Claim~\ref{claim:etp}. 

Otherwise, for $t \in V(T)$ define $G_t$ as $G[V_t]\setminus E(G[X_t])$, i.e., the graph $G[V_t]$ from which the edges between vertices in $X_t$ have been removed. We show how to find $t \in T$ such that \[\frac{(\ell+1)^2}{\varepsilon} \leq \OPTetp(G_t) \leq 18\frac{(\ell+1)^2}{\varepsilon},\] together with an approximate solution $S_t$ in $G_t$. Start with $t := r$, observe that initially  $\OPTetp(G_t) > \frac{18(\ell+1)^2}{\varepsilon}$ since $G_r = G$ and $\OPTetp(G_t) \geq |\tilde{S}|$. So suppose we are at some node $t$ with $\OPTetp(G_t) \geq \frac{(\ell+1)^2}{\varepsilon}$. Compute a $3$-approximate solution in $G_t$. If this solution has value at most $\frac{6(\ell+1)^2}{\varepsilon}$,  we obtain an approximate solution $S_t$ to triangle packing in $G_t$ using Claim~\ref{claim:etp}. Otherwise, we will recurse on a child $t_1$ of $t$ for which $\OPTetp(G_{t_1}) \geq \frac{(\ell+1)^2}{\varepsilon}$, we show how to find such a child  by doing a case distinction on the type of node of $t$.
\begin{itemize}
\item $t$ is a leaf node. This is a contradiction with the assumption that $\OPTetp(G_t) > 6\frac{(\ell+1)^2}{\varepsilon}$, since $G_t$ is empty.
\item $t$ has exactly one child $t_1$ and $X_t = X_{t_1} \cup \{v\}$ for some $v\in V(G)$. This means in particular that $G_{t_1} = G_t - \{v\}$. Furthermore, we can show that $v$ is isolated in $G_t$. After all, there are no edges between vertices in $X_t$ and $v$ by definition of $G_t$. Furthermore, there are no edges between $v$ and vertices not in $X_t$, by correctness of the tree decomposition. Therefore, trivially, $\OPTetp(G_t) = \OPTetp(G_{t_1})$ and we continue with $t \gets t_1$.
\item $t$ has exactly one child $t_1$ and $X_t = X_{t_1} \setminus \{v\}$ for some $v\in V(G)$. In this case $G_{t_1}$ can be obtained by $G_t$ by removing all edges between vertices in $v$ and vertices in $X_t$. This removes at most $(\ell+1)$ edges from the graph, and thus $\OPTetp(G_{t_1}) \geq \OPTetp(G_t) - \ell \geq \frac{(\ell+1)^2}{\varepsilon}$, and we continue with $t \gets t_1$.
\item $t$ is a join node with children $t_1$ and $t_2$. Observe that $X_t$ separates $G_t$ and that $\OPTetp(G_t) = \OPTetp(G_{t_1}) + \OPTetp(G_{t_2})$. As such, there is a child of $G_t$, w.l.o.g. let this be $t_1$, such that $\OPTetp(G_{t_1}) \geq \OPTetp(G_t)/2 \geq \frac{3(\ell+1)^2}{\varepsilon}$. Using the $3$-approximation on both children,  find a child where the returned solution size is at least $\frac{3(\ell+1)^2}{3\varepsilon}=\frac{(\ell+1)^2}{\varepsilon}$. Continue with this child.
\end{itemize}
Using $t$ and the obtained solution $S_t$ in $G_t$, we now do the following. Let $G' := G - (V_t\setminus X_t)$. Obtain a solution $S'$ in $G'$ using the algorithm above on the smaller graph $G'$. Output $S := S_t\cup S'$. Since $G'$ and $G_t$ are edge-disjoint subgraphs of $G$, it is easy to observe that $S$ is an edge-disjoint triangle packing in $G$.

It remains to show that $S$ has the desired size. Observe that the size of an edge-disjoint triangle packing in $G$ can be bounded by considering the triangles whose edges are in $G_t$, those whose edges are in $G'$, and those with at least one edge with both endpoints in $X_t$. Using that there are at most $\binom{X_t}{2}$ edges between vertices in $X_t$, we get
\begin{align*}
 \OPTetp(G) &\leq \OPTetp(G_t) + \OPTetp(G') + \binom{X_t}{2}\\
 &\leq (1+\varepsilon)\OPTetp(G_t) + \OPTetp(G')\\
 &\leq c \cdot (1+\varepsilon) |S_t| + c \cdot (1+\varepsilon) |S'| \\
 &\leq c \cdot (1+\varepsilon) |S|. \qedhere
\end{align*}
\end{proof}

The strategy used  to obtain a kernel for \textsc{Edge-Disjoint Triangle Packing} can be generalized to packing larger cliques, as long as these problems have polynomial kernels when parameterized by solution size. Generalizing to the more general question of packing edge-disjoint copies of some other graph $H$ may be more difficult. In this case, there can be copies of $H$ that have vertices in both sides of the graph after removing the edges within a separator, and one needs to be careful to not discard too many of these.

\subsection{Connected Vertex Cover}

The \textsc{Connected Vertex Cover} (CVC) problem asks, given a graph $G$ and tree decomposition $\mathcal{T}$, for the minimum size of a vertex cover $S$ in $G$ such that $G[S]$ is connected. It is known that CVC has a $(1+\varepsilon)$-approximate kernel of polynomial size~\cite{DBLP:conf/stoc/LokshtanovPRS17}.
\begin{theorem}[{\cite{DBLP:conf/stoc/LokshtanovPRS17}}]\label{thm:cvc-psaks}
\textsc{Connected Vertex Cover} parameterized by solution size $k$ admits a strict time efficient PSAKS with $\Oh(k^{\lceil\frac{\alpha}{\alpha-1}\rceil}+k^2)$ vertices.
\end{theorem}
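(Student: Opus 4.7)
I would prove this by combining a classical vertex-cover preprocessing step with an approximation-tolerant reduction that collapses large classes of false-twin vertices whose sole purpose in a connected vertex cover (CVC) is to bridge parts of the cover. Set $d := \lceil \alpha/(\alpha-1) \rceil$; this will be the per-class budget of representatives kept by the reduction, chosen so that discarding the excess loses at most a multiplicative factor of $\alpha$ in the solution size.

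\textbf{Steps.} First I would apply the standard high-degree rule: any vertex $v$ of degree more than $k$ must belong to every CVC of size at most $k$ (otherwise all of its $>k$ neighbours would have to be in the cover). Iteratively add such vertices to a committed partial solution and delete them, decrementing $k$ accordingly. Once every remaining vertex has degree at most $k$, the inequality $|V(G)| \leq |S|(1+k) \leq k + k^2$ for any CVC $S$ of size at most $k$ yields the $\Oh(k^2)$ summand in the kernel bound. Second, I would perform the lossy twin reduction: compute any vertex cover $C$ of the current graph of size at most $2k$ (for example, both endpoints of a maximal matching), let $I := V(G)\setminus C$, and partition $I$ into equivalence classes under the relation $N(u) = N(v)$. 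For each class containing more than $d$ members, keep $d$ arbitrary representatives and delete the rest. The intuition is that a single vertex from such a class already ``stars'' its entire common neighbourhood, so no optimum CVC ever needs more than one representative per class; the remaining $d-1$ exist only as slack for the approximation argument.

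\textbf{Lifting and analysis.} Given a $c$-approximate CVC $S'$ of the reduced instance, lift it by re-adding the committed high-degree vertices and, for each twin class $T$ from which $S'$ selects at least one representative, inserting those original members of $T$ needed to preserve the connectivity witnessed in $S'$. A per-class charging argument then shows that the overhead introduced by the twin reduction is at most $1 + \tfrac{1}{d-1} \leq \alpha$, so the lifted solution is $\alpha \cdot c$-approximate with respect to $\OPTcvc(G)$. For the size bound, only classes whose common neighbourhood has at most $d$ vertices survive non-trivially (classes whose neighbourhood is larger are absorbed into $C$ by the degree bound), and there are $\Oh(k^d)$ such neighbourhoods, each contributing at most $d$ representatives. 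This produces a kernel with $\Oh(k^d + k^2)$ vertices as claimed.

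\textbf{Main obstacle.} The crux is Step~2: proving that the twin reduction is correct in the lossy sense \emph{and} simultaneously admits a polynomial kernel. Keeping only one representative per class would preserve the exact CVC optimum but force the lifting step to reinsert twin vertices in numbers that cannot be controlled by $\alpha \cdot c \cdot \OPTcvc(G)$; keeping more than $d$ representatives would preserve the approximation factor but inflate the kernel. The threshold $d = \lceil \alpha/(\alpha-1) \rceil$ is precisely where these two demands meet, and verifying this trade-off---bounding the lifted solution against $\alpha\cdot c\cdot\OPTcvc(G)$ by careful per-class accounting, and checking that both the reduction and the solution-lifting algorithm run in polynomial time and that the \emph{strict} property of the PSAKS holds---is the technical heart of the argument.
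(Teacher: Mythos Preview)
This theorem is not proved in the present paper; it is cited from Lokshtanov et al.\ and used as a black box. Your sketch has the right ingredients but assembles them incorrectly. The high-degree rule in Step~1 is unsafe for \textsc{Connected Vertex Cover}: while a vertex $v$ with $\deg(v)>k$ must lie in every CVC of size $\le k$, committing $v$ and \emph{deleting} it does not preserve solutions, because a CVC $S'$ of $G-v$ together with $v$ need not be connected in $G$. For a concrete failure, let $v$ be the centre of a star on leaves $u_1,\dots,u_{k+1}$ and add the path $u_1\text{--}w_1\text{--}w_2\text{--}u_2$; the optimum CVC of $G-v$ is $\{w_1,w_2\}$, yet $G[\{v,w_1,w_2\}]$ is disconnected. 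Observe also that if Step~1 \emph{were} correct, the bound $|V(G)|\le k+k^2$ you derive from it would already yield an $\Oh(k^2)$ kernel and make Step~2 redundant---a further sign that Step~1 is too strong to be true.

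The size analysis of Step~2 then conflates the two thresholds: after Step~1 vertices have degree at most $k$, not $d=\lceil\alpha/(\alpha-1)\rceil$, so a vertex of $I$ may have up to $k$ neighbours in the $2k$-vertex cover $C$, and the number of false-twin classes is potentially exponential in $k$ rather than $\Oh(k^{d})$; your assertion that ``classes whose neighbourhood is larger are absorbed into $C$ by the degree bound'' is unjustified. In the actual construction of Lokshtanov et al.\ the relevant threshold is $d$: vertices of $I$ with degree $\ge d$ are removed by a genuinely lossy rule (this is precisely where the factor $\alpha$ is spent), and only the remaining $I$-vertices of degree $<d$ are twin-reduced, which then gives $\Oh(k^{d})$ vertices. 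Your outline places the lossiness in the twin step and treats the high-degree step as exact; in the correct argument it is the other way around.
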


To obtain an approximate Turing kernel, we will use a similar strategy to the Turing kernel for \textsc{Vertex Cover} described in Theorem~\ref{thm:vc-tk}. However, the connectivity constraint makes this kernel somewhat more complicated. We deal with this by changing the procedure in two places. First of all, we will use a subconnected tree decomposition, to ensure that $G[V_t]$ is connected for any node $t$. We will then again find a subtree with a suitably-sized solution. In this case however, we will contract the separator between the subtree and the rest of the graph to a single vertex. The next lemma shows that this does not reduce the connected vertex cover size in the subtree by more than twice the size of the separator.

\begin{lemma}\label{lem:cvc:connect-by-plus-ell}
 Let $G$ be a connected graph and let $X \subseteq V(G)$. Given a connected vertex cover $S$ of $G_X$ where $G_X$ is obtained from $X$ by identifying all vertices from $X$ into a single vertex $z$, there is a polynomial-time algorithm that finds a connected vertex cover $S'$ of size at most $|S| + 2|X|$ of $G$.
\end{lemma}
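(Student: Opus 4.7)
The natural candidate is $S' := (S \setminus \{z\}) \cup X$. I would first verify that $S'$ is a vertex cover of $G$: edges entirely inside $X$ are covered because $X \subseteq S'$, and any remaining edge of $G$ corresponds to an edge of $G_X$ (with $X$-endpoints identified to $z$) and is therefore covered by $S \setminus \{z\} \subseteq S'$. Since $S \setminus \{z\}$ and $X$ are disjoint, this gives $|S'| \leq |S| + |X|$, leaving a budget of $|X|$ further vertices to spend on connectivity.

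The next step is to bound the number of connected components of $G[S']$. Contracting $X$ back to a single vertex $z$ inside $G[S']$ yields the graph $G_X[S \cup \{z\}]$; this is connected either because it equals $G_X[S]$ (when $z \in S$), or because the vertex-cover property of $S$ forces every neighbor of $z$ in $G_X$ to lie in $S$, so $z$ attaches to the connected subgraph $G_X[S]$ (with the degenerate case $X = V(G)$ handled separately). Consequently every connected component of $G[S']$ contains at least one vertex of $X$, so $G[S']$ has at most $|X|$ connected components.

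The crux is to show that one can always merge two components of $G[S']$ by adding a single vertex from $V(G) \setminus S'$. The plan is to take a shortest path $P = q_0, q_1, \dots, q_\ell$ in $G$ between two distinct components of $G[S']$ whose internal vertices lie outside $S'$; such a path exists since $G$ is connected and one can iteratively split at any internal $S'$-vertex to reach a shorter path between two different components. The main obstacle is to rule out length at least $3$: if two consecutive internal vertices $p, q$ existed, they would lie in $V(G) \setminus X$ (since $X \subseteq S'$) and outside $S \setminus \{z\}$ (since $p, q \notin S'$), so neither would be in $S$, yet the edge $\{p, q\}$ still appears in $G_X$, contradicting that $S$ is a vertex cover of $G_X$. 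Length $1$ is impossible because the endpoints would then be adjacent in $G[S']$, hence in the same component. So $\ell = 2$, and adding the unique interior vertex $q_1$ to $S'$ merges two components.

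Starting from at most $|X|$ components and iterating this greedy step reduces the component count to $1$ using at most $|X| - 1$ added vertices. The final set $S'$ is therefore a connected vertex cover of $G$ of size at most $|S| + |X| + (|X| - 1) \leq |S| + 2|X|$, and each iteration only requires a BFS in $G$, so the whole procedure runs in polynomial time.
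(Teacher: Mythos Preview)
Your proof is correct and follows essentially the same approach as the paper: start from $(S\setminus\{z\})\cup X$, argue that it is a vertex cover with at most $|X|$ components, and then repeatedly merge two components by adding a single vertex found via a short path argument exploiting the vertex-cover property. Your justification that each component meets $X$ (via contracting $X$ back to $z$ and noting connectivity of $G_X[S\cup\{z\}]$) is actually a bit more careful than the paper's, and your length-$2$ path argument is a clean variant of the paper's $y'',y',y$ triple, but the underlying idea is the same.
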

\begin{proof}
 Let $S$ be a connected vertex cover of $G_X$. Let $S'' := S \cup X\setminus \{z\}$. Observe that $S''$ is a vertex cover of $G$, such that every connected component of $G[S'']$ contains at least one vertex from $X$; thus, there are at most $|X|$ connected components. If $G[S'']$ is connected, we are done. 
 Otherwise, we show that there is a single vertex $v \in V(G)$ such that $G[S'' \cup \{v\}]$ has strictly fewer connected components than $G[S'']$. It is then straightforward to obtain $S'$ by repeatedly adding such a vertex, until $G[S'']$ is connected.  For any vertex $u \in S''$ define $C_u$ as the connected component of  vertex $u$ in $G[S'']$. 
 
 \begin{figure}
 \centering
  \includegraphics{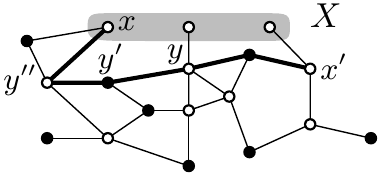}
  \caption{A graph with a vertex cover $S''$ (indicated in white) that is connected when all vertices in $X$ are identified into a single vertex. Shown are $x$, $x'$,$y$,$y'$,$y''$, and $P$ (indicated in bold) as used in the proof of Lemma~\ref{lem:cvc:connect-by-plus-ell}.}
  \label{fig:connect_VC}
 \end{figure}

  Let $x$ and $x'$ be in two distinct components in $G[S'']$, consider the shortest path $P$ from $x$ to $x'$ in $G$.  Refer to Figure~\ref{fig:connect_VC} for a sketch of the situation. By this definition, $C_x \neq C_{x'}$. Let $y$ be the first vertex in $P$ such that $y \in S''$ but $y\notin C_x$, let $y'$ be the vertex on $P$ before $y$, observe that $y' \notin S''$ since otherwise $y' \in S''$ and $y' \notin C_x$ which is a contradiction with the fact that $y$ is the first such vertex in $P$. Let $y''$ be the vertex on the path before $y'$, such that $P = (x,\ldots,y'',y',y,\ldots,x')$, where possibly $x=y''$ or $y=x'$. Observe that $y'' \in S''$ as otherwise edge $\{y'',y'\}$ is not covered, and therefore $y'' \in C_{x}$ since $y$ is the first vertex on $P$ that is in $S''$ but not in $C_x$.   Therefore, adding vertex $y'$ to $S''$ will merge connected components $C_x$ and $C_y$, such that the number of connected components in $G[S'' \cup \{y'\}]$ is strictly smaller than the number of connected components in $G[S'']$. In total, we add less than $|X|$ vertices to $S''$ obtain a connected vertex cover $S'$ and thus $|S'|\leq|S|+|X|$.
\end{proof}

We now prove the main result of this section. 

\begin{theorem}\label{thm:cvc-tk}
For every $0 < \varepsilon \leq 1$, \textsc{Connected Vertex Cover} parameterized by treewidth $\ell$ has a $(1+\varepsilon)$-approximate Turing Kernel with  $\Oh(\big(\frac{\ell^2}{\varepsilon}\big)^{\big\lceil\frac{3 + \varepsilon}{\varepsilon}\big\rceil})$ vertices.
\end{theorem}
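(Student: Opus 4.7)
The algorithm follows the peel-a-subtree-and-recurse scheme of Theorem~\ref{thm:vc-tk}, adapted to preserve connectivity via three ingredients: a subconnected tree decomposition from Theorem~\ref{lem:connected-td}, the lifting Lemma~\ref{lem:cvc:connect-by-plus-ell}, and the CVC-PSAKS of Theorem~\ref{thm:cvc-psaks} invoked with target ratio $\alpha_0:=1+\varepsilon/3$ so that its exponent equals $\lceil\alpha_0/(\alpha_0-1)\rceil=\lceil(3+\varepsilon)/\varepsilon\rceil$, matching the exponent in the claimed size bound.

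\textbf{Algorithm.} Precompute a subconnected tree decomposition of width $\le\ell$ in which every node has at most $2\ell+2$ children, and fix thresholds $L=\Theta((\ell+1)/\varepsilon)$ and $U=\Theta(\ell L)=\Theta(\ell^2/\varepsilon)$. For a node $s$, write $G_s$ for $G[V_s]$ with $X_s$ contracted to a single vertex $z$. Compute a $2$-approximate CVC $\tilde S$ of $G$. If $|\tilde S|\le U$, shrink $G$ via PSAKS at ratio $\alpha_0$, consult the oracle on the resulting instance, and return the lifted solution. Otherwise, descend the decomposition from the root while the current $2$-approximation of $\OPTcvc(G_t)$ exceeds $2U$; because the current node has at most $2\ell+2$ children and $G[V_t\setminus X_t]$ is the disjoint union of the $G[V_{t_i}\setminus X_{t_i}]$ over its children $t_i$, at least one child inherits a $1/(2\ell+2)$ fraction of the current $\OPT$ value, and the $2$-approximation exposes such a child (up to an extra factor $2$). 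The descent terminates at a node $t$ with $L\le\OPTcvc(G_t)\le U$. Apply PSAKS at ratio $\alpha_0$ and parameter $U$ to $G_t$, producing an instance of size $\Oh\!\big(U^{\lceil(3+\varepsilon)/\varepsilon\rceil}\big)=\Oh\!\big((\ell^2/\varepsilon)^{\lceil(3+\varepsilon)/\varepsilon\rceil}\big)$; query the oracle; lift through PSAKS and then through Lemma~\ref{lem:cvc:connect-by-plus-ell} to obtain $S_t\subseteq V_t$ with $X_t\subseteq S_t$ and $|S_t|\le c(1+\varepsilon/3)\OPTcvc(G_t)+2|X_t|$. Recurse on $G_{\bar t}:=G-(V_t\setminus X_t)$ with $X_t$ contracted to a single vertex; the associated tree decomposition (delete the subtree rooted at $t$ and replace $X_t$ by $z$ in each surviving bag) still has width $\le\ell$ and is subconnected because contraction preserves connectivity of each $G[V_s]$. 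Lift the recursive CVC of $G_{\bar t}$ via Lemma~\ref{lem:cvc:connect-by-plus-ell} to a CVC $S'$ of $G-(V_t\setminus X_t)$ with $X_t\subseteq S'$, and return $S:=S_t\cup S'$.

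\textbf{Correctness and approximation.} Since $X_t$ separates $V_t\setminus X_t$ from $V(G)\setminus V_t$, every edge of $G$ is covered by $S_t\cup S'$, and since $X_t\subseteq S_t\cap S'$ and each side is internally connected, the union is connected in $G$. For the ratio, I proceed by induction on $|V(G)|$ using the two key inequalities
\[
\OPTcvc(G_t)+\OPTcvc(G_{\bar t})\;\le\;\OPTcvc(G)+2 \qquad\text{and}\qquad |S|\;\le\;|S_t|+|S'|-|X_t|.
\]
The first is shown by intersecting an optimal CVC $S^\star$ of $G$ with each side and adding the contracted vertex; connectivity survives contraction because every path inside $S^\star$ that crosses between the two sides must pass through $X_t$ and therefore contracts to a path through $z$. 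Combining with the inductive bound $|S'|\le c(1+\varepsilon)\OPTcvc(G_{\bar t})+2|X_t|$ and writing $c(1+\varepsilon/3)=c(1+\varepsilon)-2c\varepsilon/3$ gives
\[
|S|\;\le\;c(1+\varepsilon)\OPTcvc(G)\;+\;2c(1+\varepsilon)\;-\;(2c\varepsilon/3)\,\OPTcvc(G_t)\;+\;3|X_t|,
\]
so the slack $(2c\varepsilon/3)\OPTcvc(G_t)\ge(2c\varepsilon/3)L$ absorbs the additive overhead $2c(1+\varepsilon)+3|X_t|$ once $L=\Omega((\ell+1)/\varepsilon)$, as guaranteed by the choice of $L$.

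\textbf{Main difficulty.} The trickiest point is coupling the subconnected decomposition to the oracle-feasibility bound. Because Theorem~\ref{lem:connected-td} only guarantees subconnectivity at the price of node degree $\Oh(\ell)$, one descent step can shrink $\OPTcvc(G_t)$ by a factor $\Omega(\ell)$ rather than $\Omega(1)$; this is precisely what inflates the piece-size upper bound from $\Oh(\ell/\varepsilon)$ in Theorem~\ref{thm:vc-tk} to $U=\Oh(\ell^2/\varepsilon)$ and thereby determines the base of the power in the claimed vertex count. The remaining work is bookkeeping: the additive $2|X_t|$ overhead of Lemma~\ref{lem:cvc:connect-by-plus-ell} is incurred on both sides of each split, and it is only by choosing the PSAKS ratio as $1+\varepsilon/3$ rather than $1+\varepsilon$ that the resulting slack suffices to drown it and recover an overall ratio of $c(1+\varepsilon)$.
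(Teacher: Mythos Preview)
Your overall approach mirrors the paper's: subconnected tree decomposition with node degree $\le 2\ell+2$, PSAKS at ratio $1+\varepsilon/3$ to obtain the stated exponent, Lemma~\ref{lem:cvc:connect-by-plus-ell} for lifting, contraction of $X_t$ on both sides, and recursion. Your accounting via the inequality $\OPTcvc(G_t)+\OPTcvc(G_{\bar t})\le\OPTcvc(G)+2$ is a clean variant of the paper's direct comparison against an optimal $S^\star$, and your arithmetic showing that the $\varepsilon/3$ slack absorbs the $O(\ell)$ additive overhead is correct.

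There is, however, a genuine gap in your descent step. You justify ``at least one child inherits a $1/(2\ell+2)$ fraction of the current $\OPT$ value'' by asserting that $G[V_t\setminus X_t]$ is the disjoint union of the $G[V_{t_i}\setminus X_{t_i}]$. Two problems: first, the subconnected decomposition of Theorem~\ref{lem:connected-td} is \emph{not} nice, so $X_{t_i}\neq X_t$ in general and the stated decomposition is simply false (the correct statement is $G[V_t\setminus X_t]=\bigsqcup_i G[V_{t_i}\setminus X_t]$). Second, and more seriously, even with the corrected decomposition the argument does not transfer to CVC: the optimum connected vertex cover of a disjoint union is not the sum of the pieces' optima (indeed it is undefined once two components carry edges), and you are comparing $\OPTcvc(G_t)$ against $\OPTcvc(G_{t_i})$, both of which live in contracted graphs, not in $G[V_t\setminus X_t]$. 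The paper closes this gap differently: it tests each child $t_i$ with the $2$-approximation, and if \emph{every} child yields a small CVC of $G_{t_i}$, it lifts each one via Lemma~\ref{lem:cvc:connect-by-plus-ell} to a CVC of $G[V_{t_i}]$ and takes their union together with $z_t$ to produce a CVC of $G_t$ of size at most $(2\ell+2)(L+2(\ell+1))+1$, contradicting the assumed lower bound on $\OPTcvc(G_t)$. In other words, Lemma~\ref{lem:cvc:connect-by-plus-ell} must be invoked \emph{inside} the descent, not only at the end; once you do this, your argument goes through and matches the paper's.
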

\begin{proof}
We will use the PSAKS for \textsc{Connected Vertex Cover} from Theorem~\ref{thm:cvc-psaks}. Recall that such a PSAKS consists of a reduction algorithm $R_\mathcal{A}$ together with a solution lifting algorithm $S_\mathcal{A}$. We will use the following claim.
\begin{claim}\label{claim:cvc:obtain-approx}
 Given $0 < \delta \leq 1$ and a connected graph $G$ with tree decomposition of width $\ell$, there is a polynomial-time algorithm to determine a $d$-approximate solution for CVC or correctly decide that $\OPTcvc(G) > \frac{100\ell^2}{\delta}$, when given access to a $c$-approximate CVC-oracle that allows calls using graphs with at most $\Oh(\big(\frac{\ell^2}{\delta}\big)^{\big\lceil\frac{1 + \delta}{\delta}\big\rceil})$ vertices, where $d = \min(c\cdot (1+\delta),2)$.
\end{claim}
\begin{claimproof}
Using the fact that CVC is $2$-approximable in polynomial time~\cite{DBLP:journals/ipl/Savage82}, obtain a $2$-approximate solution $\tilde{S}$ in $G$. If $|\tilde{S}| > 200\ell^2/\delta$, return \textsc{no} and halt. Otherwise, continue by 
running $R_\mathcal{A}$ on $(G,|\tilde{S}|)$ to obtain $(G',k')$. Observe that $G'$ has at most  $\Oh(\big(\frac{\ell^2}{\delta}\big)^{\big\lceil\frac{1 + \delta}{\delta}\big\rceil})$ many vertices. Apply the $c$-approximate oracle on $G'$ to obtain CVC $S'$ in $G'$. Obtain an approximate solution $S$ in $G$ by using the solution lifting algorithm on $G'$ and $S'$. 
Output the smallest solution of $S$ and $\tilde{S}$, let this be $\hat{S}$. We show that this has the desired approximation factor, which requires an argument since the PSAKS works for $\modcvc$ instead of CVC (recall $\modcvc(G,k,S) = \min\{k+1,\text{CVC}(G,S)\}$). Observe that $|\hat{S}| \leq |\tilde{S}|$, by definition. Therefore, $|\hat{S}| \leq \modcvc(G,|\tilde{S}|,S)$. Thus 
\[\frac{|\hat{S}|}{\OPTcvc(G)} \leq \frac{\modcvc(G,|\tilde{S}|,S)}{\OPTcvc(G)} \leq\frac{\modcvc(G,|\tilde{S}|,S)}{\modOPTcvc(G,|\tilde{S}|)}.
\]By correctness of the solution lifting algorithm, we get
\[\frac{\modcvc(G,|\tilde{S}|,S)}{\modOPTcvc(G,|\tilde{S}|)} \leq (1+\delta)\frac{\modcvc(G',k',S')}{\modOPTcvc(G',k')} \leq (1+\delta)\frac{|S'|}{\OPTcvc(G')} \leq c\cdot (1+\delta),\]
by correctness of the oracle.
\end{claimproof}

\subparagraph*{Algorithm} The algorithm now proceeds as follows.  Our goal is to find a subtree of $T$ for which on the one hand, the local optimum CVC is small enough to find an approximate solution using Claim~\ref{claim:cvc:obtain-approx}, but also large enough to be able to  (among other things) add the entire set $X_t$ to the solution, without introducing a too large error. Let $\delta := \varepsilon/3$. 

For any vertex $t \in V(T)$, let $G_t$ be the graph given by $G[V_t]$ after identifying all vertices from $X_t$ into a single vertex $z_t$. Apply Claim~\ref{claim:cvc:obtain-approx} to $G$, if it returns an approximate connected vertex cover of $G$, we are done. Otherwise, $\OPTcvc(G) > \frac{100\ell^2}{\delta}$. We now aim to find a vertex $t$ such that Claim~\ref{claim:cvc:obtain-approx}  returns an approximate solution in $G_t$ of size at least $\frac{10\ell}{\delta}$.

\begin{claim}
 \label{claim:cvc:find-t}
 There is a polynomial-time algorithm that, given $G$ with tree decomposition $\mathcal{T}$ of width $\ell$ such that $\OPTcvc(G) > \frac{100\ell^2}{\delta}$, finds $t \in V(T)$ for which Claim~\ref{claim:cvc:obtain-approx} returns an approximate solution $S_t$ with $|S_t| \geq \frac{10\ell}{\delta}$, using calls to a $c$-approximate oracle of size at most $\Oh(\big(\frac{\ell^2}{\delta}\big)^{\big\lceil\frac{1 + \delta}{\delta}\big\rceil})$.
\end{claim}
\begin{claimproof}
Start with $t := r$, note that since $\OPTcvc(G) > \frac{100\ell^2}{\delta}$ and  $G_r = G $, we have that $\OPTcvc(G_r) > \frac{100\ell^2}{\delta}$, where $r$ is the root of $T$. We search through the graph maintaining $\OPTcvc(G_t) > \frac{100\ell^2}{\delta}$. 
Let $t_1,\ldots,t_m$ be the children of $t$, recall that  we may assume $m \leq 2\ell+2$ by Theorem~\ref{lem:connected-td}. For each $t_i$, apply Claim~\ref{claim:cvc:obtain-approx}. Consider the following possibilities.
\begin{itemize}
 \item There exists $i \in [m]$ such that the claim determines $\OPTcvc(G_{t_i}) > \frac{100\ell^2}{\delta}$, in this case, recurse with this $t_i$.
 \item There exists $i \in [m]$ such that the claim returns a $\min\{2,(1+\delta)\cdot c\}$-approximate solution $S_{t_i}$ of size at least $\frac{10\ell}{\delta}$ for CVC. In this case, return $t := t_i$.
 \item Otherwise. Thus, for every $i \in [m]$, the algorithm returns a connected vertex cover $S_i$ of size at most $\frac{10\ell}{\delta}$ for CVC in $G_{t_i}$. Obtain a connected vertex cover $S_i'$ of $G[V_{t_i}]$ of size at most $|S_i| + 2(\ell+1)$ using Lemma~\ref{lem:cvc:connect-by-plus-ell}. We will argue that in this case $CVC(G_t) < \frac{55\ell^2}{\delta}$, which is a contradiction. We obtain a connected vertex cover of $G_t$ as follows. Let $\hat{S}_t := \bigcup_{i \in [m]} (S_i') \cup \{z_t\}$. Observe that $\hat{S}_t$ has size at most $(2\ell+2)\cdot\frac{13\ell}{\delta} + 1\leq \frac{55\ell^2}{\delta}$. It is easy to observe that $\hat{S}_t$ is indeed a connected vertex cover of $G_t$.
\end{itemize}
Observe that from the steps above, we always get a connected vertex cover $S_t$ of $G_t$, that is a $(1+\delta)\cdot c$-approximation of $\OPTcvc(G_t)$ and has size at least $\frac{10\ell}{\delta}$. 
\end{claimproof}

Using Claim~\ref{claim:cvc:find-t}, we obtain a node $t$ and a connected vertex cover $S_t$ of $G_t$, that is a $\min\{(1+\delta)\cdot c,2\}$-approximation of $\OPTcvc(G_t)$ and has size at least $\frac{10\ell}{\delta}$. Use Lemma~\ref{lem:cvc:connect-by-plus-ell} to  obtain a connected vertex cover $S_t'$ of $G[V_t]$ of size at most  $|S_t| + 2(\ell + 1)$, containing $X_t$. 

We now obtain graph $G'$ by removing all vertices in $V_t\setminus X_t$ from $G$ and then contracting all vertices in $X_t$ to a single vertex $z_t$. Let $\mathcal{T}'$ to be a tree decomposition of $G'$, one may obtain $\mathcal{T'}$ by  replacing occurrences of vertices in $V_t$ by $z_t$ in $\mathcal{T}$. 
Since $G'$ is strictly smaller than $G$, we may use the algorithm described above to obtain a $c\cdot(1+\varepsilon)$-approximate solution $S'$ for $\OPTcvc(G')$, using $\mathcal{T}'$.
Output $S:= S'\cup S_t' \setminus \{z_t\}$. 

\subparagraph*{Correctness} We start by showing that $S$ is a connected vertex cover. Verify that it is indeed a vertex cover of $G$: any edge within $G'$ is covered as $S' \subseteq S$, any edge in $G_t$ is covered since $S_t'\subseteq S$ and any other edge has at least one endpoint in $X_t \subseteq S$ and is thereby covered. It remains to verify that $G[S]$ is connected. Clearly, $G[V_t \cap S]$ is connected since it corresponds to $G[S_t']$.   Let $\tilde{G} := G - (V_t \setminus X_t)$. We show that every connected component of $\tilde{G}[S]$ contains at least one vertex from $X_t$, such that the entire graph is connected as $X_t \subseteq S$ and the vertices in $X_t$ are in the same connected component as observed earlier. Suppose not,  let $C$ be such a component not containing any vertex in $X_t$.
Consider $G'[S']$. Observe that $C$ is also a connected component of $G'[S']$. Furthermore, vertex $ z_t $ is not adjacent to any vertex in $C$, as otherwise there is an edge from some vertex in $C$ to some vertex in $X_t$ in $\tilde{G}$, since $X_t\subseteq S$ this contradicts that $C$ contains no vertex from $X_t$. Since $G'$ is connected however, $z_t$ has an incident edge $\{z_t,u\}$ for some $u \in V(G')$ and thus $u \in S'$ or $z_t \in S'$. In both cases there is a vertex in $S'$ that is not in connected component $C$, a contradiction with the assumption that $S'$ is a connected vertex cover of $G'$.

We now show that we indeed achieve the desired approximation factor. 
\begin{claim}\label{claim:cvc:size}
 $|S| \leq c \cdot (1+\varepsilon) \cdot \OPTcvc(G)$
\end{claim}
\begin{claimproof}
Let $S^*$ be a minimum connected vertex cover of $G$. Assume for now  $|S^* \cap V(G_t)| \geq 4/\delta$.
 \begin{align*}
  |S| &\leq |S_t'| + |S'| \\
  &\leq |S_t| + 2(\ell + 1) + c \cdot (1+\varepsilon)\OPTcvc(G')\\
  \intertext{Using $|S_t| \geq \frac{10\ell}{\delta}$}
  &\leq |S_t| + \frac{\delta}{2}|S_t| + c \cdot (1+\varepsilon)\OPTcvc(G')\\
  &\leq c \cdot (1+\delta)(1+\delta/2) \OPTcvc(G_t)+ c \cdot (1+\varepsilon)|(S^* \cap V(G'))\cup \{z_t\}|\\
  &\leq c \cdot (1+\delta)(1+\delta/2) |(S^*\cap V(G_t))\cup \{z_t\}|  + c \cdot (1+\varepsilon)|(S^* \cap  V(G'))\cup \{z_t\}|\\ 
   &\leq c \cdot (1+\delta)(1+\delta/2) (|S^*\cap V(G_t)|+1)  + c \cdot (1+\varepsilon)|(S^* \cap  V(G'))\cup \{z_t\}|\\ 
      \intertext{By assuming $|S^* \cap V(G_t)| \geq 4/\delta$, and then using $\delta = \varepsilon/3$}
      &\leq c \cdot (1+\delta)(1+\delta/2) (1+\delta/4)(|S^*\cap V(G_t)|)  + c \cdot (1+\varepsilon)|(S^* \cap  V(G'))\cup \{z_t\}|\\ 
      &\leq c \cdot (1+\varepsilon)(|S^*\cap V(G_t)|)  + c \cdot (1+\varepsilon)|(S^* \cap  V(G'))\cup \{z_t\}|\\
    \intertext{Observe that since $G_t$ and $G'$ are non-empty, $S^*$ must contain a vertex from $X_t$}
      &\leq c \cdot (1+\varepsilon) |S^*| = c \cdot (1+\varepsilon) \cdot \OPTcvc(G).
 \end{align*}
 It remains to observe that $|S^* \cap V(G_t)| \geq 4/\delta$ is a reasonable assumption. Suppose not, then $\OPTcvc(G_t) \leq |S^* \cap V(G_t)| + 1 \leq 4/\delta + 1$. However, $S_t \geq \frac{10\ell}{\delta}\geq 2 \cdot \OPTcvc(G_t)$, meaning that $S_t$ is not a $2$-approximation in $G_t$, which is a contradiction.
\end{claimproof}

Having shown the correctness of the procedure, it remains to argue the size of this Turing kernel. Observe that the oracle is only used when applying Claim~\ref{claim:cvc:obtain-approx}. As such, we may bound the size of the kernel by
 $\Oh(\big(\frac{\ell^2}{\delta}\big)^{\big\lceil\frac{1 + \delta}{\delta}\big\rceil}) =\Oh(\big(\frac{\ell^2}{\varepsilon}\big)^{\big\lceil\frac{3 + \varepsilon}{\varepsilon}\big\rceil})$, recall that $\delta = \frac{\varepsilon}{3}$.
\end{proof}

\section{Meta result}
\label{sec:general}
\newcommand{\ph}{\ensuremath{\varphi}\xspace}

In this section we will describe a wide range of graph problems for which approximate Turing kernels can be obtained. The problems we will consider satisfy certain additional constraints, such that the general strategy already described for the \textsc{Vertex Cover} problem can be applied. Informally speaking, we need the following requirements. First of all, the problems should behave nicely with respect to taking the disjoint union of graphs. Secondly, we want to look at what happens for induced subgraphs. We will only consider problems whose value cannot increase when taking an induced subgraph. Furthermore, we restrict how much the optimal value can decrease when taking an induced subgraph. Finally, we require existence of a PSAKS and an approximation algorithm for the problem. We use the following definitions.

\begin{definition}
 Let $\ph \colon \mathbb{R}\times \mathbb{N}\to\mathbb{R}$ be a function. A \emph{$\varphi$-approximation algorithm} for a problem $\P$ is a  polynomial-time algorithm that, given an instance $G$ with tree decomposition $\mathcal{T}$ of width $\ell$, outputs a solution $S$ such that (for minimization problems) $\P(G,S) \leq \ph(\OPTp(G),\ell)$, and (for maximization problems) $\ph(\P(G,S),\ell) \geq \OPTp(G)$. 
\end{definition}
To illustrate this definition, observe that since \textsc{Vertex Cover} has a $2$-approximation, this same approximation algorithm serves as a $\varphi$-approximation with $\varphi(s,\ell) = 2s$.  
We use the above definition to allow the approximation factor of the algorithm to depend on the size of the optimal solution and the treewidth of the considered graph.

We can now formally define our notion of a \emph{friendly} problem.

\begin{definition}
Let \P be an optimization problem whose input is a graph. We will say that it is \emph{\friendly} if it satisfies the following conditions.
\begin{enumerate}
 \item \label{nice:union} For all graphs $G$, $G_1$, and $G_2$ such that $G$ is the disjoint union of graphs $G_1$ and $G_2$,  $\OPTp(G) = \OPTp(G_1) + \OPTp(G_2)$. In particular, if $S_1$ is a solution for $G_1$ and $S_2$ is a solution for $G_2$, then $S_1\cup S_2$ is a solution for $G$ and \[\P(G, S_1 \cup S_2) = \P(G_1,S_1) + \P(G_2,S_2).\] In the other direction, given solution $S$ in $G$ it can efficiently be split into solutions $S_1$ in $G_1$ and $S_2$ in $G_2$ satisfying the above. For consistency, we require that the size of the optimal solution in the empty graph is zero.
\item \label{nice:induced} There exists a non-decreasing polynomial function $f$ such that for all graphs $G$, for all $X\subseteq V(G)$: \[\OPTp(G) \leq \OPTp(G-X) + f(|X|)\text{, and } \OPTp(G-X)\leq \OPTp(G).\]
In particular, for minimization problems there is a polynomial-time algorithm $\mathcal{A}$ that, given a solution $S'$ in $G-X$, outputs a solution $S$ for $G$ such that $\P(G,S) \leq \P(G-X,S') + f(|X|)$. For maximization problems we require that any solution $S$ for $G-X$ is also a solution for $G$ and $\P(G,S) = \P(G-X,S)$.
\item \label{nice:psaks} \modP parameterized by $k + \ell$, where $k$ is the solution value and $\ell$ is the treewidth, has a $(1+\delta)$-approximate kernel for all $\delta > 0$, that has $h(\delta,k + \ell)$ vertices for some  function $h$ that is polynomial in its second parameter.
\item \label{nice:approx} \P has a $\ph$-approximation algorithm 
for some polynomial function $\ph$ such that $\alpha\cdot \varphi(k,\ell) < \varphi(\alpha\cdot k,\ell)$ for all $\alpha > 1$, and $\varphi$ is non-decreasing in its first parameter.  
\end{enumerate}
\end{definition}
Observe that many well-known vertex subset problems fit in this framework. As an example, let us verify them for the \VC problem. The first point is immediate. For the second point, let $\mathcal{A}(G,X,S)$ output $S' := S \cup X$. Verify that indeed this satisfies the conditions with $f(|X|) = |X|$. The third point follows with some extra work from the fact that \VC has a kernel with $2k$ vertices, this kernel can then be shown to be $1$-approximate. For the last point, it is well-known that \VC has a $2$-approximation algorithm.

The next lemma will be used in a similar way as Claim~\ref{claim:vc:find-t} was used for \textsc{Vertex Cover}.
It shows how to find a suitable node $t$ of the tree decomposition,  such that we may split the graph at this node and recurse. Furthermore it gives an approximate solution for the subtree rooted at $t$.

%TODO lemma for maximization problems?
\begin{lemma}\label{lem:find_t_minimization}
  Let $\P$ be a \friendly graph optimization problem. There is a polynomial-time algorithm $\mathcal{B}$ with access to a $c$-approximate oracle. It takes as input a graph $G$ with nice tree decomposition $\mathcal{T}$ of width $\ell$ and a number $0 < \delta \leq 1$, and outputs either
  \begin{itemize}
  \item a node $t$  such that $\OPTp(G[V_t\setminus X_t]) \geq \frac{f(\ell + 1)}{\delta}$ together with a $(c \cdot (1+\delta))$-approximate solution $S_t$ to \P in $G[V_t \setminus X_t]$, or
  \item  a $c\cdot (1+\delta)$-approximate solution for $G$,
  \end{itemize}
using calls to the oracle on graphs with at most $h(\delta, \ph(k,\ell)+\ell)$ vertices, where $k = \frac{2f(\ell+1)}{\delta} + f(1)$.
\end{lemma}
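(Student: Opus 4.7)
The plan is to generalize the descent strategy of Claim~\ref{claim:vc:find-t} from \VC to an arbitrary \friendly optimization problem~\P, with the \ph-approximation from property~(4) playing the role of the $2$-approximation used in the \VC case. Starting at the root of the nice tree decomposition, I maintain a current node~$t$ and, at each step, run the \ph-approximation on $G[V_t\setminus X_t]$ to obtain a value $v_t$. The stopping condition (described for minimization; maximization is symmetric with $\ph(v_t,\ell)$ playing the role of the upper bound) is $v_t \le \ph(k,\ell)$ with $k=\frac{2f(\ell+1)}{\delta}+f(1)$. When it holds, $\OPTp(G[V_t\setminus X_t]) \le v_t \le \ph(k,\ell)$, so I feed $(G[V_t\setminus X_t],v_t)$ to the PSAKS from property~(3), obtaining a kernel on at most $h(\delta, v_t+\ell) \le h(\delta, \ph(k,\ell)+\ell)$ vertices, call the $c$-approximate oracle on this kernel, and lift via the PSAKS's solution-lifting algorithm to get a $c(1+\delta)$-approximate solution $S_t$ for \P on $G[V_t\setminus X_t]$. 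Converting the PSAKS's guarantee on \modP to one on \P uses that $v_t$ upper-bounds $\OPTp(G[V_t\setminus X_t])$, so the cap in \modP does not interfere. If $t$ is the root I return $S_t$ as case~(b); otherwise I output $(t, S_t)$ as case~(a).

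When $v_t > \ph(k,\ell)$, the non-decreasing property of $\ph$ forces $\OPTp(G[V_t\setminus X_t]) > k$, and I descend. Leaves never arise under this invariant since they would give $V_t\setminus X_t = \emptyset$ and hence $\OPTp = 0$. At an introduce or forget node I descend to the unique child~$t_1$; property~(2) of \friendly gives $\OPTp(G[V_{t_1}\setminus X_{t_1}]) \ge \OPTp(G[V_t\setminus X_t]) - f(1) > 2f(\ell+1)/\delta$. At a join node with children $t_1,t_2$, property~(1) yields the additive split $\OPTp(G[V_t\setminus X_t]) = \OPTp(G[V_{t_1}\setminus X_{t_1}]) + \OPTp(G[V_{t_2}\setminus X_{t_2}])$, so at least one child inherits at least half the parent's optimum; I run the \ph-approximation on both children and descend to whichever yields the larger value $v_{t_i}$, using $v_{t_1}+v_{t_2} \ge \OPTp(G[V_{t_1}\setminus X_{t_1}])+\OPTp(G[V_{t_2}\setminus X_{t_2}]) > k$ to conclude that this maximum exceeds $k/2$.

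The remaining task is to verify the lower bound $\OPTp(G[V_t\setminus X_t]) \ge f(\ell+1)/\delta$ in case~(a). Since the parent~$s$ of~$t$ triggered descent, $v_s > \ph(k,\ell)$ and hence $\OPTp(G[V_s\setminus X_s]) > k$. An introduce/forget step then gives $\OPTp(G[V_t\setminus X_t]) \ge \OPTp(G[V_s\setminus X_s])-f(1) > 2f(\ell+1)/\delta$, and a join step passes at least half the optimum on to the chosen child, giving $\OPTp(G[V_t\setminus X_t]) > k/2 \ge f(\ell+1)/\delta$; both bounds comfortably exceed the target. The hardest part will be the join case: the \ph-approximation gives only $v_{t_i} \le \ph(\OPTp(G[V_{t_i}\setminus X_{t_i}]),\ell)$, so translating the bound $v_{t_i} > k/2$ into the desired lower bound on $\OPTp(G[V_{t_i}\setminus X_{t_i}])$ requires inverting $\ph$ via the super-linearity condition built into property~(4), which is precisely why that condition is part of the definition of \friendly. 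Everything else---polynomial running time via a single root-to-leaf descent, the analysis at leaf and introduce/forget nodes, and the \modP-to-\P conversion---follows routinely from the \friendly axioms.
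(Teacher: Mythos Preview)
Your descent strategy is close to the paper's, but the join case for minimization has a genuine gap. You descend to the child with the larger approximate value $v_{t_i}$ and derive $v_{t_i}>k/2$ from $v_{t_1}+v_{t_2}\ge \OPTp(G_{t_1})+\OPTp(G_{t_2})>k$. From this you want $\OPTp(G_{t_i})\ge f(\ell+1)/\delta$, and you appeal to the super-linearity of $\ph$. But super-linearity only lets you pass from $v>\ph(k,\ell)/\alpha$ to $v>\ph(k/\alpha,\ell)$; it does nothing for the much weaker hypothesis $v>k/2$. Concretely, take $\ph(s,\ell)=(\ell+1)s$ and $f(x)=x$: from $\ph(\OPT_{t_i},\ell)\ge v_{t_i}>k/2$ you get only $\OPT_{t_i}>k/(2(\ell+1))\approx 1/\delta$, whereas you need $\OPT_{t_i}\ge(\ell+1)/\delta$. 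More fundamentally, the child with the larger $v_{t_i}$ need not be the child with the larger optimum: the $\ph$-approximation could be tight on one child and slack on the other, so your sentence ``a join step passes at least half the optimum on to the chosen child'' is simply not justified by your selection rule.

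The paper repairs exactly this point by a case split at the join node $p$ (where $v_p>\ph(k,\ell)$ but both children have $v_{t_i}\le\ph(k,\ell)$): if some child has $v_{t_i}>\ph(k,\ell)/2$, then super-linearity gives $v_{t_i}>\ph(k/2,\ell)$ and hence $\OPT_{t_i}>k/2$, so that child works; otherwise both $v_{t_i}\le\ph(k,\ell)/2$, and the paper uses $p$ itself, taking $\tilde{S}_{t_1}\cup\tilde{S}_{t_2}$ as the solution for $G[V_p\setminus X_p]$ with value at most $\ph(k,\ell)$, while $\OPT_p>k$ gives the required lower bound. Your pure descent cannot produce this second outcome. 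A smaller omission: when converting the PSAKS guarantee on $\modP$ back to $\P$, the lifted solution may exceed the cap $\ph(k,\ell)$, so you must also keep the $\ph$-approximate solution $\tilde{S}_t$ and return the better of the two (the paper does this explicitly).
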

We will prove the result separately for maximization and minimization problems.

\begin{proof}[Proof of Lemma~\ref{lem:find_t_minimization}: Maximization problems]
 Let $r$ be the root of $\mathcal{T}$, and observe that $G = G[V_r \setminus X_r]$ since $X_r = \emptyset$. Let $k:=\frac{2f(\ell+1)}{\delta} + f(1)$. 
 Compute a $\ph$-approximate solution $\tilde{S}$ in $G$. We do a case distinction on the value of this solution.
 
 %Case small
 If $\P(G,\tilde{S}) \leq k$, then 
 apply the PSAKS with approximation ratio $1+\delta$ to $(G,\ph(k,\ell) + \ell)$ and obtain instance $(G',k')$ with at most $h(\delta, \ph(k,\ell)+\ell)$ vertices. Obtain solution $S'$ by applying the  $c$-approximate oracle on $G'$. Apply the solution lifting algorithm to $S'$ to obtain a solution $S$ for $G$. We start by showing that $S$ is the desired approximate solution. 
 Clearly, $\P(G',S') \geq \frac{1}{c} \cdot \OPTp(G')$ by correctness of the oracle. If $\P(G',S') > k'$, then $\modP(G',k',S') = k' + 1$ and thus $\modP(G',k',S') \geq \modOPTp(G',k')$. Otherwise, we have 
  $\modP(G',k',S') = \P(G',S') \geq \frac{1}{c} \cdot \OPTp(G') \geq \frac{1}{c} \cdot \modOPTp(G',k')$. From the properties of the solution lifting algorithm, it now follows that  
 $\modP(G,\ph(k,\ell) + \ell,S) \geq \frac{1}{c(1+\delta)} \modOPTp(G,\ph(k,\ell) + \ell)$. %It follows from $\P(G,\tilde{S}) \leq k$ that $\OPTp(G) \leq \ph(\P(G,\tilde{S}),\ell)\leq \ph(k,\ell)$ and thus $\modOPTp(G,\ph(k,\ell) + \ell)=\OPTp(G)$. 
 Observe that since $\P(G,\tilde{S}) \leq k$ and $\varphi$ non-decreasing in its first parameter, we get that $\OPTp(G) \leq \varphi(\P(G,\tilde{S}),\ell) \leq \varphi(k,\ell)$ and thereby $\OPTp(G) = \modOPTp(G,\varphi(k,\ell) + \ell)$. 
 It follows that  $\P(G,S) \geq \modP(G,\ph(k,\ell) + \ell,S)\geq\frac{1}{c(1+\delta)} \modOPTp(G,\ph(k,\ell) + \ell)= \frac{1}{c(\delta + 1)}\OPTp(G)$.%, as desired.

 %Case large
 Suppose $\P(G,\tilde{S}) > k$. For every node $t \in T$, compute a $\ph$-approximate solution $\tilde{S}_t$ for graph $G[V_t\setminus X_t]$. We start by showing how to find a node $t \in V(T)$ such that both $\P(G[V_t\setminus X_t],\tilde{S}_t) \leq k$, and  $\OPTp(G[V_t \setminus X_t]) \geq \frac{f(\ell + 1)}{\delta}$. Start by observing that for the leaf vertices, it holds that $\P(G[V_t\setminus X_t],\tilde{S}_t) = 0 \leq k$. On the other hand, for the root, we found that $\P(G[V_r\setminus X_r],\tilde{S}_r) = \P(G,\tilde{S}) > k$. As such, we can find a node $p$  such that  $\P(G[V_p\setminus X_p],\tilde{S}_p) > k$, while for all of its children $t$ it holds that $\P(G[V_t\setminus X_t],\tilde{S}_t) \leq k$. We show that one of the children of $p$ has the desired properties.  The result that $\P(G[V_t\setminus X_t],\tilde{S}_t) \leq k$ for all children of $p$ is immediate. On the other hand, observe that $\OPTp(G[V_p\setminus X_p]) \geq \P(G[V_p\setminus X_p],\tilde{S}_p) \geq k\geq \frac{2f(\ell+1)}{\delta}$, by assumption. We do a case distinction on the type of node that $p$ is in the nice tree decomposition. 
 \begin{itemize}
  \item $p$ is an introduce or forget node. In this case, $p$ has exactly one child $t$ and $V_t\setminus X_t = V_p\setminus X_p$, or $V_t \setminus X_t = (V_p\setminus X_p)\setminus \{v\}$ for some $v \in V(G)$.  Since \P is friendly, we get that $\OPTp(G[V_t \setminus X_t]) \geq \OPTp(G[V_p \setminus X_p]) - f(1) \geq \frac{f(\ell + 1)}{\delta}$.
 \item $p$ is a join node. In this case, $p$ has exactly two children $t_1$ and $t_2$ and $G[V_p \setminus X_p]$ is the disjoint union of $G[V_{t_1} \setminus X_{t_1}]$ and $G[V_{t_2} \setminus X_{t_2}]$.
Obtain $S_1$ and $S_2$ such that $\tilde{S}_p = S_1 \cup S_2$ and $S_1$ is a solution in $G[V_{t_1} \setminus X_{t_1}]$, $S_2$ in  $G[V_{t_2} \setminus X_{t_2}]$, and $\P(G[V_p\setminus X_p],\tilde{S}_p) = \P(G[V_{t_1}\setminus X_{t_1}],S_1) + \P(G[V_{t_2}\setminus X_{t_2}],S_2)$. This can be done since $\mathcal{P}$ is \friendly.
  
 Therefore, there is $i \in [2]$ such that $\OPTp(G[V_{t_i}\setminus X_{t_i}]) \geq \P(G[V_{t_i}\setminus X_{t_i}], S_i)\geq \P(G[V_p\setminus X_p],\tilde{S}_p)/2\geq \frac{f(\ell + 1)}{\delta}$.
 \end{itemize}
 So, we have obtained a node $t$ such that $\P(G[V_t\setminus X_t],\tilde{S}_t) \leq k$, and  $\OPTp(G[V_t \setminus X_t]) \geq \frac{f(\ell + 1)}{\delta}$. We now show how to obtain $S_t$. Apply the PSAKS with ratio $1+\delta$ to $(G[V_t\setminus X_t], \ph(k,\ell) + \ell)$ and obtain instance $(G',k')$. Apply the $c$-approximate oracle on $G'$ to obtain a solution $S''$. Apply the solution lifting algorithm to $S''$ to obtain solution $S_t$ in $G[V_t\setminus X_t]$. With similar arguments as before, $S_t$ is a $c(1+\delta)$-approximate solution in $G[V_t\setminus X_t]$. Output $t$ and $S_t$.
\end{proof}

\begin{proof}[Proof of Lemma~\ref{lem:find_t_minimization}: Minimization problems]
%\todo{Fix issue with $\varphi$ has now two vars} 
 Let $r$ be the root of $\mathcal{T}$, and observe that $G = G[V_r \setminus X_r]$ since $X_r = \emptyset$. Let $k:=\frac{2f(\ell+1)}{\delta} + f(1)$. 
 Compute a $\ph$-approximate solution $\tilde{S}$ in $G$. We do a case distinction on the value of this solution.
 
 %Case small
 If $\P(G,\tilde{S}) \leq \ph(k,\ell)$, then 
 apply the PSAKS with with approximation ratio $1+\delta$ to $(G,\ph(k,\ell) + \ell)$ and obtain instance $(G',k')$. Obtain solution $S'$ by applying the $c$-approximate oracle on $G'$. Apply the solution lifting algorithm to $S'$ to obtain a solution $S$ for $G$, output the best of $\tilde{S}$ and $S$. We start by showing that this is correct. Clearly, $\P(G',S') \leq c \cdot \OPTp(G')$. If $\OPTp(G') > k'$, then  $\modP(G',k',S') = \modOPTp(G',k') =k' +1$. Otherwise, $\P(G',S') \geq \modP(G',k',S')$ and $\modOPTp(G',k') = \OPTp(G')$, such that 
 $\modP(G',k',S') \leq \P(G',S') \leq c \cdot \modOPTp(G',k')$. In both cases, 
we thus get that after applying the solution lifting algorithm we have  $\modP(G,\ph(k,\ell) + \ell,S) \leq c \cdot (1+\delta) \modOPTp(G,\ph(k,\ell) + \ell)$. If $\modP(G,\ph(k,\ell) + \ell,S) \leq \ph(k,\ell)+\ell$ this implies $\modP(G,\ph(k,\ell) + \ell,S)=\P(G,S)$ and indeed 
 $\P(G,S) \leq c \cdot (1+\delta) \OPTp(G)$, as desired. Otherwise, we see that $\P(G,\tilde{S})\leq \varphi(k,\ell) < \modP(G,\ph(k,\ell) + \ell,S) \leq c \cdot (1+\delta) \modOPTp(G,\ph(k,\ell) + \ell) \leq c \cdot (1+\delta) \OPTp(G)$, concluding this case.
 
 %Case large
 Suppose $\P(G,\tilde{S}) > \ph(k,\ell)$. For every node $t$ in tree decomposition $\mathcal{T}$, compute a $\ph$-approximate solution $\tilde{S}_t$ for graph $G[V_t\setminus X_t]$. We start by showing that there exists a $t \in V(T)$ such that on the one hand $\P(G[V_t\setminus X_t],\tilde{S}_t) \leq \ph(k,\ell)$, and on the other hand $\OPTp(G[V_t \setminus X_t]) \geq \frac{f(\ell + 1)}{\delta}$. Start by observing that for the leaf vertices, it holds that $\P(G[V_t\setminus X_t],\tilde{S}_t) = 0 \leq \ph(k,\ell)$. On the other hand, for the root, we found that $\P(G[V_r\setminus X_r],\tilde{S}_r) = \P(G,\tilde{S}) > \ph(k,\ell)$. As such, we can find a node $p$  such that  $\P(G[V_p\setminus X_p],\tilde{S}_p) > \ph(k,\ell)$, while for all of its children $t$ it holds that $\P(G[V_t\setminus X_t],\tilde{S}_t) \leq \ph(k,\ell)$. We show that one of the children of $p$, or $p$ itself for a different choice of $\tilde{S}_t$, has the desired properties.  The result that $\P(G[V_t\setminus X_t],\tilde{S}_t) \leq \ph(k,\ell)$ for all children of $p$ is immediate. On the other hand, observe that $\OPTp(G[V_p\setminus X_p]) \geq k$. We do a case distinction on the type of node that $p$ is in the nice tree decomposition.
 \begin{itemize}
  \item $p$ is an introduce or forget node. In this case, $p$ has exactly one child $t$ and $\OPTp(G[V_t \setminus X_t]) \geq \OPTp(G[V_p \setminus X_p]) - f(1) \geq k - f(1) \geq \frac{f(\ell + 1)}{\delta}$.
 \item $p$ is a join node. In this case, $p$ has exactly two children $t_1$ and $t_2$. If $\P(G[V_{t_i} \setminus X_{t_i}],\tilde{S}_{t_i})\leq \varphi(k,\ell)/2$ for $i\in[2]$, then $\P(G[V_p \setminus X_p],\tilde{S}_{t_1}\cup \tilde{S}_{t_2})\leq \varphi(k,\ell)$ and we may use $p$ as the desired node with solution $\tilde{S}_p := \tilde{S}_{t_1}\cup \tilde{S}_{t_2}$. Otherwise, there is $i\in[2]$ such that $\P(G[V_{t_i} \setminus X_{t_i}],\tilde{S}_{t_i})> \varphi(k,\ell)/2 \geq \varphi(\frac{k}{2},\ell)$, implying $\OPTp(G[V_{t_i} \setminus X_{t_i}]) \geq \frac{k}{2} \geq \frac{f(\ell+1)}{\delta}$.

 %$p$ is a join node. In this case, $p$ has exactly two children $t_1$ and $t_2$ and $\OPTp(G[V_p \setminus X_p]) =\OPTp(G[V_{t_1} \setminus X_{t_1}]) + \OPTp(G[V_{t_2} \setminus X_{t_2}])$ since $\P$ is \friendly. Therefore, there is a child $t$ of $p$ such that $\OPTp(G[V_t \setminus X_t]) \geq \OPTp(G[V_p \setminus X_p]) /2 \geq \frac{f(\ell + 1)}{\delta}$.
 \end{itemize}
 So, we have obtained a node $t$ such that $\P(G[V_t\setminus X_t],\tilde{S}_t) \leq \ph(k,\ell)$, and  $\OPTp(G[V_t \setminus X_t]) \geq \frac{f(\ell + 1)}{\delta}$. We now show how to obtain $S_t$. Apply the PSAKS with ratio $1+\delta$ to $(G[V_t\setminus X_t], \ph(k,\ell) + \ell)$ and obtain instance $(G',k')$. Apply the $c$-approximate oracle on $G'$ to obtain a solution $S''$. Apply the solution lifting algorithm to $S''$ to obtain a solution $S'$ in $G[V_t\setminus X_t]$. Let $S_t$ be the best solution out of $S'$ and $\tilde{S}_t$. With similar arguments as before, $S_t$ is a $c(1+\delta)$-approximate solution in $G[V_t\setminus X_t]$. Output $t$ and $S_t$.
\end{proof}

The next theorem gives a polynomial-size $(1+\varepsilon)$-approximate Turing kernel with parameter treewidth for any \friendly optimization problem \P. The Turing kernel follows the same ideas as the Turing kernels presented in the remainder of this paper, using Lemma~\ref{lem:find_t_minimization} to find a node in the tree decomposition where we can split the graph.
\begin{theorem}\label{thm:TK}
Let $\P$ be a \friendly optimization problem on graphs. Then \P parameterized by treewidth has a $(1+\varepsilon)$-approximate Turing kernel with $h(\frac{\varepsilon}{3}, \ph(\frac{6f(\ell+1)}{\varepsilon} + f(1),\ell)+\ell)$ vertices, for all $0 < \varepsilon \leq 1$. 
\end{theorem}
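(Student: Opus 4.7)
The plan is to turn Lemma~\ref{lem:find_t_minimization} into the engine of a recursion on the input graph. Set $\delta := \varepsilon/3$ and invoke the lemma on the current graph $G$ with its nice tree decomposition. If the lemma outputs a $c(1+\delta)$-approximate solution for $G$, return it, as $1+\delta \leq 1+\varepsilon$ already meets the required ratio. Otherwise the lemma supplies a node $t$ with a $c(1+\delta)$-approximate solution $S_t$ for $G[V_t\setminus X_t]$ and the guarantee $\OPTp(G[V_t\setminus X_t]) \geq f(\ell+1)/\delta$. I would then recurse on $G':=G-V_t$, equipped with the tree decomposition obtained by deleting the subtree rooted at $t$, to obtain by induction a $c(1+\varepsilon)$-approximate solution $S'$ for $G'$. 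Since $V_t\setminus X_t$ is non-empty (by the positivity of $\OPTp(G[V_t\setminus X_t])$ together with condition~\ref{nice:union}), $G'$ is strictly smaller than $G$ and the recursion terminates.

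To assemble a solution for $G$, note that $X_t$ separates $V_t\setminus X_t$ from $V(G)\setminus V_t$, so $G-X_t$ is the disjoint union of $G[V_t\setminus X_t]$ and $G'$. Condition~\ref{nice:union} of a \friendly problem certifies that $S_t\cup S'$ is a solution for $G-X_t$ with value $\P(G[V_t\setminus X_t],S_t)+\P(G',S')$, and $\OPTp(G-X_t)=\OPTp(G[V_t\setminus X_t])+\OPTp(G')$. In the maximization case, condition~\ref{nice:induced} transports $S_t\cup S'$ back to a solution in $G$ of the same value; in the minimization case, I would apply the algorithm $\mathcal{A}$ from condition~\ref{nice:induced} with $X=X_t$ to lift $S_t\cup S'$ to a solution $S$ for $G$ satisfying $\P(G,S) \leq \P(G[V_t\setminus X_t],S_t)+\P(G',S')+f(\ell+1)$.

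The key step of the analysis is to absorb the additive loss $f(\ell+1)$ into the lower bound $\OPTp(G[V_t\setminus X_t]) \geq f(\ell+1)/\delta$. For minimization, condition~\ref{nice:induced} gives $\OPTp(G)\geq \OPTp(G-X_t)$, and
\begin{align*}
 \P(G,S) &\leq c(1+\delta)\OPTp(G[V_t\setminus X_t]) + c(1+\varepsilon)\OPTp(G') + f(\ell+1)\\
 &\leq c(1+2\delta)\OPTp(G[V_t\setminus X_t]) + c(1+\varepsilon)\OPTp(G')\\
 &\leq c(1+\varepsilon)\OPTp(G),
\end{align*}
using $f(\ell+1)\leq\delta\cdot\OPTp(G[V_t\setminus X_t])\leq c\delta\cdot\OPTp(G[V_t\setminus X_t])$ and $1+2\delta \leq 1+\varepsilon$. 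The maximization case is symmetric: combining $\OPTp(G)\leq \OPTp(G[V_t\setminus X_t])+\OPTp(G')+f(\ell+1)$ with the inductive and local ratios reduces the goal to $\tfrac{\varepsilon-\delta}{1+\delta}\OPTp(G[V_t\setminus X_t])\geq f(\ell+1)$, which follows from $\OPTp(G[V_t\setminus X_t])\geq f(\ell+1)/\delta$ provided $\varepsilon\geq 2\delta+\delta^2$, and this holds for $\delta=\varepsilon/3$ and $\varepsilon\leq 1$.

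Finally, the oracle is only queried from within invocations of Lemma~\ref{lem:find_t_minimization}, each of which uses queries of size at most $h(\delta,\ph(k,\ell)+\ell)$ with $k=2f(\ell+1)/\delta+f(1)=6f(\ell+1)/\varepsilon+f(1)$, matching the bound in the theorem statement. The main obstacle is the delicate balancing of approximation factors: the local ratio $1+\delta$ must be strictly tighter than the global $1+\varepsilon$ so that the additive loss $f(\ell+1)$ from reintegrating $X_t$ can be paid out of the slack $\varepsilon-\delta$, and the choice $\delta=\varepsilon/3$ is what makes this work uniformly for both minimization and maximization \friendly problems.
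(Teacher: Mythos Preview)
Your proposal is correct and follows essentially the same approach as the paper: the same recursion driven by Lemma~\ref{lem:find_t_minimization} with $\delta=\varepsilon/3$, the same splitting at $X_t$, and the same absorption of the additive $f(\ell+1)$ loss into the lower bound $\OPTp(G[V_t\setminus X_t])\geq f(\ell+1)/\delta$. Your arithmetic in the maximization case is organized slightly differently (you reduce to $(\varepsilon-\delta)/(1+\delta)\cdot\OPTp\geq f(\ell+1)$ whereas the paper checks $(1+\delta)^2\leq 1+\varepsilon$), but the two are equivalent; the only cosmetic omission is that when forming the tree decomposition for $G'=G-V_t$ you should also delete the vertices of $X_t$ from the remaining bags, not just the subtree rooted at $t$.
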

While the description of the Turing kernel is mostly the same for maximization and minimization problems (refer to Algorithm~\ref{alg:approx-general-min}), the correctness proof will differ quite significantly. Therefore, these cases will be proven separately. %The proof for minimization problems is deferred to Appendix~\ref{app:sec:general}.
\begin{algorithm}
\caption{An approximate Turing kernel for \friendly optimization problems \P.}
\label{alg:approx-general-min}
\begin{algorithmic}[1]
\Procedure{$\textsc{ApproxP}(G,\mathcal{T},\varepsilon)$}{}
\State Turn $\mathcal{T}$ into a nice tree decomposition
\State Apply Lemma~\ref{lem:find_t_minimization} for $\delta:= \varepsilon/3$
\If {this outputs an approximate solution $S$ for $G$}
    \State \label{step:return_1}\Return $S$
\Else
     \label{step:P2:find_t}\emph{ // We obtained $t \in V(T)$, $c(1+\delta)$-approximate solution $S_t$ for \P in  $G[V_t\setminus X_t]$ } 
    
    \qquad \quad \emph{such that $\OPTp(G[V_t\setminus X_t]) \geq \frac{f(\ell + 1)}{\delta}$ }
    \State Let $G' := G - V_t$.
    \State Obtain $\mathcal{T}'$ from $\mathcal{T}$ by removing the subtree rooted at $t$ and all vertices in $X_t$
    \State Let $S' := \textsc{ApproxP}(G',\mathcal{T}',\varepsilon)$
    \State \label{step:return_2} \Return $S:= \mathcal{A}(G, X_t, S' \cup S_t)$
\EndIf
\EndProcedure
\end{algorithmic}
\end{algorithm}

\begin{proof}[Proof of Theorem~\ref{thm:TK}: Minimization problems]\label{proof:thm:TK}
 Let a graph $G$ with tree decomposition $\mathcal{T}$ be given. We show that Algorithm~\ref{alg:approx-general-min} is the desired approximate Turing kernel. It is easy to verify that all calls to the oracle have size at most $h(\frac{\varepsilon}{3}, \varphi(\frac{2f(\ell+1)}{\delta} + f(1),\ell)+\ell)$.   If we return a set $S$ in Step~\ref{step:return_1}, it is immediate from the correctness of Lemma~\ref{lem:find_t_minimization} that the returned solution is correct and has the right approximation ratio. Otherwise, a set is returned in Step~\ref{step:return_2}. Observe that in this case, $S_t$ is a solution for $G[V_t\setminus X_t]$ and $S'$ is a solution in $G-V_t$. Observe that the vertices $V_t \setminus X_t$ and $V(G) \setminus V_t$ are not in the same connected component in $G-X_t$. As such, since \P is \friendly, we obtain that $S_t \cup S'$ is a solution for \P in $G' := G - X_t$. Therefore, the returned solution $S$ is a solution for \P in $G$ of size at most $\P(G-X_t,S'\cup S_t) + f(|X_t|) \leq \P(G-X_t,S'\cup S_t) + f(\ell + 1)$. It remains to argue that $S$ has the desired value.
  \begin{align*}
  \P(G,S) &\leq \P(G-X_t, S' \cup S_t) + f(\ell + 1)\\
  &= \P(G-V_t, S') + \P(G[V_t\setminus X_t], S_t) + f(\ell + 1)\\
  &\leq c \cdot (1+\varepsilon) \cdot \OPTp(G-V_t) + (1+\delta) \cdot c \cdot \OPTp(G[V_t\setminus X_t]) + f(\ell + 1)\\
  &\leq c \cdot (1+\varepsilon) \cdot \OPTp(G-V_t) + (1+\delta) \cdot c \cdot \OPTp(G[V_t\setminus X_t])\\&\qquad + \delta \cdot \OPTp(G[V_t\setminus X_t])\\
  &\leq c \cdot (1+\varepsilon) \cdot (\OPTp(G-V_t) + \OPTp(G[V_t\setminus X_t]))\\
  &= c \cdot (1+\varepsilon) \cdot \OPTp(G-X_t) \leq c \cdot (1+\varepsilon) \cdot \OPTp(G).\qedhere
  \end{align*}
  \end{proof}
\begin{proof}[Proof of Theorem~\ref{thm:TK}: Maximization problems] Let \P be a \friendly maximization problem. We show that Algorithm~\ref{alg:approx-general-min} is the desired approximate Turing kernel, where we let $\mathcal{A}(G,X_t,S' \cup S_t)$ return $S' \cup S_t$.

It is easy to see that since \P is friendly, the algorithm indeed returns a correct solution for \P in $G$, it remains to prove the size bound.

 \begin{align*}
  \OPTp(G) &\leq \OPTp(G-X_t) + f(\ell + 1)\\
  &= \OPTp(G-V_t) + \OPTp(G[V_t\setminus X_t]) + f(\ell + 1)\\
  &\leq \OPTp(G-V_t) + (1+\delta) \cdot \OPTp(G[V_t\setminus X_t])\\
  &\leq c \cdot (1+\varepsilon)\cdot \P(G-V_t,S') + c \cdot (1+\delta)^2 \cdot \P(G[V_t \setminus X_t], S_t)\\
  &\leq c \cdot (1+\varepsilon) \cdot ( \P(G-V_t,S') + \P(G[V_t \setminus X_t], S_t))\\
  &= c \cdot (1+\varepsilon) \cdot ( \P(G-X_t, S'\cup S_t)) = c \cdot (1+\varepsilon) \cdot ( \P(G, S'\cup S_t)).\qedhere
 \end{align*}
\end{proof}

\subsection{Consequences}

We show that a number of well-known graph problems are \friendly in the next lemma.
\begin{lemma}\label{lem:general-result}
 The following problems are \friendly (with respect to the following bounds).
 \begin{itemize}
  \item \textsc{Independent Set} with $f(x) = x$, $h(\delta,m) = (m+1)^2$,  $\ph(s,\ell) = (\ell + 1)\cdot s$.
  \item \textsc{Vertex-Disjoint $H$-packing} for connected graphs $H$, with $|V(H)|$ constant, with $f(x) = x$, $h(\delta,k) = \Oh(k^{|V(H)|-1})$, $\ph(s,\ell) = |V(H)| \cdot s$.
  \item \textsc{Vertex Cover} with $f(x) = x$, $h(\delta,k) = 2k$, $\ph(s,\ell) = 2s$.
  \item \textsc{Clique Cover} with $f(x) = x$, $h(\delta,m) = m(m+1)$, $\ph(s,\ell) = (\ell+1) \cdot s$
  \item \textsc{Feedback Vertex Set} with $f(x)=x$, $h(\delta,k) = 4k^2$,  $\ph(s,\ell) = 2s$.
  \item \textsc{Edge Dominating Set} with $f(x) = x$,  $h(\delta,k) = 4k^2  +4k$, $\varphi(s,\ell) = 2s$.
 \end{itemize}
\end{lemma}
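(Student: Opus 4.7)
The plan is to verify the four conditions of the friendly definition for each of the six listed problems; each verification is short and uses only well-known facts, together with the $1$-approximate reinterpretation of standard kernels established in Appendix~\ref{sec:existing-kernels}.

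Conditions~(1) and~(2) are uniform across the list. Disjoint-union additivity (condition~(1)) is immediate: all six problems decompose along connected components, so a disjoint union of optimal solutions is optimal, and any solution can be split back component-wise. For condition~(2) I claim $f(x)=x$ works in every case. For the maximization problems \textsc{Independent Set} and \textsc{Vertex-Disjoint $H$-Packing}, any solution of $G-X$ is a solution of $G$ of the same value, while restriction of an optimum of $G$ to $G-X$ loses at most $|X|$. For the minimization/covering problems \textsc{Vertex Cover}, \textsc{Feedback Vertex Set}, \textsc{Clique Cover}, and \textsc{Edge Dominating Set}, restriction of a solution of $G$ to $G-X$ remains a solution in $G-X$, and conversely a solution in $G-X$ extends to one in $G$ by adding $X$ itself (VC, FVS), by adding singleton cliques (CC), or by adding one arbitrary incident edge per vertex of $X$ (EDS), at extra cost at most $|X|$.

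Condition~(3) is essentially a citation exercise. The $2k$-vertex LP-based kernel for \VC, the $4k^2$-vertex kernel for \textsc{Feedback Vertex Set}, the $4k^2+4k$-vertex kernel for \textsc{Edge Dominating Set}, and the $\Oh(k^{|V(H)|-1})$-vertex kernel for \textsc{Vertex-Disjoint $H$-Packing} are all $1$-approximate when viewed in the lossy kernelization framework (as verified in Appendix~\ref{sec:existing-kernels}); since their sizes are polynomial in $k$, they are also polynomial in $k+\ell$, which is all that condition~(3) requires. For \textsc{Independent Set} and \textsc{Clique Cover}, where no kernel polynomial in $k$ alone is expected, I use the combined parameter directly: any graph of treewidth $\ell$ is $\ell$-degenerate and hence admits an independent set (resp. a clique cover) of size at least $n/(\ell+1)$, so whenever $\OPT \leq k$ we automatically have $n \leq (\ell+1)(k+1) \leq (k+\ell+1)^2$, and the identity reduction is already a $1$-approximate kernel of the claimed size.

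Condition~(4) is handled by standard approximation algorithms: the textbook $2$-approximations for \VC, \textsc{Feedback Vertex Set}, and \textsc{Edge Dominating Set} (the maximal-matching algorithm for the last) give $\varphi(s,\ell)=2s$; the greedy maximal packing yields a $|V(H)|$-approximation for $H$-packing; and for \textsc{Independent Set} and \textsc{Clique Cover} the same degeneracy bound $\OPT \geq n/(\ell+1)$ combined with the trivial upper bound $\OPT \leq n$ gives approximation ratio $\ell+1$. The one technical point I expect to be the main obstacle is the scaling requirement $\alpha\cdot\varphi(k,\ell) < \varphi(\alpha\cdot k,\ell)$ for $\alpha>1$: since every $\varphi$ above is linear in its first argument, this holds only with equality. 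Inspecting the proof of Lemma~\ref{lem:find_t_minimization} shows that what is actually used is monotonicity together with the non-strict scaling $\alpha\cdot\varphi(k,\ell)\geq\varphi(\alpha\cdot k,\ell)$, which is satisfied with equality by all our $\varphi$; so Theorem~\ref{thm:TK} applies in each case and yields the advertised approximate Turing kernel.
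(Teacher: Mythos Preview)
Your approach mirrors the paper's: verify the four friendly conditions problem by problem, citing standard kernels (reinterpreted as $1$-approximate) and standard approximation algorithms. The paper does exactly this, with somewhat more detail per problem.

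There is one genuine gap. For \textsc{Edge Dominating Set} you assert that ``restriction of a solution of $G$ to $G-X$ remains a solution in $G-X$'', which would give $\OPTp(G-X)\leq\OPTp(G)$. This is false: if an edge $\{u,v\}$ of $G-X$ was dominated in $G$ by some $\{x,v\}\in S$ with $x\in X$, then after deleting $X$ nothing in the restricted set need dominate $\{u,v\}$ (take a path $a\text{--}b\text{--}c\text{--}d$ with $S=\{bc\}$ and $X=\{b\}$). The paper repairs this by, for each $\{x,v\}\in S$ with $x\in X$ and $v\notin X$, replacing that edge by an arbitrary edge of $G-X$ incident to $v$; this keeps $|S'|\leq|S|$ and restores domination. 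So the conclusion you need is correct, but your argument for it is not.

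A smaller point on condition~(4): you correctly observe that linear $\varphi$ only satisfies the scaling requirement with equality, but you then claim the proof of Lemma~\ref{lem:find_t_minimization} actually uses $\alpha\cdot\varphi(k,\ell)\geq\varphi(\alpha k,\ell)$. The direction is backwards: what is used is $\varphi(k,\ell)/2\geq\varphi(k/2,\ell)$, i.e.\ the non-strict version of the \emph{stated} inequality $\alpha\cdot\varphi(k,\ell)\leq\varphi(\alpha k,\ell)$, not its reverse. Since every $\varphi$ here is linear and you have equality, this is harmless, but the analysis as written is inverted.
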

\begin{proof}%[{Proof of Lemma \ref{lem:general-result}}]
\ 
 \begin{description}

  \item[Independent Set] Clearly, if $G$ is the disjoint union of two graphs $G_1$ and $G_2$, then the union of an independent set in $G_1$ and an independent set in $G_2$ forms an independent set in $G$. Conversely, restricting an independent set in $G$ to $V(G_1)$ (respectively $V(G_2)$) results in an independent set in $G_1$ (respectively, $G_2$).  Furthermore, if $X$ is a subset of $G$ it is easy to verify that $\optis(G) \leq \optis(G-X) + |X|$ and that $\optis(G-X) \leq \optis(G)$ as any independent set in $G-X$ is an independent set in $G$. The PSAKS parameterized by $m := k + \ell$ is as follows. It is known that any graph of treewidth $\ell$ has an independent set of size at least $|V(G)|/(\ell+1)$. This can be seen from the fact that such graphs are $\ell$-degenerate, meaning that there is an order of the vertices $v_1,\ldots,v_n$ such that $v_i$ has degree at most $\ell$ in $G[v_1,\ldots,v_i]$. As such, an independent set of size $|V(G)|/(\ell+1)$ can be greedily constructed.  
  
   Thus, if $|V(G)| > (m+1)^2$, we simply let $G'$ be the graph consisting of an independent set of size $m + 1$. The solution lifting algorithm can then simply find a size-$(m+1)$ independent set and output it. This is always an optimal solution for \modP, since it does not distinguish between solutions of size larger than $m$.  Otherwise, we obtain that $|V(G)| \leq (m+1)^2$ and the PSAKS will not modify $G$. In both cases, we output a graph on at most $(m+1)^2$ vertices.
  
  It remains to show that there is an approximation algorithm, the idea is equivalent to the PSAKS. Return an independent set in $G$ of size at least $|V(G)|/(\ell+1)$. Then indeed $\varphi(|V(G)|/(\ell+1),\ell) =  |V(G)| \geq \optis(G)$.
  
  \item[Vertex-Disjoint $H$-Packing] Requirements~\ref{nice:union} and~\ref{nice:induced} are easily verified for $f(|X|) = |X|$, as any vertex in $X$ could be contained in at most one graph in any copy of $H$. 
  
  A simple approximation algorithm for \textsc{Vertex-Disjoint $H$-Packing} is to simply return any maximal $H$-packing $S$. We show that $|S| \geq \frac{1}{|V(H)|}\OPTp(G)$, such that this is an $\varphi$-approximation algorithm  with $\varphi(s,\ell) = |V(H)| \cdot s$. Suppose there is an optimal solution $S^*$ with $|S^*| > |V(H)|\cdot |S|$. Since the copies of $H$ in $S$  are vertex-disjoint, $S$ uses exactly $|V(H)| \cdot |S|$ vertices. Since $S^*$ contains more than $|V(H)| \cdot |S|$ elements, it follows that there is $s \in S^*$ that uses no vertices used by $S$, contradicting that $S$ is maximal. %\todo{find citation}.
  
  The existence of a PSAKS is shown in Lemma~\ref{lem:packing:approxkernel}.

  \item[Vertex Cover] Requirements~\ref{nice:union} and \ref{nice:induced} are easily verified for vertex cover, let algorithm $\mathcal{A}$ simply output the union of the given solution with set $X$. As (implicitly) observed in the proof of Lemma~\ref{lem:approximate-vc}, \textsc{Vertex Cover} has a $1$-approximate kernel of size $2k$. Furthermore, it is well-known to be $2$-approximable.
  \item[Clique Cover] Requirement~\ref{nice:union} is easy to verify. We show Requirement~\ref{nice:induced}. Let $X \subseteq V(G)$. Let $S$ be a clique cover of $G$, it is easy to see that $\{s\setminus X \mid s \in S\}$ is a clique cover of $G-X$, of size at most $|S|$. Therefore, $\OPTp(G) \geq \OPTp(G-X)$. Furthermore, let algorithm $\mathcal{A}$ when given $G$, clique cover $S$ of $G-X$ and $X$ output the clique cover $S \cup \{\{x\} \mid x \in X\}$. Then this is a clique  cover of $G$ and it has size at most $|S| + |X| \leq |S| + f(|X|)$.
  
  To show Requirement~\ref{nice:psaks}, we obtain a $1$-approximate kernel for \textsc{Clique Cover} in a somewhat similar way as for \textsc{Independent Set}. Observe that any $n$-vertex graph with treewidth $\ell$ has a minimum clique cover of size at least $\frac{n}{\ell+1}$. So, given $G$ and parameter $m := k + \ell$, if $n > m(m+1) \geq k \cdot (\ell + 1)$, we know for sure that $G$ does not have a minimum  clique cover of size $k$. The reduction algorithm reduces $G$ to an independent set of size $m+1$. The solution lifting algorithm (irrespective of the solution given for $G'$) outputs $V(G)$. Otherwise, if $n \leq m(m+1)$ we simply let $G$ be the output of the reduction algorithm. Since the graph does not change, the solution lifting algorithm simply outputs the solution it is given. In both cases, the reduced instance has size at most $m(m+1)$.
  
  It remains to verify that there is a $\ph$-approximation algorithm for \textsc{Clique Cover}. Given a graph $G$ of treewidth $\ell$, we simply output $\{\{v\} \mid v \in V(G)\}$. Clearly, this is a valid clique cover of $G$ of size $|V(G)|$. Observe that since $G$ has treewidth $\ell$, $G$ contains no cliques of size larger than $\ell + 1$, thus any clique in the optimal clique cover of $G$ covers at most $\ell + 1$ vertices. As such, the optimal solution contains at least $\frac{|V(G)|}{\ell+1}$ cliques, and thus $|S| \leq (\ell+1)\OPTp(G)$.
  
  \item[Feedback Vertex Set] Requirements~\ref{nice:union} and~\ref{nice:induced} are straightforward to verify. The problem has a $1$-approximate kernel with $4k^2$ vertices and therefore a PSAKS by Lemma~\ref{lem:fvs:approxkernel}, showing Requirement~\ref{nice:psaks}. It is also known that the \textsc{Feedback Vertex Set} problem has a $2$-approximation algorithm \cite{BECKER1996167}, showing Requirement~\ref{nice:approx}.

  \item[Edge Dominating Set] Requirement~\ref{nice:union} is again straightforward. For the second requirement, let $G$ be a graph and let $X \subseteq V(G)$. We start by showing that $\OPTp(G) \geq \OPTp(G-X)$. Let $S$ be an edge-dominating set in $G$. We obtain an edge-dominating set $S'$ for $G-X$ as follows. Initialize $S'$ as the set of edges with both endpoints in $V(G)\setminus X$, so $S' := \{e \in S \mid e \cap X = \emptyset\}$. For every edge $\{x,v\} \in S$ with $x \in X, v \notin X$, choose one arbitrary edge $\{u,v\} \in E(G-X)$ and add $\{u,v\}$ to $S'$. If no such edge exists, do nothing. Clearly, $|S'| \leq |S|$. Furthermore, we show that $S'$ is indeed an edge dominating set. Suppose for contradiction that $e = \{u,v\}$ is not dominated in $G-X$ by $S'$. Let $\{w,v\} \in S$ be the edge dominating $\{u,v\}$ in $G$. Then, since $\{w,v\} \notin S'$, we have $w \in X$. But then some edge with endpoint $v$ was added to $S'$, a contradiction. 
  
  We continue by showing that $\OPTp(G) \leq \OPTp(G-X) + |X|$ and that algorithm $\mathcal{A}$ exists. Let $S$ be a solution for $G-X$, then algorithm $\mathcal{A}$ will output $S$ together with one edge $\{x,v\} \in E(G)$ for all $x \in X$. In the case that $x \in X$ is isolated in $G$, no edge is added for this vertex. By this definition, the output has size at most $|S| + |X|$. Furthermore, any edge with vertices in $V(G-X)$ is dominated by $S$. Any edge with at least one endpoint in $X$ is dominated by the additional edges.
  
\textsc{Edge Dominating Set} has a kernel that outputs a graph $G'$ of size at most $4k^2 + 4k$ such that $G'$ is an induced subgraph of $G$ and any size-$k$ edge dominating set in $G'$ is also an edge dominating set in $G$ \cite{DBLP:conf/mfcs/Hagerup12}. We can see that this is a $1$-approximate kernel. Let the solution lifting algorithm simply output the solution for $G'$ as a solution for $G$. Since any solution of size at most $k$ in $G'$ is a solution in $G$, and obviously any solution in $G$ is a solution for $G'$, it is clear that $\modOPTp(G',k) = \modOPTp(G,k)$. As such, the approximation ratio is preserved by the solution lifting algorithm.
  
It is known that even the weighted version of \textsc{Edge Dominating Set} can be $2$-approximated \cite{FUJITO2002199}, such that the problem has a $\varphi$-approximation for $\varphi(s,\ell) = 2s$.\qedhere
 \end{description}
\end{proof}

As an  immediate consequence of  Lemma~\ref{lem:general-result}  and Theorem~\ref{thm:TK}, we obtain approximate Turing kernels for a large number of graph problems. These results are summarized in the corollary below, the size bounds are obtained by substituting the relevant bounds given by Lemma~\ref{lem:general-result} into Theorem~\ref{thm:TK}.

\begin{corollary}
 The following problems have a polynomial $(1+\varepsilon)$-approximate Turing kernel for all $0 < \varepsilon \leq 1$, of the given size (in number of vertices), when parameterized by treewidth~$\ell$.
 \begin{itemize}
  \item \textsc{Independent Set}, of size $\Oh( \frac{\ell^4}{\varepsilon^2} )$. 
  \item \textsc{Vertex-Disjoint $H$-packing} for connected graphs $H$, of size $\Oh((\frac{\ell}{\varepsilon})^{|V(H)|-1})$.
  \item \textsc{Vertex Cover}, of size $ \Oh(\frac{\ell}{\varepsilon})$.
  \item \textsc{Clique Cover}, of size $\Oh( \frac{\ell^4}{\varepsilon^2} )$.
  \item \textsc{Feedback Vertex Set}, of size $\Oh(( \frac{\ell}{\varepsilon})^2)$.
  \item \textsc{Edge Dominating Set}, of size $\Oh((\frac{\ell}{\varepsilon})^2)$.
 \end{itemize}
\end{corollary}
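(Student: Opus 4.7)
The plan is to derive each bullet of the corollary as a direct instantiation of Theorem~\ref{thm:TK} using the parameters collected in Lemma~\ref{lem:general-result}. Theorem~\ref{thm:TK} gives a $(1+\varepsilon)$-approximate Turing kernel of size
\[
h\!\left(\tfrac{\varepsilon}{3},\, \varphi\!\left(\tfrac{6 f(\ell+1)}{\varepsilon} + f(1),\,\ell\right)+\ell\right),
\]
so the whole task reduces to substituting the given triples $(f,h,\varphi)$ and simplifying. Since for every listed problem $f$ is linear ($f(x)=x$), the ``inner'' parameter is always $\Theta(\ell/\varepsilon)$, and the ``outer'' size bound is determined by how $\varphi$ and $h$ blow this up.

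First I would handle the cases where $\varphi$ is linear in its first argument. For \textsc{Vertex Cover} we get inner argument $\Theta(\ell/\varepsilon)$ and $h(\delta,k)=2k$, giving $\Oh(\ell/\varepsilon)$. For \textsc{Feedback Vertex Set} with $h(\delta,k)=4k^2$ and \textsc{Edge Dominating Set} with $h(\delta,k)=4k^2+4k$ we again get inner argument $\Theta(\ell/\varepsilon)$, and the quadratic $h$ yields $\Oh((\ell/\varepsilon)^2)$. For \textsc{Vertex-Disjoint $H$-packing} the inner argument is still $\Theta(\ell/\varepsilon)$ (only the absolute constant scales with $|V(H)|$), and $h(\delta,k)=\Oh(k^{|V(H)|-1})$ produces $\Oh((\ell/\varepsilon)^{|V(H)|-1})$.

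For \textsc{Independent Set} and \textsc{Clique Cover}, where $\varphi(s,\ell)=(\ell+1)s$, the inner argument is
\[
\varphi\!\left(\tfrac{6(\ell+1)}{\varepsilon}+1,\ell\right)+\ell = \Theta\!\left(\tfrac{\ell^2}{\varepsilon}\right),
\]
and the quadratic size functions $h(\delta,m)=(m+1)^2$ and $h(\delta,m)=m(m+1)$ both yield $\Oh(\ell^4/\varepsilon^2)$. Collecting these numerical estimates exactly matches the claimed size bounds in the corollary.

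Finally, the correctness and oracle-size guarantees transfer automatically from Theorem~\ref{thm:TK}, since Lemma~\ref{lem:general-result} already verifies each of the four \friendly conditions for every problem listed. The only real work is the arithmetic substitution above; the one place where some care is needed is keeping track of the $+f(1)$ and $+\ell$ additive terms inside $\varphi$ so they do not secretly increase the exponent, which is straightforward because $f(1)$ is a constant and $\ell$ is absorbed by the dominant $\Theta(\ell/\varepsilon)$ term.
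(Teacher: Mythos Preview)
Your proposal is correct and is exactly the approach the paper takes: it explicitly states that the corollary is an immediate consequence of Lemma~\ref{lem:general-result} and Theorem~\ref{thm:TK}, with the size bounds obtained by substituting the triples $(f,h,\varphi)$ from the lemma into the formula $h(\tfrac{\varepsilon}{3},\varphi(\tfrac{6f(\ell+1)}{\varepsilon}+f(1),\ell)+\ell)$. Your case-by-case arithmetic matches, including the distinction between the problems with $\varphi(s,\ell)=\Theta(s)$ (yielding inner argument $\Theta(\ell/\varepsilon)$) and those with $\varphi(s,\ell)=(\ell+1)s$ (yielding inner argument $\Theta(\ell^2/\varepsilon)$).
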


We observe that the bounds for \textsc{Independent Set} and \textsc{Clique Cover} can be improved to $\Oh(\frac{\ell^2}{\varepsilon})$ by a more careful analysis. Instead of using that the problem is friendly and applying Lemma~\ref{lem:find_t_minimization}, one may simply find $t$ such that the number of vertices in $G[V_t\setminus X_t]$ is between $\frac{(\ell+1)^2}{\delta}$ and $\frac{10(\ell+1)^2}{\delta}$, and use that an optimal solution has size at least $|V(G)|/(\ell+1)$ for graphs of treewidth $\ell$. There is no need to apply a kernelization in this case.

\section{Conclusion}
In this paper we have provided  approximate Turing kernels for various graph problems when parameterized by treewidth. Furthermore, we give a general result that can be used to obtain approximate Turing kernels for all \friendly graph problems parameterized by treewidth.

While the notion of being \friendly captures many known graph problems, some interesting problems do not fit this definition. In particular, it is not clear whether the \textsc{Dominating Set} problem has a polynomial-size  constant-factor approximate Turing kernel when parameterized by treewidth. We leave this as an open problem. %The problem does not fit our overall framework, as the idea of finding a subtree of the tree decomposition in which the solution is large cannot be simply applied: removing vertices may increase the size of an optimal dominating set arbitrarily much. We leave this as an interesting open problem. \todo{First of all, do we want to mention this at all  and second of all, rewrite?}

\bibliography{references}

\begin{thebibliography}{10}

\bibitem{DBLP:conf/focs/2012}
{\em 53rd Annual {IEEE} Symposium on Foundations of Computer Science, {FOCS}
  2012, New Brunswick, NJ, USA, October 20-23, 2012}. {IEEE} Computer Society,
  2012.

\bibitem{DBLP:conf/soda/AgrawalM0Z19}
Akanksha Agrawal, Pranabendu Misra, Saket Saurabh, and Meirav Zehavi.
\newblock Interval vertex deletion admits a polynomial kernel.
\newblock In Timothy~M. Chan, editor, {\em Proceedings of the Thirtieth Annual
  {ACM-SIAM} Symposium on Discrete Algorithms, {SODA} 2019, San Diego,
  California, USA, January 6-9, 2019}, pages 1711--1730. {SIAM}, 2019.
\newblock \href {https://doi.org/10.1137/1.9781611975482.103}
  {\path{doi:10.1137/1.9781611975482.103}}.

\bibitem{DBLP:conf/esa/2018}
Yossi Azar, Hannah Bast, and Grzegorz Herman, editors.
\newblock {\em 26th Annual European Symposium on Algorithms, {ESA} 2018, August
  20-22, 2018, Helsinki, Finland}, volume 112 of {\em LIPIcs}. Schloss Dagstuhl
  - Leibniz-Zentrum f{\"{u}}r Informatik, 2018.

\bibitem{BECKER1996167}
Ann Becker and Dan Geiger.
\newblock Optimization of pearl's method of conditioning and greedy-like
  approximation algorithms for the vertex feedback set problem.
\newblock {\em Artificial Intelligence}, 83(1):167 -- 188, 1996.
\newblock \href {https://doi.org/10.1016/0004-3702(95)00004-6}
  {\path{doi:10.1016/0004-3702(95)00004-6}}.

\bibitem{Binkele-RaibleFFLSV12}
Daniel Binkele{-}Raible, Henning Fernau, Fedor~V. Fomin, Daniel Lokshtanov,
  Saket Saurabh, and Yngve Villanger.
\newblock Kernel(s) for problems with no kernel: On out-trees with many leaves.
\newblock {\em {ACM} Trans. Algorithms}, 8(4):38:1--38:19, 2012.
\newblock \href {https://doi.org/10.1145/2344422.2344428}
  {\path{doi:10.1145/2344422.2344428}}.

\bibitem{BodlaenderDFH09}
Hans~L. Bodlaender, Rodney~G. Downey, Michael~R. Fellows, and Danny Hermelin.
\newblock On problems without polynomial kernels.
\newblock {\em J. Comput. Syst. Sci.}, 75(8):423--434, 2009.
\newblock \href {https://doi.org/10.1016/j.jcss.2009.04.001}
  {\path{doi:10.1016/j.jcss.2009.04.001}}.

\bibitem{BodlaenderFLPST16_metakernelization}
Hans~L. Bodlaender, Fedor~V. Fomin, Daniel Lokshtanov, Eelko Penninkx, Saket
  Saurabh, and Dimitrios~M. Thilikos.
\newblock (meta) kernelization.
\newblock {\em J. {ACM}}, 63(5):44:1--44:69, 2016.
\newblock \href {https://doi.org/10.1145/2973749} {\path{doi:10.1145/2973749}}.

\bibitem{DBLP:journals/tcs/BodlaenderTY11}
Hans~L. Bodlaender, St{\'{e}}phan Thomass{\'{e}}, and Anders Yeo.
\newblock Kernel bounds for disjoint cycles and disjoint paths.
\newblock {\em Theor. Comput. Sci.}, 412(35):4570--4578, 2011.
\newblock \href {https://doi.org/10.1016/j.tcs.2011.04.039}
  {\path{doi:10.1016/j.tcs.2011.04.039}}.

\bibitem{DBLP:conf/esa/00010SY18}
Yixin Cao, Ashutosh Rai, R.~B. Sandeep, and Junjie Ye.
\newblock A polynomial kernel for diamond-free editing.
\newblock In Azar et~al. \cite{DBLP:conf/esa/2018}, pages 10:1--10:13.
\newblock \href {https://doi.org/10.4230/LIPIcs.ESA.2018.10}
  {\path{doi:10.4230/LIPIcs.ESA.2018.10}}.

\bibitem{DBLP:conf/wads/ChaplickFGK019}
Steven Chaplick, Fedor~V. Fomin, Petr~A. Golovach, Dusan Knop, and Peter Zeman.
\newblock Kernelization of graph hamiltonicity: Proper h-graphs.
\newblock In Zachary Friggstad, J{\"{o}}rg{-}R{\"{u}}diger Sack, and
  Mohammad~R. Salavatipour, editors, {\em Algorithms and Data Structures - 16th
  International Symposium, {WADS} 2019, Edmonton, AB, Canada, August 5-7, 2019,
  Proceedings}, volume 11646 of {\em Lecture Notes in Computer Science}, pages
  296--310. Springer, 2019.
\newblock \href {https://doi.org/10.1007/978-3-030-24766-9\_22}
  {\path{doi:10.1007/978-3-030-24766-9\_22}}.

\bibitem{ChenKJ99VertexCover}
Jianer Chen, Iyad~A. Kanj, and Weijia Jia.
\newblock Vertex cover: Further observations and further improvements.
\newblock In Peter Widmayer, Gabriele Neyer, and Stephan Eidenbenz, editors,
  {\em Graph-Theoretic Concepts in Computer Science}, pages 313--324, Berlin,
  Heidelberg, 1999. Springer Berlin Heidelberg.

\bibitem{DBLP:books/sp/CyganFKLMPPS15}
Marek Cygan, Fedor~V. Fomin, \L{}ukasz Kowalik, Daniel Lokshtanov, D{\'{a}}niel
  Marx, Marcin Pilipczuk, Micha\l\ Pilipczuk, and Saket Saurabh.
\newblock {\em Parameterized Algorithms}.
\newblock Springer, 2015.
\newblock \href {https://doi.org/10.1007/978-3-319-21275-3}
  {\path{doi:10.1007/978-3-319-21275-3}}.

\bibitem{CyganKPPW14}
Marek Cygan, Stefan Kratsch, Marcin Pilipczuk, Micha\l{} Pilipczuk, and Magnus
  Wahlstr{\"{o}}m.
\newblock Clique cover and graph separation: New incompressibility results.
\newblock {\em ACM Trans. Comput. Theory}, 6(2):6:1--6:19, 2014.
\newblock \href {https://doi.org/10.1145/2594439} {\path{doi:10.1145/2594439}}.

\bibitem{DomLS14}
Michael Dom, Daniel Lokshtanov, and Saket Saurabh.
\newblock Kernelization lower bounds through colors and {ID}s.
\newblock {\em {ACM} Trans. Algorithms}, 11(2):13:1--13:20, 2014.
\newblock \href {https://doi.org/10.1145/2650261} {\path{doi:10.1145/2650261}}.

\bibitem{DBLP:conf/mfcs/EibenHR17}
Eduard Eiben, Danny Hermelin, and M.~S. Ramanujan.
\newblock Lossy kernels for hitting subgraphs.
\newblock In Kim~G. Larsen, Hans~L. Bodlaender, and Jean{-}Fran{\c{c}}ois
  Raskin, editors, {\em 42nd International Symposium on Mathematical
  Foundations of Computer Science, {MFCS} 2017, August 21-25, 2017 - Aalborg,
  Denmark}, volume~83 of {\em LIPIcs}, pages 67:1--67:14. Schloss Dagstuhl -
  Leibniz-Zentrum fuer Informatik, 2017.
\newblock \href {https://doi.org/10.4230/LIPIcs.MFCS.2017.67}
  {\path{doi:10.4230/LIPIcs.MFCS.2017.67}}.

\bibitem{DBLP:journals/siamdm/EibenKMPS19}
Eduard Eiben, Mithilesh Kumar, Amer~E. Mouawad, Fahad Panolan, and Sebastian
  Siebertz.
\newblock Lossy kernels for connected dominating set on sparse graphs.
\newblock {\em {SIAM} J. Discrete Math.}, 33(3):1743--1771, 2019.
\newblock \href {https://doi.org/10.1137/18M1172508}
  {\path{doi:10.1137/18M1172508}}.

\bibitem{DBLP:journals/siamcomp/FeigeHL08}
Uriel Feige, MohammadTaghi Hajiaghayi, and James~R. Lee.
\newblock Improved approximation algorithms for minimum weight vertex
  separators.
\newblock {\em {SIAM} J. Comput.}, 38(2):629--657, 2008.
\newblock \href {https://doi.org/10.1137/05064299X}
  {\path{doi:10.1137/05064299X}}.

\bibitem{Fellows2018}
Michael~R. Fellows, Lars Jaffke, Aliz~Izabella Kir{\'a}ly, Frances~A. Rosamond,
  and Mathias Weller.
\newblock {\em What Is Known About Vertex Cover Kernelization?}, pages
  330--356.
\newblock Springer International Publishing, Cham, 2018.
\newblock \href {https://doi.org/10.1007/978-3-319-98355-4_19}
  {\path{doi:10.1007/978-3-319-98355-4_19}}.

\bibitem{DBLP:conf/focs/FominLMS12}
Fedor~V. Fomin, Daniel Lokshtanov, Neeldhara Misra, and Saket Saurabh.
\newblock Planar {F}-deletion: Approximation, kernelization and optimal {FPT}
  algorithms.
\newblock In {\em 53rd Annual {IEEE} Symposium on Foundations of Computer
  Science, {FOCS} 2012, New Brunswick, NJ, USA, October 20-23, 2012\/}
  \cite{DBLP:conf/focs/2012}, pages 470--479.
\newblock \href {https://doi.org/10.1109/FOCS.2012.62}
  {\path{doi:10.1109/FOCS.2012.62}}.

\bibitem{fomin_lokshtanov_saurabh_zehavi_2019}
Fedor~V. Fomin, Daniel Lokshtanov, Saket Saurabh, and Meirav Zehavi.
\newblock {\em Kernelization: Theory of Parameterized Preprocessing}.
\newblock Cambridge University Press, 2019.
\newblock \href {https://doi.org/10.1017/9781107415157}
  {\path{doi:10.1017/9781107415157}}.

\bibitem{FortnowS11}
Lance Fortnow and Rahul Santhanam.
\newblock Infeasibility of instance compression and succinct pcps for {NP}.
\newblock {\em J. Comput. Syst. Sci.}, 77(1):91--106, 2011.
\newblock \href {https://doi.org/10.1016/j.jcss.2010.06.007}
  {\path{doi:10.1016/j.jcss.2010.06.007}}.

\bibitem{DBLP:conf/latin/FraigniaudN06}
Pierre Fraigniaud and Nicolas Nisse.
\newblock Connected treewidth and connected graph searching.
\newblock In {\em {LATIN} 2006: Theoretical Informatics, 7th Latin American
  Symposium, Valdivia, Chile, March 20-24, 2006, Proceedings}, pages 479--490,
  2006.
\newblock \href {https://doi.org/10.1007/11682462\_45}
  {\path{doi:10.1007/11682462\_45}}.

\bibitem{FUJITO2002199}
Toshihiro Fujito and Hiroshi Nagamochi.
\newblock A 2-approximation algorithm for the minimum weight edge dominating
  set problem.
\newblock {\em Discrete Applied Mathematics}, 118(3):199 -- 207, 2002.
\newblock \href {https://doi.org/https://doi.org/10.1016/S0166-218X(00)00383-8}
  {\path{doi:https://doi.org/10.1016/S0166-218X(00)00383-8}}.

\bibitem{DBLP:conf/mfcs/Hagerup12}
Torben Hagerup.
\newblock Kernels for edge dominating set: Simpler or smaller.
\newblock In {\em Mathematical Foundations of Computer Science 2012 - 37th
  International Symposium, {MFCS} 2012, Bratislava, Slovakia, August 27-31,
  2012. Proceedings}, pages 491--502, 2012.
\newblock \href {https://doi.org/10.1007/978-3-642-32589-2\_44}
  {\path{doi:10.1007/978-3-642-32589-2\_44}}.

\bibitem{HermelinKSWW15}
Danny Hermelin, Stefan Kratsch, Karolina Soltys, Magnus Wahlstr{\"{o}}m, and
  Xi~Wu.
\newblock A completeness theory for polynomial (turing) kernelization.
\newblock {\em Algorithmica}, 71(3):702--730, 2015.
\newblock \href {https://doi.org/10.1007/s00453-014-9910-8}
  {\path{doi:10.1007/s00453-014-9910-8}}.

\bibitem{DBLP:conf/stacs/HolsK19}
Eva{-}Maria~C. Hols and Stefan Kratsch.
\newblock On kernelization for edge dominating set under structural parameters.
\newblock In Niedermeier and Paul \cite{DBLP:conf/stacs/2019}, pages
  36:1--36:18.
\newblock \href {https://doi.org/10.4230/LIPIcs.STACS.2019.36}
  {\path{doi:10.4230/LIPIcs.STACS.2019.36}}.

\bibitem{DBLP:journals/jcss/Jansen17}
Bart M.~P. Jansen.
\newblock Turing kernelization for finding long paths and cycles in restricted
  graph classes.
\newblock {\em J. Comput. Syst. Sci.}, 85:18--37, 2017.
\newblock \href {https://doi.org/10.1016/j.jcss.2016.10.008}
  {\path{doi:10.1016/j.jcss.2016.10.008}}.

\bibitem{DBLP:conf/soda/JansenM15}
Bart M.~P. Jansen and D{\'{a}}niel Marx.
\newblock Characterizing the easy-to-find subgraphs from the viewpoint of
  polynomial-time algorithms, kernels, and turing kernels.
\newblock In Piotr Indyk, editor, {\em Proceedings of the Twenty-Sixth Annual
  {ACM-SIAM} Symposium on Discrete Algorithms, {SODA} 2015, San Diego, CA, USA,
  January 4-6, 2015}, pages 616--629. {SIAM}, 2015.
\newblock \href {https://doi.org/10.1137/1.9781611973730.42}
  {\path{doi:10.1137/1.9781611973730.42}}.

\bibitem{DBLP:conf/esa/JansenP18}
Bart M.~P. Jansen and Astrid Pieterse.
\newblock Polynomial kernels for hitting forbidden minors under structural
  parameterizations.
\newblock In Azar et~al. \cite{DBLP:conf/esa/2018}, pages 48:1--48:15.
\newblock \href {https://doi.org/10.4230/LIPIcs.ESA.2018.48}
  {\path{doi:10.4230/LIPIcs.ESA.2018.48}}.

\bibitem{DBLP:conf/stacs/JansenPL19}
Bart M.~P. Jansen, Marcin Pilipczuk, and Erik~Jan van Leeuwen.
\newblock A deterministic polynomial kernel for odd cycle transversal and
  vertex multiway cut in planar graphs.
\newblock In Niedermeier and Paul \cite{DBLP:conf/stacs/2019}, pages
  39:1--39:18.
\newblock \href {https://doi.org/10.4230/LIPIcs.STACS.2019.39}
  {\path{doi:10.4230/LIPIcs.STACS.2019.39}}.

\bibitem{DBLP:conf/focs/KratschW12}
Stefan Kratsch and Magnus Wahlstr{\"{o}}m.
\newblock Representative sets and irrelevant vertices: New tools for
  kernelization.
\newblock In {\em 53rd Annual {IEEE} Symposium on Foundations of Computer
  Science, {FOCS} 2012, New Brunswick, NJ, USA, October 20-23, 2012\/}
  \cite{DBLP:conf/focs/2012}, pages 450--459.
\newblock \href {https://doi.org/10.1109/FOCS.2012.46}
  {\path{doi:10.1109/FOCS.2012.46}}.

\bibitem{LIN201920}
Weibo Lin and Mingyu Xiao.
\newblock A $(3+\varepsilon)k$-vertex kernel for edge-disjoint triangle
  packing.
\newblock {\em Information Processing Letters}, 142:20 -- 26, 2019.
\newblock \href {https://doi.org/10.1016/j.ipl.2018.10.006}
  {\path{doi:10.1016/j.ipl.2018.10.006}}.

\bibitem{DBLP:conf/stoc/LokshtanovPRS17}
Daniel Lokshtanov, Fahad Panolan, M.~S. Ramanujan, and Saket Saurabh.
\newblock Lossy kernelization.
\newblock In {\em Proceedings of the 49th Annual {ACM} {SIGACT} Symposium on
  Theory of Computing, {STOC} 2017, Montreal, QC, Canada, June 19-23, 2017},
  pages 224--237, 2017.
\newblock \href {https://doi.org/10.1145/3055399.3055456}
  {\path{doi:10.1145/3055399.3055456}}.

\bibitem{10.1007/978-3-540-28639-4_12}
Luke Mathieson, Elena Prieto, and Peter Shaw.
\newblock Packing edge disjoint triangles: A parameterized view.
\newblock In Rod Downey, Michael Fellows, and Frank Dehne, editors, {\em
  Parameterized and Exact Computation}, pages 127--137, Berlin, Heidelberg,
  2004. Springer Berlin Heidelberg.

\bibitem{DBLP:conf/sofsem/Moser09}
Hannes Moser.
\newblock A problem kernelization for graph packing.
\newblock In {\em {SOFSEM} 2009: Theory and Practice of Computer Science, 35th
  Conference on Current Trends in Theory and Practice of Computer Science,
  Spindleruv Ml{\'{y}}n, Czech Republic, January 24-30, 2009. Proceedings},
  pages 401--412, 2009.
\newblock \href {https://doi.org/10.1007/978-3-540-95891-8\_37}
  {\path{doi:10.1007/978-3-540-95891-8\_37}}.

\bibitem{Nemhauser1974}
G.~L. Nemhauser and L.~E. Trotter.
\newblock Properties of vertex packing and independence system polyhedra.
\newblock {\em Mathematical Programming}, 6(1):48--61, Dec 1974.
\newblock \href {https://doi.org/10.1007/BF01580222}
  {\path{doi:10.1007/BF01580222}}.

\bibitem{DBLP:conf/stacs/2019}
Rolf Niedermeier and Christophe Paul, editors.
\newblock {\em 36th International Symposium on Theoretical Aspects of Computer
  Science, {STACS} 2019, March 13-16, 2019, Berlin, Germany}, volume 126 of
  {\em LIPIcs}. Schloss Dagstuhl - Leibniz-Zentrum f{\"{u}}r Informatik, 2019.

\bibitem{DBLP:conf/esa/Ramanujan19}
M.~S. Ramanujan.
\newblock An approximate kernel for connected feedback vertex set.
\newblock In Michael~A. Bender, Ola Svensson, and Grzegorz Herman, editors,
  {\em 27th Annual European Symposium on Algorithms, {ESA} 2019, September
  9-11, 2019, Munich/Garching, Germany}, volume 144 of {\em LIPIcs}, pages
  77:1--77:14. Schloss Dagstuhl - Leibniz-Zentrum f{\"{u}}r Informatik, 2019.
\newblock \href {https://doi.org/10.4230/LIPIcs.ESA.2019.77}
  {\path{doi:10.4230/LIPIcs.ESA.2019.77}}.

\bibitem{DBLP:journals/ipl/Savage82}
Carla~D. Savage.
\newblock Depth-first search and the vertex cover problem.
\newblock {\em Inf. Process. Lett.}, 14(5):233--237, 1982.
\newblock \href {https://doi.org/10.1016/0020-0190(82)90022-9}
  {\path{doi:10.1016/0020-0190(82)90022-9}}.

\bibitem{10.1145/1721837.1721848}
St\'{e}phan Thomass\'{e}.
\newblock A 4k2 kernel for feedback vertex set.
\newblock {\em ACM Trans. Algorithms}, 6(2), April 2010.
\newblock \href {https://doi.org/10.1145/1721837.1721848}
  {\path{doi:10.1145/1721837.1721848}}.

\bibitem{DBLP:journals/algorithmica/ThomasseTV17}
St{\'{e}}phan Thomass{\'{e}}, Nicolas Trotignon, and Kristina Vuskovic.
\newblock A polynomial turing-kernel for weighted independent set in bull-free
  graphs.
\newblock {\em Algorithmica}, 77(3):619--641, 2017.
\newblock \href {https://doi.org/10.1007/s00453-015-0083-x}
  {\path{doi:10.1007/s00453-015-0083-x}}.

\bibitem{DBLP:conf/sofsem/WitteveenBT19}
Jouke Witteveen, Ralph Bottesch, and Leen Torenvliet.
\newblock A hierarchy of polynomial kernels.
\newblock In Barbara Catania, Rastislav Kr{\'{a}}lovic, Jerzy~R. Nawrocki, and
  Giovanni Pighizzini, editors, {\em {SOFSEM} 2019: Theory and Practice of
  Computer Science - 45th International Conference on Current Trends in Theory
  and Practice of Computer Science, Nov{\'{y}} Smokovec, Slovakia, January
  27-30, 2019, Proceedings}, volume 11376 of {\em Lecture Notes in Computer
  Science}, pages 504--518. Springer, 2019.
\newblock \href {https://doi.org/10.1007/978-3-030-10801-4\_39}
  {\path{doi:10.1007/978-3-030-10801-4\_39}}.

\bibitem{YANG2014344}
Yongjie Yang.
\newblock Towards optimal kernel for edge-disjoint triangle packing.
\newblock {\em Information Processing Letters}, 114(7):344 -- 348, 2014.
\newblock \href {https://doi.org/10.1016/j.ipl.2014.02.003}
  {\path{doi:10.1016/j.ipl.2014.02.003}}.

\end{thebibliography}

\clearpage
\appendix
\section{Results on existing kernels}
\label{sec:existing-kernels}
In this section we will show for various classical kernels that they are in fact $1$-approximate kernels, which will be necessary for our approach. While this statement seems to be true for many kernels we know, it is not immediate from the definitions.

\begin{lemma}
 \label{lem:fvs:approxkernel}
 \textsc{Feedback Vertex Set} parameterized by solution size has a $1$-approximate kernel with $4k^2$ vertices.
\end{lemma}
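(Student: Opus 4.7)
The plan is to verify that Thomass\'e's classical $4k^2$-vertex kernel for \textsc{Feedback Vertex Set} is in fact $1$-approximate, by augmenting it with a suitable solution-lifting algorithm and checking that each reduction rule preserves the optimum in a controlled way.

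First I would describe the reduction algorithm as the exhaustive application of the standard Thomass\'e rules: deleting vertices of degree at most $1$, suppressing vertices of degree $2$ by replacing them with an edge between their neighbours, reducing any edge of multiplicity three or more to a double edge, removing self-loops while placing the incident vertex $v$ into a ``forced'' set $T$ (and decrementing the budget), the flower rule which adds $v$ to $T$ when it lies on $k+1$ pairwise internally vertex-disjoint cycles, and finally the flow-based edge-deletion rule that shrinks the graph below $4k^2$ vertices. These rules split into two categories: those that transform $G$ into $G'$ without changing the FVS-optimum, and those that place exactly one forced vertex $v$ into $T$ with $\OPT(G) = \OPT(G - v) + 1$. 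A straightforward induction on the number of applied rules yields the invariants $\OPT(G) = \OPT(G') + |T|$ and $k' = k - |T|$ throughout, while Thomass\'e's analysis gives the bound $|V(G')| \leq 4k'^2 \leq 4k^2$.

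The solution-lifting algorithm is defined as follows: given solution $S'$ for $(G', k')$, if $S'$ is a feedback vertex set of $G'$, output $S := S' \cup T$; otherwise output $V(G)$. In either case the output is a feedback vertex set of $G$. To verify the $1$-approximate ratio, first consider the case where $S'$ is a valid FVS of $G'$ with $|S'| \leq k'$, so $|S| = |S'| + |T| \leq k$ and $\modP(G', k', S') = |S'|$, $\modP(G, k, S) = |S|$, while $\modOPTp(G, k) = \OPT(G') + |T|$ and $\modOPTp(G', k') = \OPT(G')$. Since $|S'| \geq \OPT(G')$, the elementary inequality $(a+t)/(b+t) \leq a/b$ for $a \geq b > 0$, $t \geq 0$ yields the required bound on the ratio (the degenerate case $\OPT(G') = 0$ is handled directly by the multiplicative form of the kernel inequality). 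In the remaining cases (either $S'$ fails to be an FVS of $G'$, or $|S'| > k'$), the right-hand ratio already equals $1$ while the left-hand ratio is at most $1$ by definition of $\modP$.

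The hard part will be verifying that Thomass\'e's flow-based edge-deletion rule -- the one actually responsible for reaching the $4k^2$ vertex bound -- preserves $\OPT$ exactly. This follows from Thomass\'e's argument that every deleted edge is certified by a Menger-type flow computation to be avoidable by some minimum FVS, so that deleting it neither decreases $\OPT$ nor breaks the lifting $S \mapsto S \cup T$; recalling these details carefully is the bulk of the work. Once granted, the two categories cover every applied rule, the chain $\OPT(G) = \OPT(G') + |T|$ is established, and the ratio argument above completes the proof.
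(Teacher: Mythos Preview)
Your approach matches the paper's: both verify that each of Thomass\'e's reduction rules is $1$-safe, with the lifting map $S' \mapsto S' \cup T$ for the set $T$ of forced vertices and the inequality $(a+t)/(b+t) \leq a/b$. The paper walks through the rules individually (numbered 0--6 there) while you batch them into ``$\OPT$-preserving'' and ``vertex-forcing'' categories, which is only organizational.

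One point does need correction. Your justification for the main structural rule --- ``every deleted edge is \ldots\ avoidable by some minimum FVS, so that deleting it neither decreases $\OPT$ nor breaks the lifting'' --- does not establish what you need. Knowing that \emph{some minimum} FVS is compatible with a deleted edge gives no control over an \emph{arbitrary} feedback vertex set $S'$ of $G'$: after pure edge deletion, an FVS of $G'$ need not be an FVS of $G$, and then $S' \cup T$ can fail to be a feedback vertex set of the original graph. Thomass\'e's actual rule is not pure edge deletion; it removes certain edges incident to a high-degree vertex $x$ and simultaneously \emph{adds} double edges from $x$ to a carefully chosen set, and the crux of his analysis (what the paper cites as Theorem~3.2 of Thomass\'e) is precisely that \emph{every} feedback vertex set of $G'$ remains a feedback vertex set of $G$. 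You should invoke that stronger statement, not the ``some minimum FVS'' formulation. The paper also separately handles two ``reject'' rules (triggered when one detects $\OPT(G) > k$) by having the lifting algorithm output $V(G)$; your two-category split silently omits these, so you should add that case as well.
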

\begin{proof}
 While it is well-known that \textsc{Feedback Vertex Set} has a small kernel parameterized by the solution size~\cite{10.1145/1721837.1721848}, we need to show that this is a $1$-approximate kernel to be able to use it. Let \P denote the \textsc{Feedback Vertex Set} problem. We show that all reduction rules used in \cite{10.1145/1721837.1721848} are $1$-safe, meaning  that for each reduction rule there is a lifting algorithm that  given $G,k$ and the result $G',k'$ of applying the reduction rule, and a solution $S$ to $G'$, outputs a solution $S$ to $G$ such that $\frac{\modP(G,k, S)}{\modOPTp(G,k)} \leq \frac{\modP(G',k', S')}{\modOPTp(G,k')}$ (for maximization problems we would require $\geq$).
  We check the reduction rules \cite[Page 32:5]{10.1145/1721837.1721848} one by one.
  \begin{description}
   \item[Rule 0] There is no solution in $G$ of size at most $k$, and there is no solution in $G'$ of size at most $k'$. Irrespective of the given solution, the solution lifting algorithm may output $S:= V(G)$, which has value $k+1$, which is optimal.
   \item[Rule 1] Let $S'$ be a solution in $G'$. Let $S:= S' \cup \{v\}$. Then $\modP(G,k,S) \leq \modP(G',k',S') + 1$. Furthermore, it is easily observed that $\modOPTp(G,k) = \modOPTp(G',k') + 1$, such that
   \[\frac{\modP(G,k, S)}{\modOPTp(G,k)} =  \frac{\modP(G',k', S')+1}{\modOPTp(G',k')+1} \leq \frac{\modP(G',k', S')}{\modOPTp(G,k')},\]
   where the last inequality follows from the fact that $\modP(G',k', S') \geq \modOPTp(G,k')$ by definition.
   \item[Rule 2] Observe that in this case, any feedback vertex set in $G$ is a feedback vertex set in $G'$, and vice versa.
   \item[Rule 3] Let $S'$ be a feedback vertex set in $G'$, then $S'$ is a feedback vertex set in $G$ of the same size. Furthermore it can be shown that $\modOPTp(G',k') = \modOPTp(G,k)$, refer to the correctness proof of the reduction rule for a proof.
   \item[Rule 4] In this case $G$ has no vertex cover of size at most $k$. Let the solution lifting algorithm output $V(G)$ (observe $\modP(G,k,V(G)) = k+1$). This is clearly a feedback vertex set, and $\frac{\modP(G,k,S)}{\modOPTp(G,k)} =1$, which is best-possible.
   \item[Rule 5] Let $S'$ be a solution in $G'$, let $S := S' \cup \{x\}$. Clearly $S$ is a feedback vertex set in $G$, as $G' = G - \{x\}$. We show that it has the right size. From the correctness of the reduction rule, we get $\OPTp(G) = \OPTp(G') + 1$. It follows $\modOPTp(G,k) = \modOPTp(G',k'=k-1)+1$. Furthermore, $\modP(G',k',S') +1= \modP(G,k,S)$, such that 
    \[\frac{\modP(G,k, S)}{\modOPTp(G,k)} = \frac{\modP(G',k', S')+1}{\modOPTp(G',k')+1} \leq \frac{\modP(G',k', S')}{\modOPTp(G,k')}.\]
    \item[Rule 6] It is shown in \cite[Theorem 3.2]{10.1145/1721837.1721848} that $\OPTp(G) = \OPTp(G')$, implying (by $k' = k$) that $\modOPTp(G',k') = \modOPTp(G,k)$. Furthermore, it is shown that any feedback vertex set in $G'$ is a feedback vertex set in $G$, concluding the proof.\qedhere
  \end{description}
\end{proof}

\begin{lemma}\label{lem:kernel:EDTP}
 \textsc{$k$-Edge-Disjoint Triangle Packing} parameterized by solution size has a $1$-approximate kernel with $4k$ vertices. 
\end{lemma}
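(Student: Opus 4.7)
The plan is to combine a greedy maximal packing with a small number of reduction rules, analogous to the classical kernel for the vertex-disjoint version. First I would compute a maximal edge-disjoint triangle packing $M$ in $G$ greedily in polynomial time (repeatedly pick any triangle whose three edges do not appear in previously chosen triangles, until none remains). The key structural fact used throughout is that by maximality of $M$, every triangle in $G$ must share at least one edge with some triangle in $M$, and hence contains at least two vertices of $V_M := V(M)$. In particular, every edge-disjoint triangle packing of $G$ has size at most $|E(M)| = 3|M|$, since each edge of $M$ can serve as the shared edge of at most one packing triangle.

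If $|M| \geq k$, I would output the trivial YES-instance $G'$ consisting of $k$ vertex-disjoint triangles on $3k$ vertices with parameter $k' := k$. The solution-lifting algorithm, given any solution $S'$ in $G'$, returns $\min\{|S'|,k\}$ triangles taken from $M$. Both $\modOPTetp(G,k)$ and $\modOPTetp(G',k')$ equal $k+1$, and $\modetp(G,k,S) \geq \modetp(G',k',S')$ by construction, so the $1$-approximate ratio condition holds.

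Otherwise $|M| < k$, so $|V_M| \leq 3(k-1) < 3k$. I would then bound the set $U := V(G)\setminus V_M$ of external vertices by two reductions. First, delete every vertex not contained in any triangle of $G$; after this rule, each remaining $u \in U$ is a common neighbour of both endpoints of some edge in $E(G[V_M])$. Second, form the bipartite compatibility graph $B$ with parts $U$ and $E(G[V_M])$, where $u$ is adjacent to $e=\{x,y\}$ iff $\{u,x,y\}$ is a triangle of $G$, and apply a crown/$k$-expansion style reduction: keep a set $U^\star \subseteq U$ of at most $k$ representatives that realises a maximum matching restricted to the chosen edges, together with the vertices in the associated minimum cover, and delete the remaining external vertices. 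This leaves at most $3(k-1)+k < 4k$ vertices.

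The main obstacle is showing that the second reduction is $1$-safe, i.e.\ preserves $\modOPTetp(G,k)$ and admits a solution-lifting algorithm. The intended argument is that in any edge-disjoint packing of value at most $k$, each triangle uses at most one external vertex, so at most $k$ external vertices are used in total; a matching/alternating-path exchange on $B$ then reroutes any optimal packing through $U^\star$ without changing its size (triangles entirely inside $V_M$ are unaffected, and triangles with an external vertex $u \notin U^\star$ can be swapped for a triangle with a representative in $U^\star$ using the same base edge, thanks to the crown structure). The solution-lifting algorithm simply returns the packing $S'$ output on the kernel unchanged, viewed as a triangle packing of $G$; its value in $G$ equals its value in $G'$, which by the above yields $\modetp(G,k,S) = \modetp(G',k',S')$ and $\modOPTetp(G,k) = \modOPTetp(G',k')$, giving ratio $1$ as required.
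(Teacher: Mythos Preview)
Your route differs from the paper's: rather than building a kernel from scratch, the paper simply takes the existing $4k$-vertex kernel of Mathieson, Prieto and Shaw and verifies, reduction rule by reduction rule (applied with parameter $\hat k = k+1$), that each rule is $1$-safe. In particular their Rule~4 is a crown reduction that \emph{commits} to $|H|$ triangles and decreases the parameter by $|H|$; the solution-lifting algorithm just reinserts those committed triangles. Your plan instead keeps representatives and argues equivalence, which is the ``dual'' way of using the same crown structure.

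There is a small but real error in your easy branch: if $|M|=k$ but $\OPTetp(G)\geq k+1$, then $\modOPTetp(G,k)=k+1$ while your $G'$ with exactly $k$ triangles has $\modOPTetp(G',k')=k$, so the ratio inequality fails for an optimal $S'$. You should either trigger this branch only when $|M|>k$, or output $k+1$ vertex-disjoint triangles.

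The substantive gap is in the crown step. Your swap argument (``replace $u\notin U^\star$ by a representative on the same base edge'') does not by itself guarantee edge-disjointness: if the representative $u'$ already appears in the packing on another base edge sharing an endpoint with the current one, the two incident edges $\{u',x\}$ collide. To make this work you need more than ``a maximum matching and a minimum cover'': take a minimum vertex cover $C=C_U\cup C_E$ of your bipartite graph $B$ and use the K\"onig structure to pick, for each $e\in C_E$, a \emph{distinct} representative $r_e\in U\setminus C_U$ matched to $e$; then reroute \emph{every} triangle whose external vertex lies outside $C_U$ to $\{r_e,e\}$. Now distinctness of the $r_e$ and $r_e\notin C_U$ are exactly what kill all three possible edge collisions (rerouted vs.\ rerouted, rerouted vs.\ $C_U$-triangle, rerouted vs.\ $V_M$-internal triangle). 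You also need to first check whether $B$ has a matching of size $>k$ (which itself yields $>k$ edge-disjoint triangles and sends you to the easy branch), since otherwise $|C|\leq k$ is not guaranteed. None of these steps appear in your sketch, and ``thanks to the crown structure'' does not substitute for them.
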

\begin{proof}
 It is known that the problem has a kernel with $4k$ vertices~\cite{10.1007/978-3-540-28639-4_12}. We note that smaller kernels are known (cf. \cite{YANG2014344,LIN201920}), but we will use the size-$4k$ kernel as it will be easier to verify that this kernel is $1$-approximate 

We again show that the used reduction rules are $1$-safe when applying the reduction rules with parameter $\hat{k} := k + 1$ (note that we study the $1$-approximate kernel with respect to $k$). We verify all reduction rules given in \cite{10.1007/978-3-540-28639-4_12} below.
 \begin{description}
  \item[Rule 1] Let $(G',\hat{k}-1)$ be the instance obtained by applying reduction rule 1 on $(G,\hat{k})$, for vertices $u,v,w$. Given any solution $S'$ for $G'$, we may obtain solution $S := S' \cup \{\{u,v,w
 \}\}$. Clearly, this is a valid solution and it has value $|S'| + 1$. This will be our solution lifting algorithm. It is easy to observe the $\OPTetp(G) = \OPTetp(G') + 1$ and thus
  \[\frac{\modetp(G,k, S)}{\modOPTetp(G,k)} =  \frac{\modetp(G',k-1,S') + 1}{\modOPTetp(G',k-1) + 1} \geq \frac{\modetp(G',k-1,S')}{\modOPTetp(G',k-1)}.\]
  \item[Rules 2 and 3] It is easy to see that for both these reduction rules, $S$ is a solution to edge-disjoint triangle packing in $G$ if and only if it is a solution in $G'$. As such, the solution lifting algorithm simply outputs the given solution and the approximation factor is maintained.
  \item[Rule 4] Let the solution lifting algorithm output $S = S' \cup \{\{h_1,h_2,f(h)\}\mid h = \{h_1,h_2\} \in H\}$. Observe that $S$  is indeed an edge-disjoint triangle packing, as $\{h_1,h_2,f(\{h_1,h_2\})\}$ is a triangle in $G$ by definition, the triangles in $\{\{h_1,h_2,f(h)\}\mid h = \{h_1,h_2\} \in H\}$ are edge-disjoint by the fact that $f$ is injective and furthermore disjoint from $S'$ as $S'$ does not use vertices from $C\cup V(H)$. So, $S$ is an edge-disjoint triangle packing in $G$ of size at least $|S'| + |H|$. This immediately shows that $\modOPTetp(G',k - |H|) + |H| \leq \modOPTetp(G,k)$. To show that $\modOPTetp(G',k - |H|) + |H| = \modOPTetp(G,k)$, suppose we are given a solution $S$ in $G$. Obtain $S'$ by removing all triangles from $S$ that contain at least one vertex from $V(H) \cup C$. Clearly, this results in an edge-disjoint triangle packing in $G'$. It remains to show that $|S'| \geq |S|-|H|$. Observe however that there are at most $|H|$ triangles containing (one or more) vertices from $V(H)\cup C$, since any triangle containing a vertex in $C$ must use at least one edge in $H$ since $C$ is an independent set and there are no edges between $C$ and $X$. To conclude,
   \[\frac{\modetp(G,k, S)}{\modOPTetp(G,k)} =  \frac{\modetp(G',k-|H|,S') + |H|}{\modOPTetp(G',k-|H|) + |H|} \geq \frac{\modetp(G',k-|H|,S')}{\modOPTetp(G',k-|H|)}.\]
  
  \item[Kernelization Lemma] The lemma states: ``If $G$ is reduced under reduction rules $1$ to $4$, and $V(G) > 4\hat{k}$, then $G$ is a yes-instance for \textsc{$\hat{k}$-Edge-Disjoint Triangle Packing}.'' \cite{10.1007/978-3-540-28639-4_12}. In other words, if $G$ is reduced under the reduction rules above and $V(G) > 4\hat{k}$ then the ($1$-approximate) kernelization algorithm can output a trivial yes-instance. The solution lifting algorithm will then output a size-$\hat{k}$ triangle packing in $G$ (regardless of the solution given for the kernelized instance), which has value $\hat{k}  = k+1 \geq \modOPTetp(G,k)$. It remains to show how to find such a triangle packing of size $\hat{k}$. Start from any packing $S$. As long as there is a triangle that can be added to $S$ as none of its edges are covered by $S$, do so. Furthermore, as long as there is a triangle $\{u,v,w\} \in P$ and $x,y$ not contained in any triangle such that $\{x,u,v\}$ and $\{y,v,w\}$ are triangles, let $S \gets S \cup \{\{x,u,v\},\{y,v,w\}\}\setminus \{\{u,v,w\}\}$. Continue until neither of these two rules apply, note that both steps increase $|S|$, such that this procedure will halt after at most $|E(G)|/3$ steps. 
  
  We show $|S| \geq \hat{k}$,  we reuse some of the proof strategy given by~\cite{10.1007/978-3-540-28639-4_12}. Assume for contradiction that $|S| \leq \hat{k}$. Let $O$ be the set of vertices not contained in any triangle in $S$. Say a vertex $u$ \emph{spans} edge $\{v,w\}$ in $G$ if $\{u,v\} \in E(G)$ and $\{u,w\} \in E(G)$. Slightly changing the notation from \cite{10.1007/978-3-540-28639-4_12}, define $S_0$ to be the triangles in $S$ with no vertices in $O$ spanning any of their edges, $S_1$ to be the triangles where exactly one edge is spanned by a vertex in $O$, and $S_A$ the triangles for which all edges are spanned. Observe that $S = S_0 \cup S_1\cup S_A$. Observe that Claims 1-6 \cite{10.1007/978-3-540-28639-4_12} still hold for our (different) choice of $S$, in particular $O$ is an independent set and every vertex in $O$ spans at least one edge in a triangle in $S$. Let $O_A$ be the subset of vertices in $O$ who span one or more edges of a triangle in $S_A$, let $O_1$ be the set of vertices who span one or more edges in $S_1$. Then $|O| = |O_1 \cup O_A| \leq |O_1| + |O_A|$. Using the same strategy as in  \cite{10.1007/978-3-540-28639-4_12}, it follows $|O_1| \leq |S_1|$ and $|O_A| \leq |S_A|$. Thus, $|O| \leq |S|$. Therefore, $|V(G)| \leq |O| + 3\cdot|S| \leq 4 \cdot |S| \leq 4\hat{k}$, contradicting that this reduction rule could be applied. \qedhere
 \end{description}
\end{proof}

\begin{lemma}
 \label{lem:packing:approxkernel}
 \textsc{Vertex-Disjoint $H$-Packing} parameterized by solution size has a $1$-approximate kernel with $\Oh(k^{|V(H)|-1})$ vertices.
\end{lemma}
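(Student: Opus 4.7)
The plan is to adapt a standard sunflower-lemma-based kernelization for packing problems to the $1$-approximate setting. Let $h := |V(H)|$, a constant.

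First, greedily compute a maximal vertex-disjoint $H$-packing $P$ of $G$ (done in polynomial time since $h$ is fixed). If $|P| \geq k + 1$, the reduction algorithm outputs a canonical YES-instance $G^{\star}$ consisting of $k+1$ vertex-disjoint copies of $H$, and the solution-lifting algorithm returns $P$ regardless of its input; both $\modP$ and $\modOPTp$ saturate at $k+1$ in this case, so the resulting ratio is $1$. Otherwise $|V(P)| \leq kh = \Oh(k)$, and since $H$ is connected and $P$ is maximal, every copy of $H$ in $G$ must meet $V(P)$ in at least one vertex.

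Next, enumerate the nonempty signatures $T \subseteq V(P)$ with $|T| \leq h$, and for each $T$ consider the family $\mathcal{F}_T$ of sets $Y \subseteq V(G) \setminus V(P)$ of size $h - |T|$ such that $G[T \cup Y]$ contains a copy of $H$. Apply the sunflower lemma to $\mathcal{F}_T$: whenever $|\mathcal{F}_T|$ exceeds the threshold $(h-|T|)! \cdot k^{h-|T|}$, extract a sunflower of size $k+1$ and discard the remaining extensions. Let $M$ be the union of vertices in the retained extensions across all signatures, and output $G' := G[V(P) \cup M]$ with $k' := k$. The solution-lifting algorithm returns the given packing $S'$ unchanged, which is still a packing in $G$ since $G' \subseteq G$. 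A routine count gives $|V(G')| = \Oh(k^{h-1})$: the dominant contribution is from the $\Oh(k^{h-1})$ signatures of size $h-1$, each of which contributes $\Oh(1)$ vertices after sunflower reduction; smaller signatures contribute lower-order terms.

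The main obstacle is establishing $\modOPTp(G,k) = \modOPTp(G',k)$, which is what the trivial lifting requires for $1$-approximateness. The inequality $\modOPTp(G',k) \leq \modOPTp(G,k)$ is immediate from $G' \subseteq G$. For the reverse, take a packing $S^*$ in $G$ attaining $\modOPTp(G,k) \leq k+1$; for each copy $C \in S^*$ whose completion is not fully contained in $M$, let $T = V(C) \cap V(P)$ and swap $C$ for a copy sharing the signature $T$ but using a marked extension. Ensuring that a consistent set of pairwise vertex-disjoint replacements exists requires a global argument — for example, formulating the choice of replacements as a bipartite (or hypergraph) matching and invoking Hall's theorem, exploiting the fact that we retained sufficiently many pairwise-disjoint extensions per signature to dominate the conflicts that the other copies in $S^*$ can cause. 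This combinatorial bookkeeping, rather than the kernel's construction, is the crux of the proof.
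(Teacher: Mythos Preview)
Your construction diverges from the paper's proof, which does not build a kernel from scratch; it simply invokes Moser's existing kernel~\cite{DBLP:conf/sofsem/Moser09} and checks, rule by rule, that each of its reductions is $1$-safe (i.e., admits a solution-lifting step preserving the ratio $\modP/\modOPTp$). The $\Oh(k^{h-1})$ vertex bound is inherited directly from Moser.

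Your sunflower route has two genuine gaps. First, the size count is off: for a signature $T$ with $|T|=h-1$ the extensions are singletons, your threshold is $1!\cdot k=k$, and you retain $k+1$ of them---so each such signature contributes $\Theta(k)$ vertices, not $\Oh(1)$. Summing over the $\Theta(k^{h-1})$ signatures of that size already gives $\Theta(k^{h})$; a generic sunflower/marking argument does not reach $\Oh(k^{h-1})$, which in Moser's kernel comes from more delicate reductions. Second, the replacement step fails as sketched. You rely on having ``sufficiently many pairwise-disjoint extensions per signature'', but the sunflower lemma does not deliver that: for $|T|<h-1$ the extracted sunflower may have a nonempty core $C\subseteq V(G)\setminus V(P)$, and then all $k+1$ retained extensions share $C$. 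If some other copy in $S^{*}$ touches $C$, every element of $M_{T}$ is blocked simultaneously, and no Hall-type matching can recover. The standard remedy (delete one petal and iterate, so that the core lies inside the copy currently being replaced) does make the swap sound, but it too yields only an $\Oh(k^{h})$ kernel, not the bound stated in the lemma.
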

\begin{proof}
 We show this by showing that the kernel given in \cite{DBLP:conf/sofsem/Moser09} is in fact a $1$-approximate kernel when applied with parameter $\hat{k} := k + 1$. The kernel consists of three reduction rules. We again verify that they are $1$-safe.
  \begin{description}
   \item[Rule 4] Trivially, any $H$-packing in $G'$ is also an $H$-packing in $G$, and vice versa.
   \item[Rule 5] Clearly, any $H$-packing in $G'$ is an $H$-packing in $G$. Furthermore, it is shown in \cite[Lemma 6]{DBLP:conf/sofsem/Moser09} that, given a packing $S$ in $G$, there is a packing in $G'$ with size at least $|S|$. As such, $\OPTp(G) = \OPTp(G')$ and the solution lifting algorithm may simply return $S'$.
   \item[Rule 6] Any solution to $G'$ is a solution for $G$. It remains to show that $\modOPTp(G',k') = \modOPTp(G,k)$, to conclude this case. In the proof of correctness of this reduction rule it is indeed shown that, given a packing in $G'$, we can always replace any copies of $H$ that contain a removed vertex, by a copy that does not, obtaining a packing of the same size in $G$. 
   \item[{\cite[Lem 10]{DBLP:conf/sofsem/Moser09}}] The kernel uses an additional lemma showing that if certain size-lower bounds are met, then $G$ has an $H$-packing of size $\hat{k}$. If this case is encountered, the solution lifting algorithm may simply output such a packing, which will have value $\hat{k} = k +1$, which is optimal. \qedhere
  \end{description}
\end{proof}

\end{document}